\newcommand{\ii}{{\rm i}}
\newcommand{\nv}{{\boldsymbol{\nu}}}
\DeclareFontFamily{U}{tipa}{}
\DeclareFontShape{U}{tipa}{m}{n}{<->tipa10}{}
\newcommand{\arc@char}{{\usefont{U}{tipa}{m}{n}\symbol{62}}}%
\newcommand{\arc}[1]{\mathpalette\arc@arc{#1}}
\newcommand{\arc@arc}[2]{%
\sbox0{$\m@th#1#2$}%
\vbox{
\hbox{\resizebox{\wd0}{\height}{\arc@char}}
\nointerlineskip
\box0
}%
}
\newtheorem{theorem}{Theorem}[section]
\newtheorem{lemma}{Lemma}[section]
\newtheorem{corollary}{Corollary}[section]
\newtheorem{remark}{Remark}[section]
\newtheorem{definition}{Definition}[section]
\newtheorem{question}{Question}[section]
\newtheorem{problem}{Problem}[section]
\newtheorem{conjecture}{Conjecture}[section]
\numberwithin{equation}{section}
\begin{document}

\title[Optimal Frames For Phase Retrieval via Optimal Polygons]{Optimal Frames for Phase Retrieval from Edge Vectors of Optimal Polygons}

\author{Zhiqiang Xu}
\address{State Key Laboratory of Mathematical Sciences, Academy of Mathematics and Systems Science, Chinese Academy of Sciences, Beijing, 100190, China; School of Mathematical Sciences, University of Chinese Academy of Sciences, Beijing, 100091, China}
\email{xuzq@lsec.cc.ac.cn}
\thanks{The first author was supported by the National Science Fund for Distinguished Young Scholars (12025108) and NSFC grant (12471361, 12021001, 12288201).}

\author{Zili Xu}
\address{School of Mathematical Sciences, Key Laboratory of MEA(Ministry of Education) \& Shanghai Key Laboratory of PMMP, East China Normal University, Shanghai, 200241, China}
\email{zlxu@math.ecnu.edu.cn}
\thanks{The second author was supported by NSFC grant (12501121, 12571105) and the Natural Science Foundation of Shanghai grant (25ZR1402131).}

\author{Xinyue Zhang}
\address{State Key Laboratory of Mathematical Sciences, Academy of Mathematics and Systems Science, Chinese Academy of Sciences, Beijing, 100190, China; School of Mathematical Sciences, University of Chinese Academy of Sciences, Beijing, 100091, China}
\email{zhangxinyue@amss.ac.cn}
\thanks{The third author is the corresponding author.}

\subjclass[2020]{94A15, 46C05, 52B60}



\keywords{Phase retrieval, stability, polygons, frames.}

\begin{abstract}

This paper aims to characterize the optimal frame for phase retrieval, defined as the frame whose condition number for phase retrieval attains its minimal value. In the context of the two-dimensional real case, we reveal the connection between optimal frames for phase retrieval and the perimeter-maximizing isodiametric problem, originally proposed by Reinhardt in 1922. Our work establishes that every optimal solution to the perimeter-maximizing isodiametric problem inherently leads to an optimal frame in ${\mathbb R}^2$. 
By recasting the optimal polygons problem as one concerning the discrepancy of roots of unity, we characterize all optimal polygons. Building upon this connection, we then characterize all optimal frames with $m$ vectors in ${\mathbb R}^2$ for phase retrieval when $m \geq 3$ has an odd factor.
As a key corollary, we show that the harmonic frame $E_m \subset {\mathbb R}^2$ is {\em not} optimal for any even integer $m \geq 4$. This finding disproves a conjecture proposed by Xia, Xu, and Xu [{\em Math. Comp.}, 94 (2025), pp.~2931--2960]. Previous work has established that $E_m$ is indeed optimal when $m$ is an odd integer. 

\end{abstract}

\maketitle

\section{Introduction}
\setcounter{equation}{0}

\subsection{Problem setup}
Let $\boldsymbol{A} = [\boldsymbol{a}_1, \ldots, \boldsymbol{a}_m]^T\in \mathbb{H}^{m \times d}$ be a given measurement matrix, where each $\boldsymbol{a}_j \in \mathbb{H}^d$ and $\mathbb{H} \in \{\mathbb{R}, \mathbb{C}\}$. 
The phase retrieval problem seeks to recover an unknown signal $\boldsymbol{x} \in \mathbb{H}^d$ from its phaseless measurements $\left|\left\langle \boldsymbol{a}_j, \boldsymbol{x} \right\rangle\right|$, $j = 1, \ldots, m$. 
Define the nonlinear measurement map $\Phi_{\boldsymbol{A}}: \mathbb{H}^d \rightarrow \mathbb{R}_{+}^m$ by
\begin{equation}\label{PhiA}
	\Phi_{\boldsymbol{A}}(\boldsymbol{x})=|\boldsymbol{A} \boldsymbol{x}|:=\left(\left|\left\langle\boldsymbol{a}_1, \boldsymbol{x}\right\rangle\right|,\left|\left\langle\boldsymbol{a}_2, \boldsymbol{x}\right\rangle\right|, \ldots,\left|\left\langle\boldsymbol{a}_m, \boldsymbol{x}\right\rangle\right|\right)^T 
	\in \mathbb{R}_{+}^m.
\end{equation}
A matrix $\boldsymbol{A}$ is said to have the phase retrieval property if $|\boldsymbol{A} \boldsymbol{x}|=|\boldsymbol{A} \boldsymbol{y}|$ implies $\boldsymbol{x}=c \cdot \boldsymbol{y}$ for some $c \in \mathbb{H}$ with $|c|=1$. 
It is known that $m \geq 2 d-1$  (when $\mathbb{H} = \mathbb{R}$) or $m \geq 4 d-4$ (when $\mathbb{H} = \mathbb{C}$) generic measurements suffice for phase retrieval property \cite{balan2006signal, bandeira2014saving, conca2015algebraic, wang2019generalized}.

A key metric for evaluating the robustness of the phase retrieval is the condition number. 
For any $\boldsymbol{x},\boldsymbol{y}\in \mathbb{H}^d$, we define the distance between $\boldsymbol{x}$ and $\boldsymbol{y}$ as  
\begin{equation*}
	\operatorname{dist}_{\mathbb{H}}(\boldsymbol{x}, \boldsymbol{y}):=\min \left\{\|\boldsymbol{x}-c \cdot \boldsymbol{y}\|_2: c \in \mathbb{H},|c|=1\right\},
\end{equation*}
where $\|\cdot \|_2$ represents the Euclidean norm.
The condition number $\beta_{\boldsymbol{A}}^{\mathbb{H}}$ of a given measurement matrix ${\boldsymbol{A}}\in \mathbb{H}^{m\times d}$
is defined as
\begin{equation*}
	\beta_{\boldsymbol{A}}^{\mathbb{H}}:=\frac{U_{\boldsymbol{A}}^{\mathbb{H}}}{L_{\boldsymbol{A}}^{\mathbb{H}}},
\end{equation*}
where 
$L_{\boldsymbol{A}}^{\mathbb{H}}$ and $U_{\boldsymbol{A}}^{\mathbb{H}}$ represent the optimal lower and upper Lipschitz constants of the map $\Phi_{\boldsymbol{A}}$, i.e.,
\begin{equation*}
	L_{\boldsymbol{A}}^{\mathbb{H}}
	:=\inf _{\substack{\boldsymbol{x}, \boldsymbol{y} \in \mathbb{H}^d \\ \operatorname{dist}_{\mathbb{H}}(\boldsymbol{x}, \boldsymbol{y}) \neq 0}} \frac{\||\boldsymbol{A} \boldsymbol{x}|-|\boldsymbol{A} \boldsymbol{y}|\|_2}{\operatorname{dist}_{\mathbb{H}}(\boldsymbol{x}, \boldsymbol{y})} 
	\quad \text { and } \quad 
	U_{\boldsymbol{A}}^{\mathbb{H}}
	:=\sup _{\substack{\boldsymbol{x}, \boldsymbol{y} \in \mathbb{H}^d \\ \operatorname{dist}_{\mathbb{H}}(\boldsymbol{x}, \boldsymbol{y}) \neq 0}} \frac{\||\boldsymbol{A} \boldsymbol{x}|-|\boldsymbol{A} \boldsymbol{y}|\|_2}{\operatorname{dist}_{\mathbb{H}}(\boldsymbol{x}, \boldsymbol{y})}.
\end{equation*}
A smaller value of $\beta_{\boldsymbol{A}}^{\mathbb{H}}$ entails that $\Phi_{\boldsymbol{A}}$ behaves more like a near-isometry, indicating greater stability against measurement noise.
If $\boldsymbol{A}$ lacks the phase retrieval property, then $L_{\boldsymbol{A}}^{\mathbb{H}}=0$ and $\beta_{\boldsymbol{A}}^{\mathbb{H}}=+\infty$. 
We say that a matrix $\boldsymbol{A}\in\mathbb{H}^{m\times d}$ has {\em the minimal condition number} if $\beta_{\boldsymbol{A}}^{\mathbb{H}}=\min _{\boldsymbol{M} \in \mathbb{H}^{m \times d}} \beta_{\boldsymbol{M}}^{\mathbb{H}}$.
To simplify the notation, we often omit the superscript in $\beta_{\boldsymbol{A}}^{\mathbb{H}}$, determining whether $\beta_{\boldsymbol{A}}^{\mathbb{H}}$ is defined over the real or complex field based on whether the matrix $\boldsymbol{A}$ is real or complex. 
In the same way, we drop the superscripts in $L_{\boldsymbol{A}}^{\mathbb{H}}, U_{\boldsymbol{A}}^{\mathbb{H}}$, and we omit the subscript in $\operatorname{dist}_{\mathbb{H}}(\boldsymbol{x}, \boldsymbol{y})$ when the underlying field is clear from context.

Recently, the authors in \cite{xia2024stability} derive the first universal constant lower bound on the condition number $\beta_{\boldsymbol{A}}^{\mathbb{H}}$ for all $\boldsymbol{A} \in \mathbb{H}^{m \times d}$, namely,
\begin{equation}
	\label{beta0}
	\beta_{\boldsymbol{A}}^{\mathbb{H}} \geq \beta_0^{\mathbb{H}}:= \begin{cases}\sqrt{\frac{\pi}{\pi-2}} \approx 1.659 & \text { if } \mathbb{H}=\mathbb{R}, \\ \sqrt{\frac{4}{4-\pi}} \approx 2.159 & \text { if } \mathbb{H}=\mathbb{C}.\end{cases}
\end{equation}
The condition number of a standard Gaussian matrix in $\mathbb{H}^{m \times d}$ asymptotically matches the lower bound $\beta_0^{\mathbb{H}}$ as $m \rightarrow \infty$ \cite{xia2024stability}. 
Furthermore, in the real case $\mathbb{H}=\mathbb{R}$, they show that for all $\boldsymbol{A} \in \mathbb{R}^{m \times d}$, 
\begin{equation}
	\label{beta_HF}
	\beta_{\boldsymbol{A}} \geq \frac{1}{\sqrt{1-\frac{1}{m \cdot \sin \frac{\pi}{2 m}}}}.
\end{equation}
In particular, they provide that 
\begin{equation}\label{betaA-R2}
	\beta_{\boldsymbol{E}_m}=\begin{cases}
		\frac{1}{\sqrt{1-\frac{2}{m\cdot \sin\frac{\pi}{m}}}}  & \text{if $m\geq 3$ is even,}\\	
		\frac{1}{\sqrt{1-\frac{1}{m\cdot \sin\frac{\pi}{2m}}}} & \text{if $m\geq 3$ is odd,}
	\end{cases}	
\end{equation}
where the rows of the matrix $\boldsymbol{E}_m$ form the harmonic frame in ${\mathbb R}^2$, i.e.,  
\begin{equation}
	\label{HF}
	\boldsymbol{E}_m:=\left(\begin{array}{cccc}
		1 & \cos  \frac{1}{m} \pi  & \cdots & \cos  \frac{m-1}{m} \pi  \\
		0 & \sin  \frac{1}{m} \pi  & \cdots & \sin  \frac{m-1}{m} \pi 
	\end{array}\right)^{T} \in \mathbb{R}^{m \times 2}.
\end{equation}
Combining \eqref{beta_HF} and \eqref{betaA-R2}, they further obtain that $\beta_{\boldsymbol{E}_m}=\min _{\boldsymbol{A} \in \mathbb{R}^{m \times 2}} \beta_{\boldsymbol{A}}$ for each {\em odd} integer $m \geq 3$.
Consequently, the following conjecture is naturally proposed in \cite{xia2024stability} for each even integer $m\geq 4$.
\begin{conjecture}{\rm \cite{xia2024stability}}
	\label{meven}
	If $m \geq 4$ is even, then 
	\[
	\beta_{\boldsymbol{E}_m}=\min _{\boldsymbol{A} \in \mathbb{R}^{m \times 2}} \beta_{\boldsymbol{A}}.
	\]	
\end{conjecture}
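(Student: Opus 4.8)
The abstract already tells us that Conjecture~\ref{meven} is \emph{false}: the harmonic frame $\boldsymbol{E}_m$ fails to be optimal for every even $m\geq 4$. Accordingly, rather than attempting a proof, the plan is to \emph{disprove} the statement by exhibiting, for each even $m$, a competitor frame with strictly smaller condition number. The guiding principle, suggested by the title and by the shape of the bounds \eqref{beta_HF}--\eqref{betaA-R2}, is that the two-dimensional phase-retrieval condition number is governed by a single geometric quantity attached to a polygon.

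First I would reduce the optimization $\min_{\boldsymbol{A}\in\mathbb{R}^{m\times 2}}\beta_{\boldsymbol{A}}^{\mathbb{R}}$ to a discrete-geometry extremal problem. Writing signals in polar form $\boldsymbol{x}=r(\cos\phi,\sin\phi)$ turns each measurement into $\|\boldsymbol{a}_j\|\,r\,|\cos(\theta_j-\phi)|$, so that $\Phi_{\boldsymbol{A}}$ and the sign-ambiguous distance $\operatorname{dist}_{\mathbb{R}}$ depend only on the directions $\theta_j$ and the weights $\|\boldsymbol{a}_j\|$. The aim is to show that $\beta_{\boldsymbol{A}}^{\mathbb{R}}=\bigl(1-2/\rho(\boldsymbol{A})\bigr)^{-1/2}$ for a functional $\rho(\boldsymbol{A})$ equal to the perimeter-to-diameter ratio of the polygon whose (suitably arranged) edge vectors are the $\boldsymbol{a}_j$. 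Granting this, minimizing $\beta_{\boldsymbol{A}}^{\mathbb{R}}$ becomes \emph{equivalent} to maximizing the perimeter of an $m$-gon of fixed diameter---exactly Reinhardt's $1922$ perimeter-maximizing isodiametric problem. A direct check then identifies $\boldsymbol{E}_m$ with the regular $m$-gon: its perimeter-to-diameter ratio is $2m\sin\frac{\pi}{2m}$ when $m$ is odd and $m\sin\frac{\pi}{m}$ when $m$ is even, and substituting these into $\rho$ reproduces \eqref{betaA-R2} precisely.

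With this dictionary in hand the disproof is immediate from classical geometry. Reinhardt showed that the regular $m$-gon maximizes perimeter at fixed diameter precisely when $m$ is odd; for even $m$ it is strictly beaten. For even $m$ possessing an odd factor one invokes the Reinhardt polygons, which attain the larger ratio $2m\sin\frac{\pi}{2m}>m\sin\frac{\pi}{m}$ and hence would realize the universal lower bound \eqref{beta_HF}; for $m$ a power of two an explicit near-degenerate competitor suffices---for instance, when $m=4$, a thin isosceles trapezoid of unit diameter has perimeter $\approx 3>2\sqrt{2}$, strictly exceeding the square. In every case the edge-vector frame of the improved polygon satisfies $\rho(\boldsymbol{A})>\rho(\boldsymbol{E}_m)$, whence $\beta_{\boldsymbol{A}}^{\mathbb{R}}<\beta_{\boldsymbol{E}_m}^{\mathbb{R}}$, so $\boldsymbol{E}_m$ is not a minimizer and Conjecture~\ref{meven} fails.

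The main obstacle is the reduction of the second paragraph: establishing that the \emph{global, sign-identified} Lipschitz constants $L_{\boldsymbol{A}}^{\mathbb{R}}$ and $U_{\boldsymbol{A}}^{\mathbb{R}}$ are captured exactly by the diameter and perimeter of an associated polygon. This forces one to locate the extremal pairs $(\boldsymbol{x},\boldsymbol{y})$---in particular to show that the infimum defining $L_{\boldsymbol{A}}^{\mathbb{R}}$ is controlled by the sign-flip configurations $\phi\approx\theta_j+\tfrac{\pi}{2}$ whose contributions assemble into the polygon's diameter, while the supremum defining $U_{\boldsymbol{A}}^{\mathbb{R}}$ assembles into its perimeter---and to handle correctly the closure constraint $\sum_j\varepsilon_j\boldsymbol{a}_j=0$ (for a suitable choice of signs $\varepsilon_j\in\{\pm1\}$) that promotes a frame to a genuine polygon. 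Once this correspondence is proved, the remaining geometry is entirely classical, and the full characterization of optimizers when $m$ has an odd factor reduces to describing the Reinhardt polygons, e.g.\ via the discrepancy of roots of unity as the abstract indicates.
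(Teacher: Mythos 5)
Your overall plan is exactly the paper's: read the statement as something to be \emph{disproved}, reduce $\min_{\boldsymbol{A}}\beta_{\boldsymbol{A}}^{\mathbb{R}}$ to Reinhardt's perimeter-maximizing isodiametric problem through a formula $\beta_{\boldsymbol{A}}=(1-2r(P))^{-1/2}$, identify $\boldsymbol{E}_m$ with the regular $m$-gon, and invoke the classical fact that the regular $m$-gon is not a maximizer for even $m$ (Theorem \ref{knownresults} (iii)); your numerics, including the Reinhardt ratio $2m\sin\frac{\pi}{2m}>m\sin\frac{\pi}{m}$ and the existence of a quadrilateral beating the square, are correct. However, the dictionary you propose between frames and polygons is wrong, and it fails precisely in the even case the disproof needs. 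You take the polygon's edge vectors to be the measurement vectors $\boldsymbol{a}_j$ themselves up to signs, with closure constraint $\sum_j\varepsilon_j\boldsymbol{a}_j=\boldsymbol{0}$. For $\boldsymbol{E}_4$ this is already impossible: the first coordinates of its rows are $1,\tfrac{\sqrt{2}}{2},0,-\tfrac{\sqrt{2}}{2}$, so the first coordinate of any signed sum is $\varepsilon_1+\tfrac{\sqrt{2}}{2}(\varepsilon_2-\varepsilon_4)\in\{\pm1\}\cup\{\pm1\pm\sqrt{2}\}$, never $0$. Hence under your dictionary $\boldsymbol{E}_4$ corresponds to \emph{no} quadrilateral at all, and your ``direct check'' identifying $\boldsymbol{E}_m$ with the regular $m$-gon cannot be performed for any even $m$: the regular $m$-gon's edge vectors then occupy only $m/2$ distinct directions modulo $\pi$ (each twice), so they are never signings of the $m$ distinct directions of $\boldsymbol{E}_m$.

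The idea you are missing is the paper's angle-doubling map \eqref{def-F-func}, $f\bigl(t(\cos\phi,\sin\phi)^T\bigr)=t^2(\cos2\phi,\sin2\phi)^T$: the polygon attached to a frame has edge vectors $\boldsymbol{e}_j=f(\boldsymbol{a}_j)$, not $\pm\boldsymbol{a}_j$. Because the phaseless measurement $|\langle\boldsymbol{a}_j,\boldsymbol{x}\rangle|^2$ depends only on the doubled angle $2\phi_j$, the linear closure condition $\sum_j\boldsymbol{e}_j=\boldsymbol{0}$ becomes equivalent to the \emph{quadratic} tight-frame condition $\sum_j\boldsymbol{a}_j\boldsymbol{a}_j^T=C\boldsymbol{I}_2$ (Lemma \ref{tight_gon} (ii)), which no signed-sum constraint on the $\boldsymbol{a}_j$ can capture; Lemma \ref{optistight2} is then what justifies restricting to tight frames in the first place. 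With this map, $f$ carries the rows of $\boldsymbol{E}_m$ exactly onto the edge vectors of the regular $m$-gon for every $m$ (even or odd), and the formula $\beta_{\boldsymbol{A}}=(1-2r(P))^{-1/2}$ holds for irreducible tight frames via $U_{\boldsymbol{A}}^2=\tfrac12\operatorname{perim}(P)$ and $L_{\boldsymbol{A}}^2=\tfrac12\operatorname{perim}(P)-\operatorname{diam}(P)$ (Lemma \ref{tight_gon} (iv)), rather than $L_{\boldsymbol{A}}$ being the diameter itself as you suggest. Only after this repair does your classical endgame go through: for even $m$ the regular $m$-gon is not optimal, and, e.g., your truncated-equilateral-triangle quadrilateral (an isosceles trapezoid, though not a ``thin'' one) yields a frame with $\beta\to\sqrt{3}<\sqrt{2+\sqrt{2}}=\beta_{\boldsymbol{E}_4}$, refuting the conjecture.
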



\subsection{Our contribution}


In this paper, we will disprove Conjecture \ref{meven} by deriving a sharper estimate for the minimal condition number of matrices $\boldsymbol{A}\in \mathbb{R}^{m \times 2}$.
Our key observation is that minimizing the condition number of $\boldsymbol{A}\in \mathbb{R}^{m \times 2}$ is equivalent to solving a classical problem in discrete geometry called the perimeter-maximizing isodiametric problem, which was introduced by Reinhardt in 1922 \cite{reinhardt1922extremale}.
This equivalence leads directly to an improved bound on the minimal condition number, which we state in the following theorem.
\begin{theorem}\label{mincond}
	\begin{enumerate}
		\item[\rm (i)] 
		For any integer $m\geq 3$ that has an odd factor, we have 
		\begin{equation*}
			\min _{\boldsymbol{A} \in \mathbb{R}^{m \times 2}} \beta_{\boldsymbol{A}} = \frac{1}{\sqrt{1-\frac{1}{m \cdot \sin \frac{\pi}{2 m}}}}.
		\end{equation*} 
		
		\item[\rm (ii)]
		For $m=2^s$, where $s$ is an integer with $s\geq 2$, we have 
		\begin{equation*}
			\min _{\boldsymbol{A} \in \mathbb{R}^{m \times 2}} \beta_{\boldsymbol{A}} > \frac{1}{\sqrt{1-\frac{1}{m \cdot \sin \frac{\pi}{2 m}}}}.
		\end{equation*} 
		
		\item[\rm (iii)] The harmonic frame $\boldsymbol{E}_m$  in \eqref{HF} has the minimal condition number if and only if $m$ is odd. 
		Moreover,  $\boldsymbol{E}_m$ is the unique matrix with minimal condition number if and only if $m$ is a odd prime. 
	\end{enumerate}
\end{theorem}

\begin{remark}
	The authors in \cite{xia2024stability} proved that the lower bound  \eqref{beta_HF} is tight if $m\geq 3$ is an odd integer.
	Theorem \ref{mincond} {\rm (i)} generalizes this result by showing that the lower bound \eqref{beta_HF} is tight as long as $m\geq 3$ has an odd factor.
	Moreover, Theorem \ref{mincond} {\rm (iii)} shows that Conjecture \ref{meven} is false. 
		For instance, when $m=4$, we will demonstrate in Remark \ref{example-m4} that
		\begin{equation*}
			\beta_{\boldsymbol{A}}
			=\min_{\boldsymbol{M}\in\mathbb{R}^{4\times 2}}\beta_{\boldsymbol{M}}
			=\sqrt{1+\frac{\sqrt{2}}{2}+\frac{\sqrt{6}}{2}}
			\approx 1.71,
		\end{equation*}
		where
		\begin{equation}\label{example-m=4}
			\boldsymbol{A}=
			\left(
			\begin{array}{cccc}
				-1 & \cos \frac{2\pi}{3}  & \sqrt{2\sin \frac{\pi}{12} }\cos \frac{3\pi}{8}  & \sqrt{2\sin \frac{\pi}{12} }\cos \frac{7\pi}{24}  \\
				0 & \sin \frac{2\pi}{3}  & \sqrt{2\sin \frac{\pi}{12} }\sin \frac{3\pi}{8}  & \sqrt{2\sin \frac{\pi}{12} }\sin \frac{7\pi}{24} 
			\end{array}
			\right)^{T}.
		\end{equation}
		In contrast, by \eqref{betaA-R2} we have $\beta_{\boldsymbol{E}_4}
		={\sqrt{2+\sqrt{2}}}
		\approx 1.84$, showing that $\beta_{\boldsymbol{E}_4}$ is not minimal.
\end{remark}

We next characterize all matrices in $\mathbb{R}^{m\times 2}$ with the minimal condition number, where $m \geq 3$ has an odd factor. For convenience, we introduce the following definitions.
\begin{definition}
	A finite set in $\mathbb{R}^d$ is called a \emph{frame} if it spans $\mathbb{R}^d$.
	We say that a frame $\{\boldsymbol{a}_1,\dotsc,\boldsymbol{a}_{m}\} \subset \mathbb{R}^d$ is \emph{optimal} if its associated matrix $\boldsymbol{A}=[\boldsymbol{a}_1,\ldots, \boldsymbol{a}_m]^T \in \mathbb{R}^{m \times d}$ has the minimal condition number, i.e., $\beta_{\boldsymbol{A}}=\min _{\boldsymbol{M} \in \mathbb{R}^{m \times d}} \beta_{\boldsymbol{M}}$. 
	A frame $\{\boldsymbol{a}_1,\dotsc,\boldsymbol{a}_{m}\}\subset\mathbb{R}^d$ is called a tight frame if $\sum_{i=1}^{m}\boldsymbol{a}_i\boldsymbol{a}_i^T=C\cdot \boldsymbol{I}_d$ for some constant $C>0$.
\end{definition}

\begin{definition}
	Denote by $\mathcal{T}$ the upper half-plane in $\mathbb{R}^2$ excluding the positive $x$-axis, i.e.,
	\begin{equation}\label{def-M-set}
		\mathcal{T}:=\{(x,y)^T\in\mathbb{R}^2: x=t\cdot \cos{\phi},y=t\cdot \sin{\phi},\,\, t\geq 0, \phi\in(0,\pi] \}.	
	\end{equation}
	For any vector $\nv=t\cdot (\cos\phi,\sin\phi)^T\in\mathcal{T}$ with $t\geq 0$ and $\phi\in (0,\frac{\pi}{2}]$, we define $\nv^{\perp}:=t\cdot (\cos(\phi+\frac{\pi}{2}),\sin(\phi+\frac{\pi}{2}))^T \in \mathcal{T}$.
\end{definition}

Note that the condition number of a matrix $\boldsymbol{A}=[\boldsymbol{a}_1,\ldots, \boldsymbol{a}_m]^T \in \mathbb{R}^{m \times 2}$ does not change if the row vector $\boldsymbol{a}_i$ is replaced by $-\boldsymbol{a}_i$ for any $i\in[m]:=\{1,\dotsc,m\}$. 
Hence, to characterize all optimal frames in $\mathbb{R}^2$, it is enough to consider the case when each $\boldsymbol{a}_i\in\mathcal{T}$.

\begin{theorem}
	\label{struct}
	Assume that $m\geq 3$ is an integer with an odd factor. Let $\mathcal{T}_m$ denote the set of all optimal frames with $m$ vectors in $\mathcal{T}$, where $\mathcal{T}$ is defined in \eqref{def-M-set}.  
	Set $\nv_j:=(\cos\frac{j\pi}{2m}, \sin\frac{j\pi}{2m})^T$ for $j \in [m]$.
	Then we have
	
	\begin{enumerate}
		\item [\rm(i)] 
		Assume that ${\mathcal{A}}= \{\boldsymbol{a}_1,\dotsc,\boldsymbol{a}_{m}\} \in  \mathcal{T}_m$. 
		Then  $\mathcal{A}$ forms a tight frame, and all vectors in ${\mathcal{A}}$ have the same norm, i.e.,
		\[
		\|\boldsymbol{a}_1\|_2=\|\boldsymbol{a}_2\|_2=\cdots=\|\boldsymbol{a}_m\|_2.
		\]
		
		\item[\rm(ii)]
		
		If $\boldsymbol{\varepsilon}=(\varepsilon_1,\dotsc,\varepsilon_{m})^T\in\{\pm1\}^m$ satisfies 
		\begin{equation}\label{eq:partition-inThm}
			\sum_{j=1}^{m} \varepsilon_j \zeta_m^j=0,
		\end{equation}
		where $\zeta_m := e^{\ii\pi/m}$, then the frame 
		${\mathcal A}=\{\nv_j:j\in I\} \cup \{\nv_j^{\perp} : j\in [m]\setminus I\}$ is an optimal frame, i.e., $\mathcal{A} \in \mathcal{T}_m $, where $I=\{j\in[m]:\varepsilon_j=1\}$.
		
		\item[\rm(iii)] 
		For any optimal frame ${\mathcal A}\in {\mathcal T}_m$, there exists a vector $\boldsymbol{\varepsilon}=(\varepsilon_1,\dotsc,\varepsilon_{m})^T\in\{\pm1\}^m$ satisfying \eqref{eq:partition-inThm}, such that ${\mathcal A}=\{\nv_j:j\in I\} \cup \{\nv_j^{\perp} : j\in [m]\setminus I\}$ where $I=\{j\in[m]:\varepsilon_j=1\}$, after a proper scaling and rotation.
		

	\end{enumerate}
\end{theorem}

\begin{remark}
	\label{algorithm}
	Let $m \geq 3$ be an integer with an odd factor. 
	A brute-force algorithm can be used to generate all optimal frames in $\mathcal{T}_m$, up to scaling, rotation, and reflection \footnote{The MATLAB and Maple implementations of the algorithm are available at \url{https://github.com/zxynhy/Optimal-frames-for-the-stability-of-phase-retrieval.git}.}. The algorithm begins by determining all vectors $\boldsymbol{\varepsilon}\in\{\pm1\}^m$ that satisfy \eqref{eq:partition-inThm}. 
	Then Theorem \ref{struct} is applied. 
	Take $m=12$ as an example. 
	After identifying all vectors $\boldsymbol{\varepsilon}\in\{\pm1\}^{12}$ for which \eqref{eq:partition-inThm} holds, by Theorem \ref{struct}, we obtain all the optimal frames in $\mathcal{T}_{12}$ (up to scaling, rotation and reflection), shown in Figure \ref{fig:dodecagons} {\rmfamily\slshape (b1)} and {\rmfamily\slshape (b2)}. 
	Denote by $\#\mathcal{T}_m$ the number of distinct frames in $\mathcal{T}_m$, modulo scaling, rotation and reflection.
	{\rm Table \ref{table1}} lists the number  $\# \mathcal{T}_m$  for integers $m\in [3,15]$ with an odd factor. 
	A promising direction for future research is to design an efficient algorithm capable of finding all vectors $\boldsymbol{\varepsilon}\in\{\pm1\}^m$ that fulfill \eqref{eq:partition-inThm}.
\end{remark}

	\begin{table}[htb]
		\centering
		\caption{
			Number  $\#\mathcal{T}_m$ of inequivalent optimal frames (up to scaling, rotation, and reflection) for integers $3 \leq m \leq 15$ with an odd prime factor.
		}
		\begin{tabular}{c|c|c|c|c|c|c|c|c|c|c|c|c}\label{table1}
			$m$ & 3 & 5 & 6 & 7 & 9 & 10 & 11 & 12 & 13 & 14 & 15 &  \\ \hline
			$\#\mathcal{T}_m$ & 1 & 1 & 1 & 1 & 2 & 1 & 1 & 2 & 1 & 1 & 5 &
		\end{tabular}
	\end{table}


Theorem \ref{struct} establishes that any optimal frame ${\mathcal{A}}= \{\boldsymbol{a}_1,\dotsc,\boldsymbol{a}_{m}\} \subset {\mathbb R}^2$ forms a tight frame when $m\geq 3$ has an odd factor. Inspired by this result, we propose the following conjecture.

\begin{conjecture}
	Any optimal frame ${\mathcal{A}}= \{\boldsymbol{a}_1,\dotsc,\boldsymbol{a}_{m}\} \subset {\mathbb R}^d$ for phase retrieval in ${\mathbb R}^d$ must be a tight frame, provided $m\geq 2d-1$.
\end{conjecture}


\section{Preliminaries}

\subsection{Notation}

For a positive integer $m$, we define $[m] := \{1,\dotsc,m\}$.
We use $\| \boldsymbol{x}\|_2$ to denote the Euclidean norm of a vector $\boldsymbol{x}\in\mathbb{H}^d$, and we use $\| \boldsymbol{A}\|_2$ to denote the spectral norm of a matrix $\boldsymbol{A}\in\mathbb{H}^{m\times d}$. 
We denote the unit sphere in $\mathbb{H}^d$ by $\mathbb{S}_{\mathbb{H}}^{d-1}$, i.e., 
\begin{equation*}
	\mathbb{S}_{\mathbb{H}}^{d-1}=\{\boldsymbol{x}\in \mathbb{H}^d: \| \boldsymbol{x}\|_2=1\}.
\end{equation*} 
If $\mathbb{H}=\mathbb{R}$ then we simply write $\mathbb{S}_{\mathbb{R}}^{d-1}$ as $\mathbb{S}^{d-1}$.



\subsection{Characterizations of Lipschitz constants and condition numbers}

We briefly introduce the
existing results on the optimal lower and upper Lipschitz constants.
\begin{theorem}{\rm\cite{bandeira2014saving,balan2015invertibility,alaifari2017phase}}
	\label{UAH}
	Let $\boldsymbol{A} \in \mathbb{H}^{m \times d}$, where $\mathbb{H}=\mathbb{R}$ or $\mathbb{C}$. Then $U_{\boldsymbol{A}} = \|\boldsymbol{A}\|_2$.
\end{theorem}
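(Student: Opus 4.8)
The plan is to establish the two inequalities $U_{\boldsymbol{A}}\le\|\boldsymbol{A}\|_2$ and $U_{\boldsymbol{A}}\ge\|\boldsymbol{A}\|_2$ separately. The upper bound will follow from a coordinatewise reverse triangle inequality combined with the freedom to choose the phase $c$ in the definition of $\operatorname{dist}$, while the matching lower bound will come from the special test pair obtained by taking one argument to be the zero vector.

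For the upper bound, I would fix $\boldsymbol{x},\boldsymbol{y}\in\mathbb{H}^d$ together with an arbitrary phase $c\in\mathbb{H}$, $|c|=1$. The key observation is that $|c\cdot(\boldsymbol{A}\boldsymbol{y})_j|=|(\boldsymbol{A}\boldsymbol{y})_j|$ for every coordinate $j$, so replacing $\boldsymbol{A}\boldsymbol{y}$ by $c\boldsymbol{A}\boldsymbol{y}$ leaves the vector $|\boldsymbol{A}\boldsymbol{y}|$ untouched. Applying the scalar inequality $\bigl||a|-|b|\bigr|\le|a-b|$ in each coordinate and summing squares then gives
\[
\bigl\||\boldsymbol{A}\boldsymbol{x}|-|\boldsymbol{A}\boldsymbol{y}|\bigr\|_2=\bigl\||\boldsymbol{A}\boldsymbol{x}|-|c\boldsymbol{A}\boldsymbol{y}|\bigr\|_2\le\|\boldsymbol{A}\boldsymbol{x}-c\boldsymbol{A}\boldsymbol{y}\|_2=\|\boldsymbol{A}(\boldsymbol{x}-c\boldsymbol{y})\|_2\le\|\boldsymbol{A}\|_2\,\|\boldsymbol{x}-c\boldsymbol{y}\|_2.
\]
Since the left-hand side does not depend on $c$ while the bound holds for every admissible $c$, I would minimize the right-hand side over $\{c:|c|=1\}$, turning $\|\boldsymbol{x}-c\boldsymbol{y}\|_2$ into $\operatorname{dist}(\boldsymbol{x},\boldsymbol{y})$. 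Dividing by $\operatorname{dist}(\boldsymbol{x},\boldsymbol{y})$ and taking the supremum over all admissible pairs yields $U_{\boldsymbol{A}}\le\|\boldsymbol{A}\|_2$.

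For the lower bound, I would specialize to $\boldsymbol{y}=\boldsymbol{0}$. Then $|\boldsymbol{A}\boldsymbol{y}|=\boldsymbol{0}$ and $\operatorname{dist}(\boldsymbol{x},\boldsymbol{0})=\|\boldsymbol{x}\|_2$, while $\bigl\||\boldsymbol{A}\boldsymbol{x}|\bigr\|_2=\|\boldsymbol{A}\boldsymbol{x}\|_2$ because passing to coordinatewise absolute values preserves the Euclidean norm. Hence the difference quotient reduces exactly to $\|\boldsymbol{A}\boldsymbol{x}\|_2/\|\boldsymbol{x}\|_2$, and taking the supremum over nonzero $\boldsymbol{x}$ recovers $\|\boldsymbol{A}\|_2$ by the variational characterization of the spectral norm. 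Combining the two bounds gives the stated identity.

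The argument is essentially routine; the only point that requires care is the order of quantifiers in the upper bound, where one must exploit that the phase $c$ may be optimized on the right-hand side independently of the phase-free left-hand side. Everything else reduces to the elementary inequality $\bigl||a|-|b|\bigr|\le|a-b|$ and the invariance of the Euclidean norm under taking coordinatewise moduli.
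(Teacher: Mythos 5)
Your proof is correct. The paper offers no proof of this theorem---it is quoted from the cited references---but your argument (the coordinatewise reverse triangle inequality together with optimization over the phase $c$ for the upper bound, and testing with $\boldsymbol{y}=\boldsymbol{0}$ so that the quotient collapses to $\|\boldsymbol{A}\boldsymbol{x}\|_2/\|\boldsymbol{x}\|_2$ for the lower bound) is exactly the standard argument used in that literature, so there is nothing to repair.
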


\begin{theorem}{\rm\cite{alharbi2024locality}}
	\label{LAH}
	Let $\boldsymbol{A} \in \mathbb{H}^{m \times d}$, where $\mathbb{H}=\mathbb{R}$ or $\mathbb{C}$. Then 
	$$ L_{\boldsymbol{A}} 
	= \min_{\substack{\boldsymbol{x}, \boldsymbol{y} \in \mathbb{H}^d \\ \|\boldsymbol{x}\|_2=1,\|\boldsymbol{y}\|_2\leq 1,\left<\boldsymbol{x},\boldsymbol{y}\right>=0}} 
	\frac{\||\boldsymbol{A} \boldsymbol{x}|-|\boldsymbol{A} \boldsymbol{y}|\|_2}{\operatorname{dist}(\boldsymbol{x}, \boldsymbol{y})}.
	$$
\end{theorem}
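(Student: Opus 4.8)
The plan is to prove the asserted identity as two opposing inequalities. Write $S:=\{(\boldsymbol x,\boldsymbol y)\in\mathbb H^d\times\mathbb H^d:\operatorname{dist}(\boldsymbol x,\boldsymbol y)\neq0\}$ for the full admissible set, $S_\perp:=\{(\boldsymbol x,\boldsymbol y):\|\boldsymbol x\|_2=1,\|\boldsymbol y\|_2\le1,\langle\boldsymbol x,\boldsymbol y\rangle=0\}$ for the restricted set, and set $R(\boldsymbol x,\boldsymbol y):=\||\boldsymbol A\boldsymbol x|-|\boldsymbol A\boldsymbol y|\|_2/\operatorname{dist}(\boldsymbol x,\boldsymbol y)$. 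First I would check that $S_\perp\subseteq S$: on $S_\perp$ orthogonality gives $\operatorname{dist}(\boldsymbol x,\boldsymbol y)=\sqrt{\|\boldsymbol x\|_2^2+\|\boldsymbol y\|_2^2}\ge1>0$, so $R$ is well defined and, being continuous on the compact set $S_\perp$, attains its minimum there. Since $L_{\boldsymbol A}=\inf_S R$ is an infimum over a larger set, we immediately get $L_{\boldsymbol A}\le\min_{S_\perp}R$. The entire content is therefore the reverse inequality $\min_{S_\perp}R\le L_{\boldsymbol A}$, which I would obtain by a domination statement: for every $(\boldsymbol x,\boldsymbol y)\in S$ there is a pair $(\boldsymbol x',\boldsymbol y')\in S_\perp$ with $R(\boldsymbol x',\boldsymbol y')\le R(\boldsymbol x,\boldsymbol y)$.

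The next step is to use the symmetries of $R$ to put an arbitrary pair into a normal form. The numerator $\||\boldsymbol A\boldsymbol x|-|\boldsymbol A\boldsymbol y|\|_2$ and the quantity $\operatorname{dist}(\boldsymbol x,\boldsymbol y)$ are each invariant under the unimodular rescalings $\boldsymbol x\mapsto c_1\boldsymbol x$, $\boldsymbol y\mapsto c_2\boldsymbol y$ with $|c_1|=|c_2|=1$, each is homogeneous of degree one under the joint scaling $(\boldsymbol x,\boldsymbol y)\mapsto(t\boldsymbol x,t\boldsymbol y)$, and $R$ is symmetric in $\boldsymbol x$ and $\boldsymbol y$. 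Absorbing into $\boldsymbol y$ the phase $c$ realising $\operatorname{dist}(\boldsymbol x,\boldsymbol y)=\min_{|c|=1}\|\boldsymbol x-c\boldsymbol y\|_2$, I may assume $\langle\boldsymbol x,\boldsymbol y\rangle\ge0$ (real and nonnegative in the complex case) and $\operatorname{dist}(\boldsymbol x,\boldsymbol y)=\|\boldsymbol x-\boldsymbol y\|_2$, so that $\operatorname{dist}^2=\|\boldsymbol x\|_2^2+\|\boldsymbol y\|_2^2-2\langle\boldsymbol x,\boldsymbol y\rangle$. Using the symmetry to assume $\|\boldsymbol x\|_2\ge\|\boldsymbol y\|_2$ and the joint scaling to normalise $\|\boldsymbol x\|_2=1$, every pair is reduced to one with $\|\boldsymbol x\|_2=1$, $\|\boldsymbol y\|_2\le1$ and $0\le\langle\boldsymbol x,\boldsymbol y\rangle$. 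Thus the reverse inequality follows once I show that any such normalised pair is dominated by an orthogonal one.

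The heart of the argument, and the step I expect to be the main obstacle, is this orthogonality reduction. My plan is a variational argument localised in the plane $V=\operatorname{span}\{\boldsymbol x,\boldsymbol y\}$. Freezing the sign pattern $\varepsilon_j=\operatorname{sgn}\langle\boldsymbol a_j,\boldsymbol x\rangle$, $\delta_j=\operatorname{sgn}\langle\boldsymbol a_j,\boldsymbol y\rangle$ (the argument phases, in the complex case) turns the numerator locally into a genuine quadratic form, $\||\boldsymbol A\boldsymbol x|-|\boldsymbol A\boldsymbol y|\|_2^2=\|\boldsymbol B\boldsymbol x-\boldsymbol C\boldsymbol y\|_2^2$, where $\boldsymbol B,\boldsymbol C$ are $\boldsymbol A$ with its rows rescaled by these unimodular factors; hence $R^2$ becomes a generalised Rayleigh quotient on the corresponding sign-cell. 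I would then fix $\boldsymbol x$ and $\|\boldsymbol y\|_2$ and rotate $\boldsymbol y$ within $V$ so that $\langle\boldsymbol x,\boldsymbol y\rangle$ decreases to $0$: since $\operatorname{dist}^2=\|\boldsymbol x\|_2^2+\|\boldsymbol y\|_2^2-2\langle\boldsymbol x,\boldsymbol y\rangle$ strictly increases along this rotation, the burden is to show that the numerator does not grow fast enough to spoil the ratio, so that $R$ does not increase as $\langle\boldsymbol x,\boldsymbol y\rangle\downarrow0$. Equivalently, at a minimiser of $R$ the Karush--Kuhn--Tucker conditions should force the constraint $\langle\boldsymbol x,\boldsymbol y\rangle\ge0$ to be active, which is precisely orthogonality.

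The difficulty is twofold, and I expect it to demand a quantitative rotation estimate rather than a soft compactness argument. First, $\boldsymbol z\mapsto|\boldsymbol A\boldsymbol z|$ is only piecewise smooth, with nonsmoothness exactly on the hyperplanes $\langle\boldsymbol a_j,\boldsymbol z\rangle=0$; a minimiser may sit on a cell boundary, so the Rayleigh-quotient reduction must be coupled with a one-sided subdifferential analysis across cells. Second, a minimising sequence for $L_{\boldsymbol A}$ may degenerate to the diagonal $\boldsymbol y\to\boldsymbol x$, where $R\to\|\boldsymbol A\boldsymbol h\|_2/\|\boldsymbol h\|_2$ for the limiting direction $\boldsymbol h$ and hence $L_{\boldsymbol A}\le\sigma_{\min}(\boldsymbol A)$; this ``local'' regime must be shown to be dominated by a genuine orthogonal pair of comparable norms, and it is \emph{not} captured by the naive device of replacing $\boldsymbol y$ by its projection onto $\boldsymbol x^{\perp}$ (a direct computation shows that projection generally increases $R$). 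Reconciling this local regime with the widely separated configurations inside $S_\perp$ is the crux. Finally, the case where $\boldsymbol A$ lacks the phase retrieval property is separate but easy: then $L_{\boldsymbol A}=0$, and the failure of the complement property yields vectors $\boldsymbol u,\boldsymbol v$ producing $\boldsymbol x=\boldsymbol u+\boldsymbol v$, $\boldsymbol y=\boldsymbol u-\boldsymbol v$ with $|\boldsymbol A\boldsymbol x|=|\boldsymbol A\boldsymbol y|$; rescaling $\boldsymbol v$ so that $\|\boldsymbol u\|_2=\|\boldsymbol v\|_2$ makes $\boldsymbol x\perp\boldsymbol y$, whence $\min_{S_\perp}R=0=L_{\boldsymbol A}$ as well.
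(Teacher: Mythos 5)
The paper itself does not prove this statement; it is quoted as a known result from \cite{alharbi2024locality}, so the relevant comparison is with the proof given there. Your skeleton is right: $L_{\boldsymbol A}\le\min_{S_\perp}R$ is immediate from $S_\perp\subseteq S$ (plus attainment of the minimum by compactness, which you justify correctly), the phase-alignment and joint-scaling normalizations are sound, and everything hinges on a domination statement producing, for each admissible pair, an orthogonal pair whose ratio is no larger. But your proposal stops exactly at that decisive step. The ``orthogonality reduction'' is only described as a plan --- freeze sign patterns, view $R^2$ as a Rayleigh quotient on each sign cell, rotate $\boldsymbol y$ inside $\operatorname{span}\{\boldsymbol x,\boldsymbol y\}$ at fixed $\|\boldsymbol y\|_2$, and hope that KKT forces the constraint $\langle\boldsymbol x,\boldsymbol y\rangle\ge0$ to be active --- and you yourself flag that you cannot control the numerator along this rotation. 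That is a genuine gap, not a routine verification: nothing in the proposal shows the ratio is monotone (or even eventually smaller) along a norm-preserving rotation, the piecewise-smooth structure you point out makes any such claim delicate, and no quantitative estimate is supplied. A proof cannot end where its ``heart'' begins.

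What actually closes the gap --- and it is how \cite{alharbi2024locality} proceeds --- is a construction your plan never considers: after phase alignment write $\boldsymbol u=\tfrac12(\boldsymbol x+\boldsymbol y)$ and $\boldsymbol v=\tfrac12(\boldsymbol x-\boldsymbol y)$, so that alignment gives $\langle\boldsymbol u,\boldsymbol v\rangle$ real and $\|\boldsymbol u\|_2\ge\|\boldsymbol v\|_2$, and then \emph{stretch the half-difference}: set $\boldsymbol x'=\boldsymbol u+t\boldsymbol v$, $\boldsymbol y'=\boldsymbol u-t\boldsymbol v$ with $t=\|\boldsymbol u\|_2/\|\boldsymbol v\|_2\ge1$, which makes $\langle\boldsymbol x',\boldsymbol y'\rangle=0$ exactly. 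The scalar inequality $\bigl|\,|a+tb|-|a-tb|\,\bigr|\le t\,\bigl|\,|a+b|-|a-b|\,\bigr|$ for $t\ge1$ --- which follows from $|a+tb|-|a-tb|=4t\operatorname{Re}(\bar{a}b)/\left(|a+tb|+|a-tb|\right)$ and the fact that $t\mapsto|a+tb|+|a-tb|$ is even and convex, hence nondecreasing on $[0,\infty)$ --- shows the numerator grows by at most a factor $t$, while $\operatorname{dist}(\boldsymbol x',\boldsymbol y')=t\operatorname{dist}(\boldsymbol x,\boldsymbol y)$ exactly; so the ratio does not increase and lands on an orthogonal pair, which one then rescales into $S_\perp$. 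This works uniformly over $\mathbb R$ and $\mathbb C$, and it dissolves both regimes you worried about: the degeneration $\boldsymbol y\to\boldsymbol x$ needs no separate treatment, and neither does the non--phase-retrieval case (your complement-property argument for the latter is in any case only valid over $\mathbb R$, since over $\mathbb C$ failure of phase retrieval does not imply failure of the complement property). In short: correct easy half and correct normal form, but the theorem's actual content is missing, and the specific reduction you propose is not the one that works.
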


The following lemma is useful for proving Theorem \ref{struct}, which is essentially given in \cite{xia2024stability} (see equation (2.2) and (3.16) in \cite{xia2024stability}).

\begin{lemma}
	\label{lowerBound}
	{\rm\cite{xia2024stability}}
	Let $\boldsymbol{A} = [\boldsymbol{a}_1,\ldots, \boldsymbol{a}_m]^T\in \mathbb{R}^{m \times 2}$. Then 
	$$ (L_{\boldsymbol{A}})^2 \,\,\leq\,\, \frac{1}{2}\sum_{j=1}^{m}\|\boldsymbol{a}_j\|_2^2 - \frac{1}{2m \cdot \sin \frac{\pi}{2 m}}\sum_{j=1}^{m}\|\boldsymbol{a}_j\|_2^2.
	$$
\end{lemma}

Inspired by the proof of Theorem 3.3 in \cite{xia2024stability}, we present the following lemma, which provides an alternative expression for the condition number of tight frames.

\begin{lemma}
	\label{eqform}
	Assume that $\{\boldsymbol{a}_1,\dotsc,\boldsymbol{a}_{m}\} \subset  \mathbb{H}^{d}$ forms a tight frame, where $\mathbb{H}=\mathbb{R}$ or $\mathbb{C}$. 
	Then  
	\begin{equation}
		\label{beta_equi}
		\beta_{\boldsymbol{A}} =
		\left(1-\frac{1}{C} 
		\underset{\substack{\boldsymbol{x},\boldsymbol{y} \in \mathbb{H}^d, \\ \|\boldsymbol{x}\|_2=\|\boldsymbol{y}\|_2=1,  \langle \boldsymbol{x},\boldsymbol{y} \rangle=0}}{\max}
		\sum_{j=1}^m \left|\boldsymbol{x}^* \boldsymbol{a}_j \boldsymbol{a}_j^* \boldsymbol{y}\right|\right)^{-\frac{1}{2}},
	\end{equation}
	where $\boldsymbol{A} = [\boldsymbol{a}_1,\dotsc,\boldsymbol{a}_{m}]^* \in  \mathbb{H}^{m\times d}$ and  $C = \frac{1}{d}\sum_{j=1}^{m}{\|\boldsymbol{a}_j\|_2^2}$.
\end{lemma}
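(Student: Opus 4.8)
The plan is to express $\beta_{\boldsymbol{A}}$ directly in terms of the optimal Lipschitz constants $L_{\boldsymbol{A}}$ and $U_{\boldsymbol{A}}$, using the equivalent forms recalled in Theorem~\ref{UAH} and Theorem~\ref{LAH}, and then to exploit the tight-frame hypothesis to simplify the numerator $\||\boldsymbol{A}\boldsymbol{x}|-|\boldsymbol{A}\boldsymbol{y}|\|_2^2$. Since $\beta_{\boldsymbol{A}}=U_{\boldsymbol{A}}/L_{\boldsymbol{A}}$, and the tight-frame condition $\sum_{j=1}^m \boldsymbol{a}_j\boldsymbol{a}_j^*=C\cdot\boldsymbol{I}_d$ gives $\|\boldsymbol{A}\boldsymbol{x}\|_2^2=\sum_{j=1}^m|\boldsymbol{x}^*\boldsymbol{a}_j|^2=C$ for every unit vector $\boldsymbol{x}$, we immediately get $U_{\boldsymbol{A}}=\|\boldsymbol{A}\|_2=\sqrt{C}$. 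The main work is therefore to compute $L_{\boldsymbol{A}}$.

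First I would start from the formula in Theorem~\ref{LAH}, restricting to unit orthogonal pairs $\boldsymbol{x},\boldsymbol{y}$ (the extra flexibility $\|\boldsymbol{y}\|_2\le 1$ should not help, since scaling $\boldsymbol{y}$ down decreases the ratio; I would verify that the infimum is attained at $\|\boldsymbol{y}\|_2=1$). For such a pair, $\operatorname{dist}(\boldsymbol{x},\boldsymbol{y})^2=\|\boldsymbol{x}-\boldsymbol{y}\|_2^2=2$ because of orthogonality and the unimodular minimization defining the distance. Then I would expand
\begin{equation*}
\||\boldsymbol{A}\boldsymbol{x}|-|\boldsymbol{A}\boldsymbol{y}|\|_2^2
=\sum_{j=1}^m\bigl(|\boldsymbol{x}^*\boldsymbol{a}_j|^2+|\boldsymbol{y}^*\boldsymbol{a}_j|^2\bigr)
-2\sum_{j=1}^m|\boldsymbol{x}^*\boldsymbol{a}_j|\,|\boldsymbol{y}^*\boldsymbol{a}_j|.
\end{equation*}
By the tight-frame identity applied to the unit vectors $\boldsymbol{x}$ and $\boldsymbol{y}$, the first sum equals $2C$. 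For the cross term I would use $|\boldsymbol{x}^*\boldsymbol{a}_j|\,|\boldsymbol{y}^*\boldsymbol{a}_j|=|\boldsymbol{x}^*\boldsymbol{a}_j\boldsymbol{a}_j^*\boldsymbol{y}|$, so that minimizing the numerator over $(\boldsymbol{x},\boldsymbol{y})$ is exactly maximizing $\sum_{j=1}^m|\boldsymbol{x}^*\boldsymbol{a}_j\boldsymbol{a}_j^*\boldsymbol{y}|$. Combining, $L_{\boldsymbol{A}}^2=\min_{(\boldsymbol{x},\boldsymbol{y})}\frac{2C-2\max(\cdots)}{2}=C-\max\sum_j|\boldsymbol{x}^*\boldsymbol{a}_j\boldsymbol{a}_j^*\boldsymbol{y}|$, whence $\beta_{\boldsymbol{A}}^2=U_{\boldsymbol{A}}^2/L_{\boldsymbol{A}}^2=C/\bigl(C-\max\sum_j|\cdots|\bigr)$, which is precisely \eqref{beta_equi} after factoring out $C$.

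I expect the main obstacle to be the careful handling of the distance and the numerator under the unimodular constant $c$ in $\operatorname{dist}$, particularly in the complex case: one must confirm that the minimizing phase in $\operatorname{dist}(\boldsymbol{x},\boldsymbol{y})=\min_{|c|=1}\|\boldsymbol{x}-c\boldsymbol{y}\|_2$ does not interact with the phaseless numerator $\||\boldsymbol{A}\boldsymbol{x}|-|\boldsymbol{A}\boldsymbol{y}|\|_2$, which is invariant under $\boldsymbol{y}\mapsto c\boldsymbol{y}$, so that the orthogonality constraint $\langle\boldsymbol{x},\boldsymbol{y}\rangle=0$ may be imposed without loss and forces $\operatorname{dist}(\boldsymbol{x},\boldsymbol{y})^2=2$. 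A secondary technical point is justifying that the infimum over $\|\boldsymbol{y}\|_2\le 1$ is attained on the unit sphere; I would argue this by a direct monotonicity/compactness estimate showing that shrinking $\boldsymbol{y}$ cannot lower the ratio below its value at norm one. Once these reductions are in place, the remainder is the bookkeeping already sketched above, and the claimed identity \eqref{beta_equi} follows directly.
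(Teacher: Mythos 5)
Your proposal is correct and follows essentially the same route as the paper: $U_{\boldsymbol{A}}=\sqrt{C}$ from the tight-frame identity, Theorem~\ref{LAH} for $L_{\boldsymbol{A}}$, expansion of the numerator into $2C-2\sum_{j}|\boldsymbol{x}^*\boldsymbol{a}_j\boldsymbol{a}_j^*\boldsymbol{y}|$, and reduction to unit orthogonal pairs; the verification you defer is carried out in the paper by writing the ratio as $C-\frac{2}{\|\boldsymbol{y}\|_2^{-1}+\|\boldsymbol{y}\|_2}\sum_{j}|\boldsymbol{x}^*\boldsymbol{a}_j\boldsymbol{a}_j^*\widetilde{\boldsymbol{y}}|$ and invoking $t^{-1}+t\ge 2$, which also covers the $\boldsymbol{y}=\boldsymbol{0}$ case (ratio equal to $C$). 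One slip worth noting: your early claim that scaling $\boldsymbol{y}$ down \emph{decreases} the ratio is backwards (and, if true, would defeat the reduction), but your final paragraph states the correct direction — shrinking $\boldsymbol{y}$ cannot lower the ratio below its value at norm one — which is exactly what the paper's inequality establishes.
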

\begin{proof}
	Since $\{\boldsymbol{a}_1,\dotsc,\boldsymbol{a}_{m}\} \subset  \mathbb{H}^{d}$ forms a tight frame, we have $\boldsymbol{A}^{*}\boldsymbol{A} = C\cdot  \boldsymbol{I}_d$, where  $C=\frac{1}{d}\sum_{j=1}^{m}{\|\boldsymbol{a}_j\|_2^2}$.
	By Theorem \ref{LAH} we have 
	\begin{equation}
		\label{LALem}
		(L_{\boldsymbol{A}})^2 = \min_{\substack{\boldsymbol{x}, \boldsymbol{y} \in \mathbb{H}^d \\ \|\boldsymbol{x}\|_2=1,\|\boldsymbol{y}\|_2\leq 1,\left<\boldsymbol{x},\boldsymbol{y}\right>=0}} 
		\frac{\||\boldsymbol{A} \boldsymbol{x}|-|\boldsymbol{A} \boldsymbol{y}|\|_2^2}{\operatorname{dist}^2(\boldsymbol{x}, \boldsymbol{y})}.
	\end{equation}
	If $\boldsymbol{y} = \mathbf{0}$, then we simply have
	\begin{equation}\label{eq:xu4}
		\frac{\||\boldsymbol{A} \boldsymbol{x}| - |\boldsymbol{A} \boldsymbol{y}|\|_2^2}{\operatorname{dist}^2(\boldsymbol{x}, \boldsymbol{y})} = \frac{\|\boldsymbol{A} \boldsymbol{x}\|_2^2}{\|\boldsymbol{x}\|_2^2} = C.
	\end{equation}
	We next consider the case when $\boldsymbol{y} \neq \mathbf{0}$. Assume that $\boldsymbol{x},\, \boldsymbol{y} \in \mathbb{H}^d$, $\|\boldsymbol{x}\|_2=1,\,0<\|\boldsymbol{y}\|_2\leq 1$, and $\left<\boldsymbol{x},\boldsymbol{y}\right>=0$. 
	A direct calculation shows $
	\operatorname{dist}^2(\boldsymbol{x}, \boldsymbol{y}) = \|\boldsymbol{x} - \boldsymbol{y}\|_2^2 = 1 + \|\boldsymbol{y}\|_2^2$
	and
	\begin{align*}
		\left\||\boldsymbol{A} \boldsymbol{x}| - |\boldsymbol{A} \boldsymbol{y}|\right\|_2^2 & = \boldsymbol{x}^* \boldsymbol{A}^* \boldsymbol{A} \boldsymbol{x} + \boldsymbol{y}^* \boldsymbol{A}^* \boldsymbol{A} \boldsymbol{y} - 2 \sum_{j=1}^m \left|\boldsymbol{x}^* \boldsymbol{a}_j \boldsymbol{a}_j^* \boldsymbol{y}\right| \\ & = C\left(1 + \|\boldsymbol{y}\|_2^2\right) - 2\|\boldsymbol{y}\|_2 \sum_{j=1}^m \left|\boldsymbol{x}^* \boldsymbol{a}_j \boldsymbol{a}_j^* \widetilde{\boldsymbol{y}}\right|,
	\end{align*}
	where $\widetilde{\boldsymbol{y}}:=\boldsymbol{y}/\|\boldsymbol{y}\|_2$.
	It follows that
	\begin{equation}
		\label{C-sum}
		\frac{\||\boldsymbol{A} \boldsymbol{x}| - |\boldsymbol{A} \boldsymbol{y}|\|_2^2}{\operatorname{dist}^2(\boldsymbol{x}, \boldsymbol{y})} 
		= C - \frac{2\sum_{j=1}^m \left|\boldsymbol{x}^* \boldsymbol{a}_j \boldsymbol{a}_j^* \widetilde{\boldsymbol{y}}\right|}{\frac{1}{\|\boldsymbol{y}\|_2}+\|\boldsymbol{y}\|_2} 
		\overset{(a)}{\geq} C - \sum_{j=1}^m \left|\boldsymbol{x}^* \boldsymbol{a}_j \boldsymbol{a}_j^* \widetilde{\boldsymbol{y}}\right|.
	\end{equation}
	Here, we apply the Cauchy-Schwarz inequality in ($a$), where the equality is achieved if and only if $\|\boldsymbol{y}\|_2=1$. 
	Therefore, combining \eqref{LALem} with \eqref{eq:xu4} and \eqref{C-sum}, we have
	\begin{equation}\label{eq:LA2}
		\begin{aligned}
			(L_{\boldsymbol{A}} )^2
			& = \underset{\substack{\boldsymbol{x},\boldsymbol{y} \in \mathbb{H}^d, \\ \|\boldsymbol{x}\|_2=\|\boldsymbol{y}\|_2 = 1, \, \langle \boldsymbol{x},\boldsymbol{y} \rangle=0}}{\min}\left(C - \sum_{j=1}^m \left|\boldsymbol{x}^* \boldsymbol{a}_j \boldsymbol{a}_j^* \boldsymbol{y}\right|\right) \\ 
			& = C - \underset{\substack{\boldsymbol{x},\boldsymbol{y} \in \mathbb{H}^d, \\ \|\boldsymbol{x}\|_2=\|\boldsymbol{y}\|_2=1, \, \langle \boldsymbol{x},\boldsymbol{y} \rangle=0}}{\max}\sum_{j=1}^m \left|\boldsymbol{x}^* \boldsymbol{a}_j \boldsymbol{a}_j^* \boldsymbol{y}\right|.
		\end{aligned}
	\end{equation}
	Combining  \eqref{eq:LA2} with $\beta_{\boldsymbol{A}} = U_{\boldsymbol{A}} / L_{\boldsymbol{A}}$ and $U_{\boldsymbol{A}} =\|\boldsymbol{A}\|_2= \sqrt{C}
	$, we arrive at \eqref{beta_equi}. 
\end{proof}

The following lemma establishes that, for determining the minimal condition number of $\boldsymbol{A} \in \mathbb{H}^{m \times d}$, it suffices to consider cases where its rows form a tight frame.

\begin{lemma}
	\label{optistight2}
	Assume that $m\geq d$. We have
	\begin{equation*}
		\min _{\boldsymbol{A} \in \mathbb{H}^{m \times d}} 
		\beta_{\boldsymbol{A}} 
		=\min _{\boldsymbol{A} \in \mathbb{H}^{m \times d}, \boldsymbol{A}^*\boldsymbol{A}=\boldsymbol{I}_d} 
		\beta_{\boldsymbol{A}} .
	\end{equation*}
\end{lemma}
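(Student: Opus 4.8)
The plan is to prove the nontrivial inequality
\[
\min_{\boldsymbol{A}\in\mathbb{H}^{m\times d}}\beta_{\boldsymbol{A}}\ \ge\ \min_{\boldsymbol{A}\in\mathbb{H}^{m\times d},\,\boldsymbol{A}^*\boldsymbol{A}=\boldsymbol{I}_d}\beta_{\boldsymbol{A}},
\]
since the reverse inequality is immediate (the right-hand minimum runs over a subset of $\mathbb{H}^{m\times d}$). The core idea is a \emph{symmetrization}: for each $\boldsymbol{A}$ I would produce a Parseval tight frame $\boldsymbol{B}$ (i.e.\ $\boldsymbol{B}^*\boldsymbol{B}=\boldsymbol{I}_d$) with $\beta_{\boldsymbol{B}}\le\beta_{\boldsymbol{A}}$; taking the infimum over $\boldsymbol{A}$ then gives the claim.

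First I would dispose of the degenerate case $\beta_{\boldsymbol{A}}=+\infty$, where the desired inequality holds trivially (and if \emph{every} matrix fails phase retrieval, both sides equal $+\infty$). In the remaining case $\beta_{\boldsymbol{A}}<\infty$ we have $L_{\boldsymbol{A}}>0$, so $\boldsymbol{A}$ has the phase retrieval property and in particular full column rank; hence $S:=\boldsymbol{A}^*\boldsymbol{A}\succ 0$. Define $\boldsymbol{B}:=\boldsymbol{A}S^{-1/2}$, so that $\boldsymbol{B}^*\boldsymbol{B}=S^{-1/2}SS^{-1/2}=\boldsymbol{I}_d$ and $\boldsymbol{B}$ is a Parseval tight frame. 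By Theorem \ref{UAH}, $U_{\boldsymbol{B}}=\|\boldsymbol{B}\|_2=1$; moreover, since $\|\boldsymbol{B}\boldsymbol{w}\|_2=\|\boldsymbol{w}\|_2$ for all $\boldsymbol{w}$, writing $\boldsymbol{A}=\boldsymbol{B}S^{1/2}$ gives $U_{\boldsymbol{A}}=\|\boldsymbol{A}\|_2=\|S^{1/2}\|_2=\lambda_{\max}(S)^{1/2}$.

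The heart of the argument is to compare the lower Lipschitz constants. Since $|\boldsymbol{A}\boldsymbol{z}|=|\boldsymbol{B}(S^{1/2}\boldsymbol{z})|$ for every $\boldsymbol{z}\in\mathbb{H}^d$, the substitution $\boldsymbol{u}=S^{1/2}\boldsymbol{x}$, $\boldsymbol{v}=S^{1/2}\boldsymbol{y}$ (a bijection of $\mathbb{H}^d$) in the definition of $L_{\boldsymbol{A}}$ yields
\[
L_{\boldsymbol{A}}=\inf_{\boldsymbol{u},\boldsymbol{v}}\frac{\bigl\||\boldsymbol{B}\boldsymbol{u}|-|\boldsymbol{B}\boldsymbol{v}|\bigr\|_2}{\operatorname{dist}(S^{-1/2}\boldsymbol{u},S^{-1/2}\boldsymbol{v})}.
\]
The key estimate is the one-sided distortion bound $\operatorname{dist}(T\boldsymbol{u},T\boldsymbol{v})\ge\sigma_{\min}(T)\operatorname{dist}(\boldsymbol{u},\boldsymbol{v})$, valid for any linear map $T$: for each $c$ with $|c|=1$ one has $\|T(\boldsymbol{u}-c\boldsymbol{v})\|_2\ge\sigma_{\min}(T)\|\boldsymbol{u}-c\boldsymbol{v}\|_2\ge\sigma_{\min}(T)\operatorname{dist}(\boldsymbol{u},\boldsymbol{v})$, and minimizing the left side over $c$ gives the claim. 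Applying this with $T=S^{-1/2}$, whose smallest singular value is $\lambda_{\max}(S)^{-1/2}$, bounds the denominator from below, whence $L_{\boldsymbol{A}}\le\lambda_{\max}(S)^{1/2}\,L_{\boldsymbol{B}}$. Combining this with the values of $U_{\boldsymbol{A}},U_{\boldsymbol{B}}$ above gives
\[
\beta_{\boldsymbol{A}}=\frac{U_{\boldsymbol{A}}}{L_{\boldsymbol{A}}}\ge\frac{\lambda_{\max}(S)^{1/2}}{\lambda_{\max}(S)^{1/2}L_{\boldsymbol{B}}}=\frac{1}{L_{\boldsymbol{B}}}=\frac{U_{\boldsymbol{B}}}{L_{\boldsymbol{B}}}=\beta_{\boldsymbol{B}},
\]
where $L_{\boldsymbol{B}}>0$ follows from $L_{\boldsymbol{A}}>0$ together with $L_{\boldsymbol{A}}\le\lambda_{\max}(S)^{1/2}L_{\boldsymbol{B}}$.

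I expect the main obstacle to be precisely the treatment of $L_{\boldsymbol{A}}$, since $\Phi_{\boldsymbol{A}}$ is nonlinear and the relevant $\operatorname{dist}$ is itself a minimum over the unit-modulus phase $c$. Two features make the comparison go through cleanly: the modulus commutes with the factorization $\boldsymbol{A}=\boldsymbol{B}S^{1/2}$, so the substitution $\boldsymbol{u}=S^{1/2}\boldsymbol{x}$ converts $L_{\boldsymbol{A}}$ \emph{exactly} into a ratio involving $\boldsymbol{B}$; and $\operatorname{dist}$ obeys the one-sided distortion inequality under linear maps. It is important that the defining infimum for $L_{\boldsymbol{A}}$ ranges over \emph{all} pairs, so that the change of variables introduces no orthogonality or normalization constraints that would need to be tracked.
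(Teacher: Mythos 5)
Your proposal is correct and follows essentially the same route as the paper: both construct the Parseval frame $\boldsymbol{B}=\boldsymbol{A}(\boldsymbol{A}^*\boldsymbol{A})^{-1/2}$, identify $U_{\boldsymbol{B}}=1$ and $U_{\boldsymbol{A}}=\|(\boldsymbol{A}^*\boldsymbol{A})^{1/2}\|_2$, and compare $L_{\boldsymbol{A}}$ with $L_{\boldsymbol{B}}$ via the one-sided distortion bound for $\operatorname{dist}$ under an invertible linear map (your $L_{\boldsymbol{A}}\le\lambda_{\max}(S)^{1/2}L_{\boldsymbol{B}}$ is the paper's $L_{\boldsymbol{B}}\ge L_{\boldsymbol{A}}/\|\boldsymbol{M}^{1/2}\|_2$ read in the other direction). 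The only cosmetic difference is bookkeeping of the degenerate case, which you handle via $\beta_{\boldsymbol{A}}=+\infty$ and the paper via $\operatorname{rank}(\boldsymbol{A})<d$.
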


\begin{proof}
	If  ${\rm rank}({\boldsymbol{A}})< d$, then  $\beta_{\boldsymbol{A}}=+\infty$.
	We next consider the case when  ${\boldsymbol{A}}\in  \mathbb{H}^{m\times d}$ has rank $d$.
	Let $\boldsymbol{M}=\boldsymbol{A}^*\boldsymbol{A}\in  \mathbb{H}^{d\times d}$ and $\boldsymbol{B}=\boldsymbol{A} \boldsymbol{M}^{-\frac{1}{2}}\in  \mathbb{H}^{m\times d}$.
	We claim that
	\begin{equation}\label{eq:claim1}
		\beta_{{\boldsymbol{A}}} \geq \beta_{{\boldsymbol{B}}} .
	\end{equation}
	Note that the rows of ${\boldsymbol{B}}$ form a tight frame.
	Also note that the condition number $\beta_{\boldsymbol{B}}$ is invariant under the scaling of ${\boldsymbol{B}}$.
	Thus, \eqref{eq:claim1} implies the conclusion. 
	It suffices  to prove \eqref{eq:claim1}.
	
	Note that for any $\boldsymbol{x},\boldsymbol{y}\in\mathbb{H}^d$, 
	\begin{equation*}
		\begin{aligned}
			\||\boldsymbol{B} \boldsymbol{x}|-|\boldsymbol{B} \boldsymbol{y}|\|_2
			&=\||\boldsymbol{A}\boldsymbol{M}^{-\frac{1}{2}} \boldsymbol{x}|-|\boldsymbol{A}\boldsymbol{M}^{-\frac{1}{2}} \boldsymbol{y}|\|_2	
			=\||\boldsymbol{A} \widehat{\boldsymbol{x}}|-|\boldsymbol{A}\widehat{\boldsymbol{y}}|\|_2,
		\end{aligned}	
	\end{equation*}
	where $\widehat{\boldsymbol{x}}=\boldsymbol{M}^{-\frac{1}{2}} \boldsymbol{x}$ and $\widehat{\boldsymbol{y}}=\boldsymbol{M}^{-\frac{1}{2}} \boldsymbol{y}$.
	Also note that for any $c \in \mathbb{H}$,
	\begin{equation*}
		\|\boldsymbol{x}-c \cdot \boldsymbol{y}\|_2
		=\|\boldsymbol{M}^{\frac{1}{2}} (\widehat{\boldsymbol{x}}-c \cdot  \widehat{\boldsymbol{y}})\|_2
		\leq \|\boldsymbol{M}^{\frac{1}{2}}  \|_2\cdot 
		\|\widehat{\boldsymbol{x}}-c \cdot  \widehat{\boldsymbol{y}}\|_2	,
	\end{equation*}
	which implies that
	\begin{equation*}
			\operatorname{dist}_{\mathbb{H}}(\boldsymbol{x}, \boldsymbol{y})
			=\min_{c \in \mathbb{H},|c|=1}  \|\boldsymbol{x}-c \cdot \boldsymbol{y}\|_2 \\
			\leq \|\boldsymbol{M}^{\frac{1}{2}}  \|_2\cdot \min_{c \in \mathbb{H},|c|=1}  \|\widehat{\boldsymbol{x}}-c \cdot \widehat{\boldsymbol{y}}\|_2
			=\|\boldsymbol{M}^{\frac{1}{2}}  \|_2\cdot  \operatorname{dist}_{\mathbb{H}}(\widehat{\boldsymbol{x}},\widehat{ \boldsymbol{y}}).
	\end{equation*}
	Therefore,
	\begin{equation*}
		\frac{\||\boldsymbol{B} \boldsymbol{x}|-|\boldsymbol{B} \boldsymbol{y}|\|_2}
		{\operatorname{dist}_{\mathbb{H}}(\boldsymbol{x}, \boldsymbol{y})}
		=\frac{\||\boldsymbol{A} \widehat{\boldsymbol{x}}|-|\boldsymbol{A}\widehat{\boldsymbol{y}}|\|_2}
		{\operatorname{dist}_{\mathbb{H}}(\boldsymbol{x}, \boldsymbol{y})}
		\geq \frac{1}{\|\boldsymbol{M}^{\frac{1}{2}}  \|_2}\cdot  \frac{\||\boldsymbol{A} \widehat{\boldsymbol{x}}|-|\boldsymbol{A}\widehat{\boldsymbol{y}}|\|_2}
		{  \operatorname{dist}_{\mathbb{H}}(\widehat{\boldsymbol{x}},\widehat{ \boldsymbol{y}})}.
	\end{equation*}
	Since $\boldsymbol{M}$ has full rank, we have
	\begin{equation*}
			L_{\boldsymbol{B}}  
			=\inf _{\substack{\boldsymbol{x}, \boldsymbol{y} \in \mathbb{H}^d \\ \operatorname{dist}_{\mathbb{H}}(\boldsymbol{x}, \boldsymbol{y}) \neq 0}}
			\frac{\||\boldsymbol{B} \boldsymbol{x}|-|\boldsymbol{B} \boldsymbol{y}|\|_2}
			{\operatorname{dist}_{\mathbb{H}}(\boldsymbol{x}, \boldsymbol{y})} \\
			\geq \frac{1}{\|\boldsymbol{M}^{\frac{1}{2}}  \|_2}\cdot 
			\inf _{\substack{\widehat{\boldsymbol{x}}, \widehat{\boldsymbol{y}} \in \mathbb{H}^d \\ \operatorname{dist}_{\mathbb{H}}(\widehat{\boldsymbol{x}}, \widehat{\boldsymbol{y}}) \neq 0}}
			\frac{\||\boldsymbol{A} \widehat{\boldsymbol{x}}|-|\boldsymbol{A}\widehat{\boldsymbol{y}}|\|_2}
			{  \operatorname{dist}_{\mathbb{H}}(\widehat{\boldsymbol{x}},\widehat{ \boldsymbol{y}})}
			=\frac{L_{\boldsymbol{A}} }{\|\boldsymbol{M}^{\frac{1}{2}}  \|_2}.
	\end{equation*}
	Note that ${\boldsymbol{B}}^*{\boldsymbol{B}}=\boldsymbol{M}^{-\frac{1}{2}} \boldsymbol{A}^*\boldsymbol{A} \boldsymbol{M}^{-\frac{1}{2}}=\boldsymbol{I}_d$ and $\|\boldsymbol{M}^{\frac{1}{2}}\|_2=\|\boldsymbol{A}\|_2$, so we have $U_{\boldsymbol{B}} =\|\boldsymbol{B}\|_2=1$ and
	\begin{equation*}
		\beta_{{\boldsymbol{B}}} 
		=\frac{U _{\boldsymbol{B}}}{L _{\boldsymbol{B}}}
		\leq \frac{\|\boldsymbol{M}^{\frac{1}{2}}\|_2 }{L _{\boldsymbol{A}}}
		=\frac{\|\boldsymbol{A}\|_2 }{L _{\boldsymbol{A}}}
		=\frac{U _{\boldsymbol{A}} }{L _{\boldsymbol{A}}}
		=\beta_{{\boldsymbol{A}}} .
	\end{equation*}
	We arrive at our conclusion.
\end{proof}

\subsection{The perimeter-maximizing isodiametric problem}




To prove our main stability results for phase retrieval, we introduce the perimeter-maximizing isodiametric problem, first studied by Reinhardt (1922) and later developed in a substantial literature (e.g., \cite{reinhardt1922extremale,vincze1950geometrical,datta1997discrete,gashkov1985inequalities,gashkov2007inequalities,mossinghoff2011enumerating,bingane2021maximal}).
A convex polygon in $\mathbb{R}^2$ with $m$ edges is called a convex $m$-gon. 
The diameter of a polygon $P$, denoted by $\operatorname{diam}( {P})$, is the largest distance between any pair of its vertices, i.e., $\operatorname{diam}( {P}) :=  \max_{\boldsymbol{x},\boldsymbol{y} \,\in  {P}}{\|\boldsymbol{x}-\boldsymbol{y}\|_2}$. 
The perimeter of a polygon $P$, denoted by $\operatorname{perim}( {P})$, is the sum of the length of its edges.
Denote $r({P})$ as the diameter-to-perimeter ratio of a polygon  $ {P} $, i.e., 
\begin{equation}\label{eq:rp}
	r(P):=\frac{\mathrm{diam}(P)}{\mathrm{perim}(P)}.
\end{equation}
The perimeter-maximizing isodiametric problem asks: 
\begin{problem}\label{pr1}
	Among all convex $m$-gons with fixed diameter, which ones have the maximal perimeter? 
	Equivalently, which convex $m$-gons minimize the diameter-to-perimeter ratio $r(P)$?	
\end{problem}

For clarity, we introduce the following definitions regarding polygons, which are used consistently throughout this paper.
\begin{definition}\label{de:polygon}
	\begin{enumerate}
		\item[\rm (i)] Optimal polygon: A convex $m$-gon is defined as optimal if it is a solution to Problem \ref{pr1}.
		We say that the solution to Problem \ref{pr1} is unique if it is unique up to scaling, translations, rotations, reflections, or any combination of these transformations.
		\item[\rm (ii)] Equilateral polygon: A polygon is called equilateral if all its edges have equal length.
		\item[\rm (iii)] Regular $m$-gon: A regular $m$-gon, denoted as $P_m$, is an equilateral polygon with equal interior angles.
		\item[\rm (iv)] Strictly convex polygon: A convex $m$-gon is called strictly convex if all its interior angles are strictly less than $\pi$.
		We use $\mathcal{P}_m$ to denote the set of all strictly convex $m$-gons in $\mathbb{R}^2$.
		\item[\rm (v)] Edge vector: Assume the vertices of a convex $m$-gon are ordered counterclockwise. For each edge, the edge vector is defined by subtracting the coordinates of its starting vertex from those of its ending vertex. 
		The edge set of the polygon is the collection of all such edge vectors.
	\end{enumerate}
\end{definition}


It is known that the optimal convex $m$-gon must be strictly convex  \cite{reinhardt1922extremale}. 
Theorem \ref{knownresults} summarizes the existing results for the perimeter-maximizing isodiametric problem.

\begin{theorem}\label{knownresults}
	\begin{enumerate}
		\item[\rm (i)]  {\rm \cite{reinhardt1922extremale, datta1997discrete, gashkov2007inequalities,gashkov1985inequalities,vincze1950geometrical,mossinghoff2011enumerating}} Let $m \geq 3$ be a positive integer that has an odd factor. 
		Then 
		\begin{equation*}
			\min_{ {P} \in \mathcal{P}_m} r( {P})=
			\frac{1}{2m \cdot \sin \frac{\pi}{2 m}},
		\end{equation*}
		where equality is achieved by finitely many equilateral strictly convex $m$-gons. 
		
		\item[\rm (ii)]  {\rm\cite{reinhardt1922extremale, datta1997discrete, gashkov2007inequalities,gashkov1985inequalities}} Let $m = 2^s$, where $s$ is an integer and $s \geq 2$.
		Then  
		\begin{equation*}
			\min_{ {P} \in \mathcal{P}_m} r( {P}) > \frac{1}{2m \cdot \sin \frac{\pi}{2 m}}.
		\end{equation*}
		
		\item[\rm (iii)] {\rm\cite{reinhardt1922extremale, datta1997discrete, gashkov2007inequalities,gashkov1985inequalities,vincze1950geometrical}} The regular $m$-gon  $ P_m$ is optimal if and only if $m$ is odd. Furthermore, $ {P}_m$ is uniquely optimal if and only if $m$ is a prime. 
	\end{enumerate}
\end{theorem}

When $m \geq 3$ has an odd factor, the optimal convex $m$-gons can be nicely characterized \cite{reinhardt1922extremale, vincze1950geometrical, datta1997discrete, gashkov1985inequalities, gashkov2007inequalities, mossinghoff2011enumerating}.
When $m\geq 3$ has no odd factors, i.e., $m=2^s, s\geq 2$, the problem remains largely unresolved. The characterization is known only for $m=4$ \cite{mossinghoff20061,taylor1953some,tamvakis1987perimeter} and $m=8$ \cite{griffiths1975pi,audet2007small,bingane2021maximal}.

In the following we introduce several basic properties of convex polygons, which will be utilized in later analysis.


\begin{lemma}{\rm\cite{schneider2013convex}}
	\label{diam_suppf}
	For any convex polygon $ {P} \subset \mathbb{R}^2$, we have 
	\[
	\operatorname{diam}( {P}) = \max_{\boldsymbol{u} \in \mathbb{S}^1} 
	h( {P},\boldsymbol{u})+h( {P},-\boldsymbol{u}),
	\]
	where $h( {P},\boldsymbol{z}) := \max_{\boldsymbol{x} \in  {P}}\langle\boldsymbol{x},\boldsymbol{z} \rangle$ for any $\boldsymbol{z}\in \mathbb{S}^1$.
\end{lemma}

\begin{lemma}
	\label{tight_gon-form}
	Assume that $E=\{\boldsymbol{e}_1,\dotsc,\boldsymbol{e}_{m}\}\subset \mathbb{R}^2$ is the edge set of a convex $m$-gon $ {P} \subset \mathbb{R}^2$.
	Then 
	\begin{equation}
		\label{R}
		r( {P}) = \frac{\underset{\boldsymbol{u} \in \mathbb{S}^1}{\max}\,\sum_{j=1}^{m} {|\langle \boldsymbol{e}_{j},\boldsymbol{u} \rangle|}}{2\sum_{j=1}^{m}\|\boldsymbol{e}_j\|_2}.
	\end{equation}
	
\end{lemma}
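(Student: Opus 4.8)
The plan is to prove the formula in Lemma \ref{tight_gon-form} by computing $\operatorname{diam}(P)$ and $\operatorname{perim}(P)$ separately in terms of the edge set $E=\{\boldsymbol{e}_1,\dotsc,\boldsymbol{e}_m\}$ and then forming their ratio $r(P)=\operatorname{diam}(P)/\operatorname{perim}(P)$. The denominator is immediate: since the perimeter is the sum of the edge lengths, we have $\operatorname{perim}(P)=\sum_{j=1}^m\|\boldsymbol{e}_j\|_2$, which accounts for the factor $\sum_{j=1}^m\|\boldsymbol{e}_j\|_2$ in \eqref{R}. The whole content of the lemma therefore lies in showing that
\[
\operatorname{diam}(P)=\tfrac12\,\max_{\boldsymbol{u}\in\mathbb{S}^1}\sum_{j=1}^m|\langle\boldsymbol{e}_j,\boldsymbol{u}\rangle|.
\]

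By Lemma \ref{diam_suppf} we already know $\operatorname{diam}(P)=\max_{\boldsymbol{u}\in\mathbb{S}^1}w(P,\boldsymbol{u})$, so it suffices to prove the pointwise identity
\[
w(P,\boldsymbol{u})=h(P,\boldsymbol{u})+h(P,-\boldsymbol{u})=\tfrac12\sum_{j=1}^m|\langle\boldsymbol{e}_j,\boldsymbol{u}\rangle|
\]
for each fixed direction $\boldsymbol{u}\in\mathbb{S}^1$. First I would set up coordinates on the boundary: label the vertices $\boldsymbol{v}_0,\boldsymbol{v}_1,\dotsc,\boldsymbol{v}_m=\boldsymbol{v}_0$ counterclockwise so that $\boldsymbol{e}_j=\boldsymbol{v}_j-\boldsymbol{v}_{j-1}$, and observe that traversing the boundary once gives $\langle\boldsymbol{v}_k,\boldsymbol{u}\rangle-\langle\boldsymbol{v}_0,\boldsymbol{u}\rangle=\sum_{j=1}^k\langle\boldsymbol{e}_j,\boldsymbol{u}\rangle$. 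The key geometric fact to exploit is that the linear functional $\langle\,\cdot\,,\boldsymbol{u}\rangle$ attains its maximum $h(P,\boldsymbol{u})$ and minimum $-h(P,-\boldsymbol{u})$ at vertices of the polygon, and that because $P$ is convex the sequence of partial sums $\langle\boldsymbol{v}_k,\boldsymbol{u}\rangle$ is unimodal: it increases monotonically from the minimizing vertex up to the maximizing vertex, then decreases monotonically back. Consequently the edge projections $\langle\boldsymbol{e}_j,\boldsymbol{u}\rangle$ are nonnegative on the ``ascending'' arc and nonpositive on the ``descending'' arc.

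With that unimodality established, the sum $\sum_{j=1}^m|\langle\boldsymbol{e}_j,\boldsymbol{u}\rangle|$ splits into two groups: the positive contributions telescope to $h(P,\boldsymbol{u})-(-h(P,-\boldsymbol{u}))=h(P,\boldsymbol{u})+h(P,-\boldsymbol{u})$, and the negative contributions, in absolute value, telescope to the same quantity, since the functional returns from its maximum back down to its minimum. Adding the two equal groups yields $\sum_{j=1}^m|\langle\boldsymbol{e}_j,\boldsymbol{u}\rangle|=2\,(h(P,\boldsymbol{u})+h(P,-\boldsymbol{u}))=2\,w(P,\boldsymbol{u})$, which is exactly the desired pointwise identity. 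Taking the maximum over $\boldsymbol{u}\in\mathbb{S}^1$ and invoking Lemma \ref{diam_suppf} then gives the numerator of \eqref{R}, and combining with the perimeter computation completes the proof.

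I expect the main obstacle to be making the unimodality argument fully rigorous, in particular handling directions $\boldsymbol{u}$ for which an edge is orthogonal to $\boldsymbol{u}$ (so $\langle\boldsymbol{e}_j,\boldsymbol{u}\rangle=0$ and the maximizing or minimizing face is an entire edge rather than a single vertex) and confirming that the ascending/descending split is well defined and that no edge is miscounted at the transition vertices. These degenerate cases do not change the final equality because zero terms neither affect the sum nor the telescoping, but they require care in the bookkeeping; a clean way to sidestep edge cases is to note that $\sum_{j=1}^m\langle\boldsymbol{e}_j,\boldsymbol{u}\rangle=0$ (the edges form a closed loop), so the positive and negative parts of the projections are automatically equal in absolute value, and each equals $w(P,\boldsymbol{u})$ by the extremal characterization of the support function.
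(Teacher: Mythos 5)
Your proposal is correct and follows essentially the same route as the paper's own proof: the perimeter is the sum of edge lengths, the diameter is handled via Lemma \ref{diam_suppf}, and the identity $w(P,\boldsymbol{u})=\tfrac12\sum_{j=1}^m|\langle\boldsymbol{e}_j,\boldsymbol{u}\rangle|$ is obtained from the sign pattern of the edge projections (your unimodality) together with the zero-sum telescoping. The unimodality fact you defer as the ``main obstacle'' is exactly the paper's claim, which it proves by a contradiction argument using strict convexity and then extends to general convex $m$-gons by merging parallel edges.
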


\begin{proof}
	We first consider the case when $P$ is strongly convex.
	Since $r(P)=\frac{\mathrm{diam}(P)}{\mathrm{perim}(P)}$ and the perimeter $\mathrm{perim}(P)$ equals $\sum_{j=1}^{m}\|\boldsymbol{e}_j\|_2$, 
	it is enough to prove 
	\begin{equation}
		\label{forlatter}
		\operatorname{diam} ( P)= \frac{1}{2}\max_{\boldsymbol{u} \in \mathbb{S}^1}\,\sum_{j=1}^{m} {|\langle \boldsymbol{e}_{j},\boldsymbol{u} \rangle|}.
	\end{equation}
	
	Without loss of generality, we assume that $\boldsymbol{e}_1,\dotsc,\boldsymbol{e}_{m}$ are listed in counterclockwise order. 
	Let $\boldsymbol{v}_1,\dotsc,\boldsymbol{v}_{m}$ be the vertices of $ P$ such that $\boldsymbol{e}_j = \boldsymbol{v}_{j+1} - \boldsymbol{v}_{j}$ for each $j \in [m]$, where we denote $\boldsymbol{v}_{m+1} = \boldsymbol{v}_{1}$. For any $\boldsymbol{u} \in \mathbb{S}^1$, since $ P$ is convex, there exist ${j_1},\,j_2 \in [m]$ depending on $\boldsymbol{u}$ such that 
	\begin{equation}
		\label{x1x2}
		\boldsymbol{v}_{j_1} \in \operatorname*{argmin}_{\boldsymbol{x} \in  P}\langle\boldsymbol{x},\boldsymbol{u} \rangle
		\quad\text{and}\quad
		\boldsymbol{v}_{j_2} \in  \operatorname*{argmax}_{\boldsymbol{x} \in  P}\langle\boldsymbol{x},\boldsymbol{u} \rangle.
	\end{equation}
	Assume without loss of generality that ${j_1}<j_2$. 
	By Lemma \ref{diam_suppf}, we have
	\begin{equation}
		\label{diam1}
		\begin{aligned}
			\operatorname{diam} ( P) 
			& = \max_{\boldsymbol{u} \in \mathbb{S}^1} h( P,\boldsymbol{u})+h( P,-\boldsymbol{u})   = \max_{\boldsymbol{u} \in \mathbb{S}^1} (\max_{\boldsymbol{x} \in  {P}}\langle\boldsymbol{x},\boldsymbol{u} \rangle+\max_{\boldsymbol{x} \in  {P}}\langle\boldsymbol{x},-\boldsymbol{u} \rangle) \\ 
			& = \max_{\boldsymbol{u} \in \mathbb{S}^1} (\max_{\boldsymbol{x} \in  {P}}\langle\boldsymbol{x},\boldsymbol{u} \rangle-\min_{\boldsymbol{x} \in  {P}}\langle\boldsymbol{x},\boldsymbol{u} \rangle)  = \max_{\boldsymbol{u} \in \mathbb{S}^1} \, { \langle\boldsymbol{v}_{j_2}-\boldsymbol{v}_{j_1},\boldsymbol{u} \rangle }.
		\end{aligned}
	\end{equation}
	We claim that
	\begin{equation}
		\label{claim}
		\langle\boldsymbol{e}_{j},\boldsymbol{u} \rangle
		\begin{cases}
			\geq 0 & \text{if $j_1 \leq j \leq j_2-1$,}\\
			\leq 0 & \text{else.}	
		\end{cases}
	\end{equation}
	Then we have
	\begin{equation}
		\label{d1}
			\langle \boldsymbol{v}_{j_2}-\boldsymbol{v}_{j_1},\boldsymbol{u} \rangle 
			= \langle  \sum_{j=j_1}^{j_2-1}{\boldsymbol{e}_{j}},\boldsymbol{u} \rangle 
			\overset{(a)}= \langle \frac{1}{2}\sum_{j=j_1}^{j_2-1}{\boldsymbol{e}_{j}} - \frac{1}{2}\sum_{j=1}^{j_1-1}{\boldsymbol{e}_{j}}-\frac{1}{2}\sum_{j=j_2}^{m}{\boldsymbol{e}_{j}},\boldsymbol{u} \rangle \\
			\overset{(b)}=\frac{1}{2}\sum_{j=1}^{m} {|\langle \boldsymbol{e}_{j},\boldsymbol{u} \rangle|},
	\end{equation}
	where $(a)$ follows from $\sum_{j=1}^{m}\boldsymbol{e}_{j}=\boldsymbol{0}$ and $(b)$ follows from \eqref{claim}.
	Substituting \eqref{d1} into \eqref{diam1}, we arrive at \eqref{forlatter}.

	It remains to prove \eqref{claim}. 
	We first prove $\langle\boldsymbol{e}_{j},\boldsymbol{u} \rangle\geq 0$ for each $j_1\leq j\leq j_2-1$. 
	By the definition of $j_1$ and $j_2$, we have 
	\begin{equation} \label{tiny1}
		\begin{aligned}
			\langle\boldsymbol{e}_{j_1},\boldsymbol{u} \rangle
			=\langle\boldsymbol{v}_{j_1+1},\boldsymbol{u} \rangle - \langle\boldsymbol{v}_{j_1},\boldsymbol{u} \rangle
			\geq 0
			\quad\text{and} \quad
			\langle\boldsymbol{e}_{j_2-1},\boldsymbol{u} \rangle
			=\langle\boldsymbol{v}_{j_2},\boldsymbol{u} \rangle -\langle\boldsymbol{v}_{j_2-1},\boldsymbol{u} \rangle \geq 0. 
		\end{aligned}
	\end{equation}
	Suppose, for contradiction, that there exists an integer $s$ with $ j_1+1\leq s\leq j_2-2$ such that $\langle\boldsymbol{e}_{s},\boldsymbol{u} \rangle< 0$. 
	Then there must exist an integer $t \in \{s+1,\dotsc,j_2-1\}$ satisfying 
	\begin{equation}
		\label{vt}
		\langle \boldsymbol{e}_{t-1},\boldsymbol{u} \rangle < 0,
		\quad\text{and} \quad
		\langle \boldsymbol{e}_{t},\boldsymbol{u} \rangle \geq 0,
	\end{equation}
	since otherwise we have $\langle \boldsymbol{e}_{i},\boldsymbol{u} \rangle < 0$ for each $s+1\leq i\leq j_2-1$, which contradicts with \eqref{tiny1}.
	Since $P$ is strictly convex, for any $\boldsymbol{x} \in P$ there exist $\gamma_1,\gamma_2\geq 0$ depending on $\boldsymbol{x}$ such that
	\[
	\boldsymbol{x} - \boldsymbol{v}_{t} = \gamma_1\cdot (-\boldsymbol{e}_{t-1})+\gamma_2\cdot \boldsymbol{e}_{t}.
	\]
	Taking inner products with $\boldsymbol{u}$ and using \eqref{vt}, we have
	\[
	\langle \boldsymbol{x},\boldsymbol{u} \rangle - \langle \boldsymbol{v}_{t},\boldsymbol{u} \rangle 
	= \gamma_1\langle -\boldsymbol{e}_{t-1},\boldsymbol{u} \rangle+\gamma_2\langle \boldsymbol{e}_{t},\boldsymbol{u} \rangle \geq 0, \quad\forall \boldsymbol{x} \in P,
	\]
	implying that
	\begin{equation}
		\label{vt_mini}
		\langle \boldsymbol{v}_{j_1},\boldsymbol{u} \rangle=\langle \boldsymbol{v}_{t},\boldsymbol{u} \rangle = {\min}_{\boldsymbol{x} \in  P}\langle\boldsymbol{x},\boldsymbol{u} \rangle.
	\end{equation}
	It is well known that a linear function on a convex polygon attains its extrema at either a vertex or along an edge of the polygon \cite{schrijver1998theory}.
	Hence, we must have $t=j_1+1$.
	This contradicts with $t\geq s+1\geq j_1+2$.
	Therefore, we have $\langle\boldsymbol{e}_{j},\boldsymbol{u} \rangle\geq 0$ for each $j_1\leq j\leq j_2-1$.
	Similarly, we can prove that $\langle\boldsymbol{e}_{j},\boldsymbol{u} \rangle\leq 0$ for each $j<j_1$ and $j\geq j_2$. Hence, we arrive at \eqref{claim}.
	
	We next consider the case when $P$ is not strictly convex.
	By replacing all vectors in the same direction with their sum, we obtain a strictly convex $k$-gon $ \widehat{P}\in\mathcal{P}_k$ with the edge set  $\widehat{E}=\{\widehat{\boldsymbol{e}}_1,\dotsc,\widehat{\boldsymbol{e}}_{k}\}$, where $k$ is an integer less than $m$.
	Note that $\sum_{j=1}^{m} \|\boldsymbol{e}_{j}\|_2=\sum_{j=1}^{k} \|\widehat{\boldsymbol{e}}_{j}\|_2$ and $\sum_{j=1}^m |\langle \boldsymbol{e}_j, \boldsymbol{u} \rangle |=\sum_{j=1}^k |\langle \widehat{\boldsymbol{e}}_j, \boldsymbol{u} \rangle |$ for any $\boldsymbol{u}\in\mathbb{S}^1$. 
	Hence, we have
	\begin{equation*}
		r(P)=r(\widehat{P})
		=\frac{\underset{\boldsymbol{u} \in \mathbb{S}^1}{\max}\,\sum_{j=1}^{m} {|\langle \widehat{\boldsymbol{e}}_{j},\boldsymbol{u} \rangle|}}{2\sum_{j=1}^{m}\|\widehat{\boldsymbol{e}}_j\|_2}
		=\frac{\underset{\boldsymbol{u} \in \mathbb{S}^1}{\max}\,\sum_{j=1}^{m} {|\langle \boldsymbol{e}_{j},\boldsymbol{u} \rangle|}}{2\sum_{j=1}^{m}\|\boldsymbol{e}_j\|_2}.
	\end{equation*}
	Hence, we arrive at our conclusion. This completes the proof.
\end{proof}

The following theorem discovered by Minkowski indicates that a zero-sum vector set in $\mathbb{R}^2$ with pairwise distinct directions uniquely determines a strictly convex polygon.
\begin{theorem}{\rm\cite{klain2004minkowski}}
	\label{Minkowski}
	Suppose $\boldsymbol{u}_1,\dotsc,\boldsymbol{u}_m \in \mathbb{R}^2$ are unit vectors that span $\mathbb{R}^2$, and suppose that $\alpha_1,\dotsc,\alpha_m >0$. Then there exists a convex $m$-gon $ {P}$ in $\mathbb{R}^2$, having edge unit outer normals $\boldsymbol{u}_1,\dotsc,\boldsymbol{u}_m$ and corresponding edge lengths $\alpha_1,\dotsc,\alpha_m >0$, if and only if
	$\sum_{j=1}^{m}\alpha_j\boldsymbol{u}_j = \boldsymbol{0}.$
	Moreover, such a polygon $ {P}$ is unique up to translation.
\end{theorem}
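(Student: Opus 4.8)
The plan is to identify a convex polygon with the cyclically ordered family of its edge vectors, and to pass between edge normals and edge vectors by a single fixed rotation. Let $R$ denote the counterclockwise rotation of $\mathbb{R}^2$ by angle $\pi/2$. If a convex $m$-gon is traversed counterclockwise, then the edge whose outer normal is $\boldsymbol{u}_j$ has direction $R\boldsymbol{u}_j$, so its edge vector is $\boldsymbol{e}_j = \alpha_j R\boldsymbol{u}_j$. Since $R$ is a linear isomorphism, $\sum_{j=1}^m \alpha_j \boldsymbol{u}_j = \boldsymbol{0}$ holds if and only if $\sum_{j=1}^m \boldsymbol{e}_j = \boldsymbol{0}$. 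This already yields \textbf{necessity}: the edge vectors of any closed polygon sum to the zero vector, so $\sum_j \boldsymbol{e}_j = \boldsymbol{0}$ and therefore $\sum_j \alpha_j \boldsymbol{u}_j = \boldsymbol{0}$.

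For \textbf{sufficiency}, suppose $\sum_j \alpha_j \boldsymbol{u}_j = \boldsymbol{0}$. Writing $\boldsymbol{u}_j = (\cos\theta_j,\sin\theta_j)^T$, reorder the indices so that $\theta_1 < \theta_2 < \cdots < \theta_m$ in $[0,2\pi)$; this is possible since the $\boldsymbol{u}_j$ are distinct. Fix an arbitrary base point $\boldsymbol{v}_1\in\mathbb{R}^2$ and define $\boldsymbol{v}_{j+1} := \boldsymbol{v}_j + \alpha_j R\boldsymbol{u}_j$ for $j\in[m]$. The hypothesis $\sum_j \alpha_j \boldsymbol{u}_j = \boldsymbol{0}$ is exactly the closing condition $\boldsymbol{v}_{m+1}=\boldsymbol{v}_1$, so $\boldsymbol{v}_1,\dots,\boldsymbol{v}_m$ form a closed polygonal loop whose $j$-th edge has direction $R\boldsymbol{u}_j$ and length $\alpha_j$ (because $\alpha_j>0$). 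What remains is to verify that this loop bounds a strictly convex $m$-gon whose outer normals are precisely the $\boldsymbol{u}_j$.

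The main obstacle, \textbf{strict convexity and simplicity}, is handled through the angular gaps. The exterior angle at $\boldsymbol{v}_{j+1}$ equals $\theta_{j+1}-\theta_j$ for $1\le j\le m-1$, while the wrap-around exterior angle equals $\theta_1+2\pi-\theta_m$; these are all strictly positive and sum to $2\pi$. To obtain strict convexity I must rule out any gap $\ge \pi$, equivalently show that the $\boldsymbol{u}_j$ lie in no closed half-plane through the origin. This is exactly where the zero-sum hypothesis (with positive coefficients) and the spanning assumption enter: after normalizing, $\boldsymbol{0}=\sum_j (\alpha_j/\sum_i\alpha_i)\,\boldsymbol{u}_j$ exhibits the origin as a convex combination of the $\boldsymbol{u}_j$ with strictly positive weights, so the origin lies in the relative interior of $\mathrm{conv}\{\boldsymbol{u}_1,\dots,\boldsymbol{u}_m\}$; since the $\boldsymbol{u}_j$ span $\mathbb{R}^2$ this hull is two-dimensional and the origin is a genuine interior point, which forces every angular gap to be strictly less than $\pi$. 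Hence each exterior angle lies in $(0,\pi)$, the tangent direction turns monotonically with total turning $2\pi$, and the loop is a simple, strictly convex polygon having $\boldsymbol{u}_j$ as the outer normal of its $j$-th edge.

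Finally, for \textbf{uniqueness up to translation}, observe that in any convex polygon realizing the data the edge with outer normal $\boldsymbol{u}_j$ must equal $\alpha_j R\boldsymbol{u}_j$, and convexity forces these edges to appear in increasing order of $\theta_j$; thus the full family of edge vectors, together with its cyclic order, is determined by the data. Consequently the vertices are pinned down by the single free choice of $\boldsymbol{v}_1$, which is precisely a translation. This delivers both existence and uniqueness up to translation. I expect the half-plane / interior-point argument of the third paragraph to be the only genuinely nontrivial step, the remaining parts being bookkeeping about the rotation $R$ and partial sums.
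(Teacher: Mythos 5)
The paper offers no proof of Theorem~\ref{Minkowski} to compare against: it is quoted as a known result from \cite{klain2004minkowski} (which proves the $n$-dimensional Minkowski existence theorem) and used as a black box, so your argument has to stand on its own. Your route is the natural planar one, and parts of it are solid: necessity via the closing of the edge vectors, uniqueness via the observation that convexity forces the (determined) edge vectors $\alpha_j R\boldsymbol{u}_j$ to appear in increasing angular order, and the passage between normals and edge vectors by a fixed rotation is exactly the bookkeeping the paper itself performs in Corollary~\ref{Minkowski2}. The interior-point/half-plane argument ruling out angular gaps $\geq\pi$ is also correct, with one small wrinkle: spanning alone does not make $\mathrm{conv}\{\boldsymbol{u}_1,\dots,\boldsymbol{u}_m\}$ two-dimensional (two distinct unit vectors span $\mathbb{R}^2$ yet have a segment as hull); you need to combine spanning with the fact, already established, that the origin lies in the hull, so that a one-dimensional hull would force all $\boldsymbol{u}_j\in\{\boldsymbol{w},-\boldsymbol{w}\}$, contradicting spanning.

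The genuine gap is the last inference of the sufficiency argument: from ``each exterior angle lies in $(0,\pi)$, the direction turns monotonically, and the total turning is $2\pi$'' you conclude that the loop ``is a simple, strictly convex polygon.'' Simplicity --- that the closed chain does not cross itself --- is precisely the nontrivial content of the existence direction, not bookkeeping; a locally convex closed polygonal loop is simple only because its total turning is exactly $2\pi$, and that implication requires proof. The standard repair in your setting: with the indices sorted so that $\theta_1<\cdots<\theta_m$, fix $j$ and set $h_j:=\langle \boldsymbol{v}_j,\boldsymbol{u}_j\rangle=\langle \boldsymbol{v}_{j+1},\boldsymbol{u}_j\rangle$; the increments $\langle \boldsymbol{e}_k,\boldsymbol{u}_j\rangle=\alpha_k\sin(\theta_j-\theta_k)$, taken as $k$ runs cyclically from $j+1$ back to $j$, are first nonpositive (while $\theta_k-\theta_j\in(0,\pi]$ mod $2\pi$) and then nonnegative, so the heights $\langle \boldsymbol{v}_k,\boldsymbol{u}_j\rangle$ decrease and then increase back to $h_j$, whence $\langle \boldsymbol{v}_k,\boldsymbol{u}_j\rangle\leq h_j$ for every $k$. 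Thus the whole loop lies in the convex polygon $Q=\bigcap_{j}\{\boldsymbol{x}:\langle \boldsymbol{x},\boldsymbol{u}_j\rangle\leq h_j\}$ with its $j$-th segment on the supporting line $\{\langle \boldsymbol{x},\boldsymbol{u}_j\rangle= h_j\}$, and one checks that these $m$ segments are exactly the edges of $Q$, so the loop equals $\partial Q$; this delivers simplicity, strict convexity, and the identification of the $\boldsymbol{u}_j$ as outer normals all at once. (This unimodal sign-pattern argument is the mirror image of the claim \eqref{claim} that the paper proves inside Lemma~\ref{tight_gon-form}.) With that step supplied, your proof is complete and correct.
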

{
	The next statement seems to be well known, but we could not find a precise reference. For completeness, we include a proof.
}
\begin{corollary}
	\label{Minkowski2}
	Suppose $E=\{\boldsymbol{e}_1,\dotsc,\boldsymbol{e}_m\} \subset \mathbb{R}^2$.
	Then there exists a strictly convex $m$-gon $ {P}$ in $\mathbb{R}^2$, having the edge set $E$ if and only if $\boldsymbol{e}_1,\dotsc,\boldsymbol{e}_m $ are nonzero vectors with pairwise distinct directions and
	$\sum_{j=1}^{m}\boldsymbol{e}_j = \boldsymbol{0}$.
	Moreover, such a polygon $ {P}$ is unique up to translation.
\end{corollary}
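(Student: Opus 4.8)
The plan is to deduce this corollary from Minkowski's theorem (Theorem~\ref{Minkowski}) by translating between the edge vectors of a counterclockwise-oriented strictly convex polygon and its unit outer normals. Let $\boldsymbol{R}$ denote the clockwise rotation of $\mathbb{R}^2$ by $\pi/2$, i.e.\ $\boldsymbol{R}(x,y)^T=(y,-x)^T$. For a convex polygon whose boundary is traversed counterclockwise, the outer normal of the edge with edge vector $\boldsymbol{e}$ is exactly $\boldsymbol{R}\boldsymbol{e}$ rescaled to unit length; conversely, an edge with unit outer normal $\boldsymbol{u}$ and length $\alpha$ has edge vector $\alpha\boldsymbol{R}^{T}\boldsymbol{u}$. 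Since $\boldsymbol{R}$ is orthogonal, the map $\boldsymbol{e}\mapsto\boldsymbol{R}\boldsymbol{e}$ preserves norms, sends distinct directions to distinct directions, and satisfies $\sum_{j}\boldsymbol{R}\boldsymbol{e}_j=\boldsymbol{R}\sum_{j}\boldsymbol{e}_j$. This dictionary is precisely what makes the edge-vector formulation and the outer-normal formulation interchangeable.

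For the forward direction, suppose $P$ is a strictly convex $m$-gon with edge set $E$. Each edge has positive length, so every $\boldsymbol{e}_j$ is nonzero, and closedness of the boundary gives $\sum_{j=1}^{m}\boldsymbol{e}_j=\boldsymbol{0}$. To see that the directions are pairwise distinct, I would use that in a counterclockwise traversal of a strictly convex polygon the direction angle of the edge vector strictly increases at each vertex, because every interior angle is less than $\pi$ and hence every exterior angle is positive; since the total increase around the polygon is exactly $2\pi$, the $m$ direction angles are distinct modulo $2\pi$, so no two edges share a direction.

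For the converse, given nonzero vectors $\boldsymbol{e}_1,\dots,\boldsymbol{e}_m$ with pairwise distinct directions and $\sum_j\boldsymbol{e}_j=\boldsymbol{0}$, I would set $\alpha_j:=\|\boldsymbol{e}_j\|_2>0$ and $\boldsymbol{u}_j:=\boldsymbol{R}\boldsymbol{e}_j/\alpha_j$. The $\boldsymbol{u}_j$ are then distinct unit vectors, and since $m\geq 3$ while a single line through the origin carries only two distinct directions, the $\boldsymbol{e}_j$ cannot all be collinear, so the $\boldsymbol{u}_j$ span $\mathbb{R}^2$. Moreover $\sum_j\alpha_j\boldsymbol{u}_j=\boldsymbol{R}\sum_j\boldsymbol{e}_j=\boldsymbol{0}$, so Theorem~\ref{Minkowski} yields a strictly convex $m$-gon $P$ with outer normals $\boldsymbol{u}_j$ and edge lengths $\alpha_j$, unique up to translation; its edge vectors are $\alpha_j\boldsymbol{R}^{T}\boldsymbol{u}_j=\boldsymbol{e}_j$, so $E$ is the edge set of $P$, and the uniqueness-up-to-translation statement transfers verbatim. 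The main point to get right is the geometric correspondence between edge vectors and outer normals---in particular the orientation of $\boldsymbol{R}$ and the check that ``pairwise distinct directions'' for the $\boldsymbol{e}_j$ is equivalent to the distinct-and-spanning hypothesis on the $\boldsymbol{u}_j$ demanded by Minkowski's theorem.
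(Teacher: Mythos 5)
Your proposal is correct and follows essentially the same route as the paper: both directions reduce to Theorem~\ref{Minkowski} via the rotation-by-$\pi/2$ dictionary between edge vectors and unit outer normals, and the pairwise-distinct-directions claim is proved by the same strictly-increasing-angle argument for a counterclockwise traversal of a strictly convex polygon. If anything, you are slightly more careful than the paper about the orientation of the rotation (clockwise for edge vector $\to$ outer normal) and about why the normals span $\mathbb{R}^2$ when $m\geq 3$.
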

\begin{proof}
	Assume that $\boldsymbol{e}_1,\dotsc,\boldsymbol{e}_m $ are nonzero vectors with pairwise distinct directions and
	$\sum_{j=1}^{m}\boldsymbol{e}_j = \boldsymbol{0}$. For all $j \in [m]$, rotating each vector $\frac{\boldsymbol{e}_j}{\|\boldsymbol{e}_j\|_2}$ counterclockwise by $\frac{\pi}{2}$ yields $\boldsymbol{u}_j$, while preserving the zero-sum property $\sum_{j=1}^{m}\|\boldsymbol{e}_j\|_2\boldsymbol{u}_j=\boldsymbol{0}$. 
	It is easy to verify that $\boldsymbol{u}_1,\dotsc,\boldsymbol{u}_m \in \mathbb{R}^2$ are unit vectors that span $\mathbb{R}^2$. Therefore we can utilize Theorem \ref{Minkowski} 
	to obtain a convex $m$-gon ${P}$ with edge unit outer normals $\boldsymbol{u}_1,\dotsc,\boldsymbol{u}_m$ and corresponding edge lengths $\|\boldsymbol{e}_1\|_2,\dotsc,\|\boldsymbol{e}_m\|_2$, i.e., with edge vectors $\boldsymbol{e}_1,\dotsc,\boldsymbol{e}_m$. 
	Moreover, $P$ is unique, disregarding the translation of polygons. Now we prove that the convex polygon $P$ is strictly convex. Suppose, by contradiction, that $P$ is convex but not strictly convex. Then $P$ must have an interior angle of $\pi$, implying the existence of two edge vectors in the same direction. This contradicts the assumption that $\boldsymbol{e}_1,\dotsc,\boldsymbol{e}_m$ have pairwise distinct directions. Thus, $P$ is strictly convex.
	
	Conversely, assume that $ {P}$ is a strictly convex $m$-gon  with the edge set $E$. 
	Assume that $\boldsymbol{e}_1,\dotsc,\boldsymbol{e}_m$ are arranged counterclockwise.
	It is clear that $\sum_{j=1}^{m}\boldsymbol{e}_j =\boldsymbol{0}$ and each $\boldsymbol{e}_i$ is nonzero. 
	Without loss of generality, it is enough to prove that $\boldsymbol{e}_1$ and $\boldsymbol{e}_i$ have distinct directions for any $i> 1$. 
	For each $i \in \{2,\ldots,m\}$, define $\alpha_i$ as the counterclockwise angle from $\boldsymbol{e}_1$ to $\boldsymbol{e}_i$. 
	Since $P$ is strictly convex, we have $\alpha_i\in(0,2\pi)$, and $\alpha_i$ is strictly increasing as $i$ increases. 
	Thus, there does not exist an integer $i\in\{2,\ldots,m\}$  such that $\alpha_i=0$, i.e., $\boldsymbol{e}_1$ and $\boldsymbol{e}_j$ have the same direction. This completes the proof.
\end{proof}

\section{Relationship between optimal tight frames and optimal polygons: Proof of Theorem \ref{mincond}}\label{S3-relationship}

In this section, we aim to prove Theorem \ref{mincond}. 
We begin by introducing a  bijection from $\mathcal{T}$ to  $\mathbb{R}^2$.

\begin{definition}
	Let $f: \mathcal{T} \to \mathbb{R}^2$ be a map defined by 
	\begin{equation}\label{def-F-func}
		f( \boldsymbol{a} )\,\,=\,\, t^2\cdot (\cos2\phi, \sin2\phi)^T
	\end{equation}
	for any $\boldsymbol{a}= t\cdot (\cos\phi, \sin\phi)^T\in \mathcal{T}$, where $\mathcal{T}$ is defined in \eqref{def-M-set}.
	
\end{definition}

\begin{remark}
	The vector $f(\boldsymbol{a})$ is called the diagram vector associated with $\boldsymbol{a}$ \cite{han2007frames}. We now show that the map $f:\mathcal{T}\to \mathbb{R}^2$ is a bijection.
	First, it holds that $\boldsymbol{a}=\boldsymbol{0}$ if and only if $f(\boldsymbol{a})=\boldsymbol{0}$.  Second, for any nonzero vector $\boldsymbol{e}_0\in \mathbb{R}^2$, it possesses a unique representation in the form $\boldsymbol{e}_0=t_0\cdot (\cos{2\phi_0},\sin{2\phi_0})^{T}$, where $t_0> 0$ and $\phi_0\in(0,\pi]$. Given this, the unique pre-image $\boldsymbol{a}_0\in \mathcal{T}$ such that $f(\boldsymbol{a}_0)=\boldsymbol{e}_0$ is given by $\boldsymbol{a}_0=\sqrt{t_0}\cdot (\cos{\phi_0},\sin{\phi_0})^{T}$.
	Therefore, $f$ is both injective and surjective, implying it is a bijection. This ensures that the inverse map $f^{-1}:\mathbb{R}^2\to \mathcal{T}$ is well-defined, and is explicitly given by $f^{-1}(\boldsymbol{e}_0)=\boldsymbol{a}_0$.
\end{remark}


In the following theorem, we establish a bijective correspondence between optimal tight frames with $m$ vectors in $\mathbb{R}^2$ and optimal convex $m$-gons. This will play a key role in proving Theorem \ref{mincond}. 
We postpone the proof of Theorem \ref{main} to the end of this section. 

\begin{theorem}
	\label{main}
	Assume that $\mathcal{A}=\{\boldsymbol{a}_1,\ldots, \boldsymbol{a}_m\}\subset\mathcal{T}\subset {\mathbb R}^2$, where $\mathcal{T}$ is defined in \eqref{def-M-set}.
	The set $\mathcal{A}$ is an optimal tight frame
	if and only if $\{f(\boldsymbol{a}_1),\dotsc,f(\boldsymbol{a}_m)\}$ is the {edge vector set} of an optimal $m$-gon $P\in\mathcal{P}_m$, where $f(\cdot )$ is defined in \eqref{def-F-func}.
	In either case, we have
	\begin{equation*}
		\beta_{\boldsymbol{A}} = \frac{1}{\sqrt{1-2\cdot r({P})}},
	\end{equation*}
	where $\boldsymbol{A}=[\boldsymbol{a}_1,\ldots, \boldsymbol{a}_m]^T$ and $r(P)$ is defined in \eqref{eq:rp}.
\end{theorem}


\begin{remark}\label{example-m4}
	Take $m=4$ as an example. Let $P\subset\mathbb{R}^2$ be the convex $4$-gon with the edge set $\{\boldsymbol{e}_1,\boldsymbol{e}_2,\boldsymbol{e}_3,\boldsymbol{e}_4\}$, where
	$$
	\boldsymbol{e}_1=\begin{pmatrix}
		1 \\
		0
	\end{pmatrix},\quad
	\boldsymbol{e}_2=
	\begin{pmatrix}
		\cos \frac{4\pi}{3}  \\
		\sin \frac{4\pi}{3} 
	\end{pmatrix},\quad
	\boldsymbol{e}_3=
	2\sin \frac{\pi}{12}  \begin{pmatrix}
		\cos \frac{3\pi}{4}  \\
		\sin \frac{3\pi}{4} 
	\end{pmatrix},\quad
	\boldsymbol{e}_4=
	2\sin \frac{\pi}{12}  \begin{pmatrix}
		\cos \frac{7\pi}{12}  \\
		\sin \frac{7\pi}{12} 
	\end{pmatrix}.
	$$
	It is known that $P$ is the unique optimal $4$-gon, and we have $r(P)=\frac{1}{2+\sqrt{6}-\sqrt{2}}$ \cite{taylor1953some,tamvakis1987perimeter}.
	By Theorem \ref{main} we see that the matrix $\boldsymbol{A}=[f^{-1}(\boldsymbol{e}_1),\ldots, f^{-1}(\boldsymbol{e}_4)]^T\in\mathbb{R}^{4\times 2}$ has the minimal condition number $\beta_{\boldsymbol{A}}=\sqrt{1+\frac{\sqrt{2}}{2}+\frac{\sqrt{6}}{2}}\approx 1.71$.
	The optimal convex $4$-gon $P$ and the rows of $\boldsymbol{A}$ are both illustrated in Figure \ref{fig:quadrilateral}. 
\end{remark}

\begin{figure}[htbp]
	\centering
	
	\begin{minipage}[b]{0.48\textwidth}
		\centering
		\includegraphics[width=0.62\linewidth]{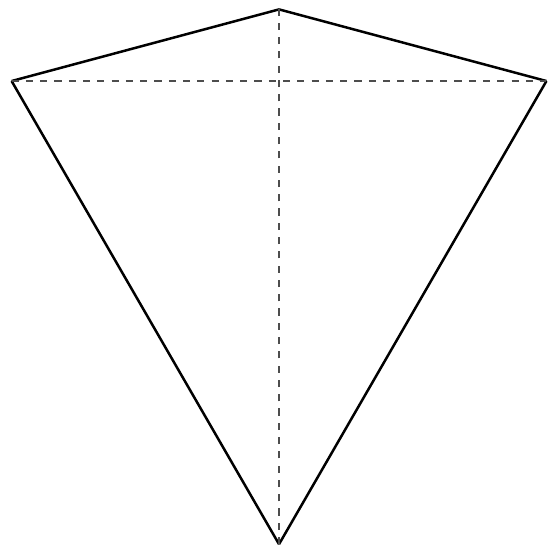}
		\par\smallskip
		{(a)}
	\end{minipage}
	\hfill
	\begin{minipage}[b]{0.48\textwidth}
		\centering
		\includegraphics[width=0.86\linewidth]{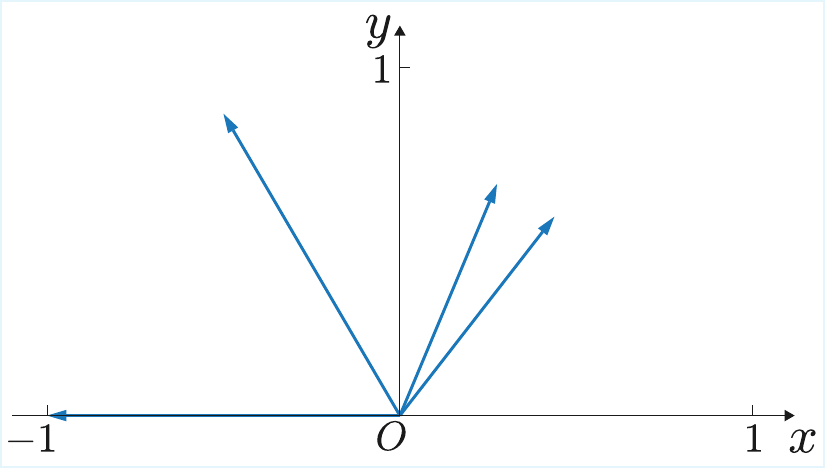}
		\par\smallskip
		{(b)}
	\end{minipage}
	
	\caption{(a): Optimal quadrilateral. The dashed lines connect pairs of vertices at maximal distance; (b): Optimal frame consisting of $4$ vectors in $\mathcal{T}$.}
	\label{fig:quadrilateral}
\end{figure}



With the help of Lemma \ref{optistight2}, Theorem \ref{knownresults} and Theorem \ref{main}, we present a proof of Theorem \ref{mincond}.


\begin{proof}[Proof of Theorem \ref{mincond}]
	By Lemma \ref{optistight2} we have
	\begin{equation*}
		\min _{\boldsymbol{A} \in \mathbb{R}^{m \times 2}} 
		\beta_{\boldsymbol{A}} 
		=\min _{\boldsymbol{A} \in \mathbb{R}^{m \times 2}, \boldsymbol{A}^T\boldsymbol{A}=\boldsymbol{I}_2} 
		\beta_{\boldsymbol{A}} .
	\end{equation*}
	Let $\boldsymbol{B} \in \mathbb{R}^{m \times 2}$ be such that $\boldsymbol{B}^T\boldsymbol{B}=\boldsymbol{I}_2$ and the condition number of $\boldsymbol{B}$ is minimal.
	By Theorem \ref{main}, there is an optimal convex $m$-gon ${P} \subset \mathbb{R}^2$ such that
	\begin{equation}\label{eq:xu10}
		\min _{\boldsymbol{A} \in \mathbb{R}^{m \times 2}} 
		\beta_{\boldsymbol{A}} 
		=\beta_{\boldsymbol{B}}
		=\frac{1}{\sqrt{1-2\cdot r(P)}}.
	\end{equation}
	
	(i) For any integers $m\geq 3$ which has an odd factor, by Theorem \ref{knownresults} (i) we have $r(P) = \frac{1}{2m \cdot \sin \frac{\pi}{2 m}}$.
	Hence, combining with \eqref{eq:xu10}, we obtain 
	\[
	\min _{\boldsymbol{A} \in \mathbb{R}^{m \times 2}} 
	\beta_{\boldsymbol{A}} = \frac{1}{\sqrt{1-\frac{1}{m \cdot \sin \frac{\pi}{2 m}}}}.
	\]
	
	(ii) For $m=2^s$ with $s\geq 2$, by Theorem \ref{knownresults} (ii) we have $r(P) > \frac{1}{2m \cdot \sin \frac{\pi}{2 m}}$. Hence, combining with \eqref{eq:xu10}, we obtain 
	\[
	\min _{\boldsymbol{A} \in \mathbb{R}^{m \times 2}} 
	\beta_{\boldsymbol{A}} > \frac{1}{\sqrt{1-\frac{1}{m \cdot \sin \frac{\pi}{2 m}}}}.
	\]
	
	(iii) Recall that $P_m$ is the regular $m$-gon. Theorem \ref{knownresults} (iii) shows that $r(P_m)=r(P)$ if and only if $m$ is odd, and $P_m$ is uniquely optimal if and only if $m$ is a prime. 
	According to Theorem \ref{main}, the matrix constructed from $P_m$ corresponds to the harmonic frame $\boldsymbol{E}_m$; in particular, $\boldsymbol{E}_m$ consists of the vectors $f^{-1}(\boldsymbol{e}_j)$ obtained from the edge vectors $\boldsymbol{e}_j$ of $P_m, j = 1,\ldots, m$.
	Combining Theorem \ref{main} and Theorem  \ref{knownresults}, we conclude  that $\boldsymbol{E}_m$ has the minimal condition number if and only if $m$ is odd. Moreover,  $\boldsymbol{E}_m$ is the unique matrix with minimal condition number if and only if $m$ is a prime.
\end{proof}

\subsection{Proof of Theorem \ref{main}}
In order to establish Theorem \ref{main}, we first present Lemma \ref{tight_gon} as a preliminary step. 
It has been proven in \cite{han2007frames} that $\{\boldsymbol{a}_1,\ldots, \boldsymbol{a}_m\}\subset {\mathbb R}^2$ forms a tight frame if and only if the diagram vectors $\{f(\boldsymbol{a}_1),\dotsc,f(\boldsymbol{a}_m)\}$ sum to zero, yielding a {\em closed} polygon in $\mathbb{R}^2$. Lemma \ref{tight_gon} further clarifies the relationship between tight frames in $\mathbb{R}^2$ consisting of nonzero vectors with pairwise distinct directions and strictly {\em convex} polygons. Moreover, it connects the condition number of the matrix associated with such a frame to the diameter-to-perimeter ratio of the corresponding polygon.



\begin{lemma}
	\label{tight_gon}
	Assume that 
	$\mathcal{A} = \{\boldsymbol{a}_1,\dotsc,\boldsymbol{a}_{m}\} \subset  \mathcal{T}$, where $\mathcal{T}$ is defined in \eqref{def-M-set}.
	Let $E=\{\boldsymbol{e}_1,\dotsc,\boldsymbol{e}_{m}\}$, where each $\boldsymbol{e}_j =f(\boldsymbol{a}_j) $ with $f$ defined in \eqref{def-F-func}.
	Then the following hold:
	\begin{enumerate}
		\item[\rm (i)] $\mathcal{A} $ is a tight frame consisting of nonzero vectors with pairwise distinct directions if and only if $E$ is the edge set of a strictly convex $m$-gon $ {P} \in \mathcal{P}_m$. 
		\item[\rm (ii)] Assume that $\mathcal{A} $ forms a tight frame. If the vectors in $\mathcal{A}$ are nonzero and have pairwise distinct directions, then 
		\begin{equation*}
			\beta_{\boldsymbol{A}} = \frac{1}{\sqrt{1-2\cdot r({P})}},		
		\end{equation*}
		where $\boldsymbol{A} = [\boldsymbol{a}_1,\dotsc,\boldsymbol{a}_{m}]^T$ and $P$ is the strictly convex $m$-gon defined in {\rm (i)}; otherwise, the condition number of $\boldsymbol{A}$ is not the minimum value attainable among all $m \times 2$ matrices.
	\end{enumerate}
\end{lemma}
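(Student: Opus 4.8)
The plan is to establish the four claims of Lemma~\ref{tight_gon} by systematically exploiting the bijection $f:\mathcal{T}\to\mathbb{R}^2$ defined in \eqref{def-F-func}, which sends a vector $\boldsymbol{a}=t(\cos\phi,\sin\phi)^T$ to $\boldsymbol{e}=t^2(\cos 2\phi,\sin 2\phi)^T$. The crucial feature of $f$ is that it doubles the angle, so two vectors $\boldsymbol{a}_i,\boldsymbol{a}_j\in\mathcal{T}$ are scalar multiples of each other (i.e.\ have the same or opposite direction, angle difference $0$ or $\pi$) precisely when their images $\boldsymbol{e}_i,\boldsymbol{e}_j$ have the same direction (angle difference $0$ or $2\pi$). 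First I would prove (i): since each $\boldsymbol{a}_j\in\mathcal{T}$ has $\phi_j\in(0,\pi]$, the image $\boldsymbol{e}_j=f(\boldsymbol{a}_j)$ is nonzero iff $\boldsymbol{a}_j\neq\boldsymbol{0}$, and reducibility of $\mathcal{A}$—meaning $\phi_i=\phi_j$ or $|\phi_i-\phi_j|=\pi$ for some pair, or some $\boldsymbol{a}_j=\boldsymbol{0}$—translates under angle-doubling exactly into two images sharing a direction (or one image vanishing). This is a direct angle bookkeeping argument once the range restriction $\phi\in(0,\pi]$ is used carefully.

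Next I would prove (ii), the tight-frame characterization. The plan is to compute the frame operator $\sum_{j=1}^m \boldsymbol{a}_j\boldsymbol{a}_j^T$ explicitly. Writing $\boldsymbol{a}_j=t_j(\cos\phi_j,\sin\phi_j)^T$, the entries of $\sum_j \boldsymbol{a}_j\boldsymbol{a}_j^T$ involve $\sum t_j^2\cos^2\phi_j$, $\sum t_j^2\sin^2\phi_j$, and $\sum t_j^2\cos\phi_j\sin\phi_j$. Using the double-angle identities $\cos^2\phi-\sin^2\phi=\cos 2\phi$ and $2\cos\phi\sin\phi=\sin 2\phi$, the condition that this operator be a multiple of $\boldsymbol{I}_2$ (equal diagonal entries and vanishing off-diagonal entry) becomes exactly $\sum_j t_j^2\cos 2\phi_j=0$ and $\sum_j t_j^2\sin 2\phi_j=0$, which is precisely $\sum_{j=1}^m \boldsymbol{e}_j=\boldsymbol{0}$ since $\boldsymbol{e}_j=t_j^2(\cos 2\phi_j,\sin 2\phi_j)^T$. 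Claim (iii) then follows by combining (i) and (ii) with Corollary~\ref{Minkowski2}: $\mathcal{A}$ is an irreducible tight frame iff the $\boldsymbol{e}_j$ are nonzero, pairwise distinct in direction, and sum to zero, which by Corollary~\ref{Minkowski2} is exactly the condition for $E$ to be the edge set of a strictly convex $m$-gon $P\in\mathcal{P}_m$.

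For (iv), the plan is to invoke Lemma~\ref{eqform} with $d=2$ and $\mathbb{H}=\mathbb{R}$, which gives $\beta_{\boldsymbol{A}}=(1-\tfrac{1}{C}\max\sum_j|\boldsymbol{x}^T\boldsymbol{a}_j\boldsymbol{a}_j^T\boldsymbol{y}|)^{-1/2}$ over orthonormal pairs $\boldsymbol{x},\boldsymbol{y}$. Parametrizing $\boldsymbol{x}=(\cos\theta,\sin\theta)^T$ and $\boldsymbol{y}=(-\sin\theta,\cos\theta)^T$, I would compute $\boldsymbol{x}^T\boldsymbol{a}_j\boldsymbol{a}_j^T\boldsymbol{y}$ explicitly: this equals $t_j^2(\cos\phi_j\cos\theta+\sin\phi_j\sin\theta)(-\cos\phi_j\sin\theta+\sin\phi_j\cos\theta)=\tfrac{1}{2}t_j^2\sin(2\phi_j-2\theta)$, so that $|\boldsymbol{x}^T\boldsymbol{a}_j\boldsymbol{a}_j^T\boldsymbol{y}|=\tfrac12|\langle\boldsymbol{e}_j,\boldsymbol{u}\rangle|$ after identifying $\boldsymbol{u}=(\sin 2\theta,-\cos 2\theta)^T\in\mathbb{S}^1$ (or an equivalent unit vector obtained by the angle-doubling). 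Summing and maximizing over $\theta$—equivalently over $\boldsymbol{u}\in\mathbb{S}^1$—then matches the numerator in formula \eqref{R} of Lemma~\ref{tight_gon-form}, while $C=\tfrac12\sum_j\|\boldsymbol{a}_j\|_2^2=\tfrac12\sum_j t_j^2=\tfrac12\sum_j\|\boldsymbol{e}_j\|_2$ matches (half of) the denominator. Assembling these gives $\tfrac1C\max_{\boldsymbol{u}}\sum_j|\boldsymbol{x}^T\boldsymbol{a}_j\boldsymbol{a}_j^T\boldsymbol{y}|=2r(P)$ by Lemma~\ref{tight_gon-form}, yielding $\beta_{\boldsymbol{A}}=(1-2r(P))^{-1/2}$.

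The main obstacle I anticipate is the reducible case in (iv): one must show that when $\mathcal{A}$ is reducible its condition number cannot be minimal. The plan here is to argue that if two vectors $\boldsymbol{a}_i,\boldsymbol{a}_j$ share a direction (so the corresponding edges $\boldsymbol{e}_i,\boldsymbol{e}_j$ point the same way), then one can strictly improve the frame. Concretely, I would use the degenerate polygon picture: the edge multiset $E$ with $\sum\boldsymbol{e}_j=\boldsymbol{0}$ still defines a (non-strictly) convex $m$-gon, and Lemma~\ref{tight_gon-form} shows $r(P)=r(\widehat{P})$ where $\widehat{P}$ is the strictly convex $k$-gon obtained by merging parallel edges, with $k<m$. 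Since $r(\widehat{P})\ge \min_{Q\in\mathcal{P}_m} r(Q)$ cannot be achieved with the sharper $m$-gon bound—more precisely, one shows $\min_{Q\in\mathcal{P}_k}r(Q)>\min_{Q\in\mathcal{P}_m}r(Q)$ for $k<m$, so a strictly convex $m$-gon attains a strictly smaller ratio and hence a strictly smaller condition number via the formula just derived—the reducible frame is non-optimal. Carefully justifying the strict monotonicity of the minimal diameter-to-perimeter ratio in the number of edges, and handling the edge case where merging still leaves enough edges, is the delicate point that will require the most attention.
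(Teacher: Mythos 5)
Your proposal matches the paper's proof essentially step for step: the angle-doubling bijection argument for (i), the explicit frame-operator computation with double-angle identities for (ii), the combination with Corollary \ref{Minkowski2} for (iii), and Lemma \ref{eqform} with the parametrization $\boldsymbol{x}=(\cos\theta,\sin\theta)^T$, $\boldsymbol{y}=(-\sin\theta,\cos\theta)^T$ for the irreducible case of (iv); indeed your choice $\boldsymbol{u}=(\sin 2\theta,-\cos 2\theta)^T$ makes the pointwise identity $|\boldsymbol{x}^T\boldsymbol{a}_j\boldsymbol{a}_j^T\boldsymbol{y}|=\tfrac12|\langle\boldsymbol{e}_j,\boldsymbol{u}\rangle|$ exactly correct, which is immaterial after maximizing over $\mathbb{S}^1$ but slightly cleaner than the paper's sign convention. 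The only divergence is the reducible case of (iv): you route through the strict monotonicity $\min_{\mathcal{P}_k}r>\min_{\mathcal{P}_m}r$ for $k<m$, which you flag as the delicate point, whereas the paper argues directly that the merged polygon $\widehat{P}$, regarded as a convex $m$-gon with a flat angle, fails to be strictly convex and hence by Reinhardt's theorem cannot be optimal, so $r(\widehat{P})>r(Q)$ for any optimal strictly convex $m$-gon $Q$. Your monotonicity claim is true and follows from exactly the same Reinhardt fact (an optimal $k$-gon, viewed as a degenerate $m$-gon, is not strictly convex), so the point you anticipated as requiring the most attention is dispatched in one line rather than by a separate argument.
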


\begin{proof}
	For clarity, we write $\boldsymbol{a}_j = t_j(\cos{\phi_j},\sin{\phi_j})^{T}$ for each $j\in[m]$, where $t_j  \geq 0$ and  
	$\phi_j \in (0,\pi]$.
	
	(i) Recall that $\{\boldsymbol{a}_1,\dots,\boldsymbol{a}_{m}\}$ is a tight frame if and only if $\sum_{j=1}^{m}\boldsymbol{e}_{j}=\boldsymbol{0}$ \cite[Lemma 4.1]{han2007frames}. 
	Note that $\boldsymbol{e}_j=f(\boldsymbol{a}_j)=\boldsymbol{0}$ if and only if $\boldsymbol{a}_j=\boldsymbol{0}$.
	Also note that for any $i \neq j$, the vectors $\boldsymbol{e}_i=f(\boldsymbol{a}_i)$ and $\boldsymbol{e}_j=f(\boldsymbol{a}_j)$  have the same direction if and only if $\boldsymbol{a}_i$ and $\boldsymbol{a}_j$ have the same direction. 
	Therefore, $\{\boldsymbol{a}_1,\dots,\boldsymbol{a}_{m}\}$ is a tight frame of nonzero vectors with pairwise distinct directions if and only if $\sum_{j=1}^{m}\boldsymbol{e}_{j}=\boldsymbol{0}$ and $\boldsymbol{e}_1,\dots,\boldsymbol{e}_{m}$ are nonzero vectors with pairwise distinct directions. By Corollary \ref{Minkowski2}, this is further equivalent to $E$ being the edge set of a strictly convex $m$-gon $ {P} \in \mathcal{P}_m$.
	
	(ii) 
	Since $\mathcal{A}$ is a tight frame, by Lemma \ref{eqform} we have 
	\begin{equation}
		\label{beta_equi2}
		\beta_{\boldsymbol{A}} =
		\left(1-\frac{1}{C} \max\limits_{\boldsymbol{x},\boldsymbol{y} \in \mathbb{S}^1 ,\langle \boldsymbol{x},\boldsymbol{y} \rangle=0}
		\sum_{j=1}^m \left|\boldsymbol{x}^T \boldsymbol{a}_j \boldsymbol{a}_j^T \boldsymbol{y}\right| 
		\right)^{-\frac{1}{2}},
	\end{equation}
	where $C = \frac{1}{2}\sum_{j=1}^{m}{\|\boldsymbol{a}_j\|_2^2}=\frac{1}{2}\sum_{j=1}^{m} \|\boldsymbol{e}_{j}\|_2 $.
	Let 
	\[
	\boldsymbol{x}=(\cos\theta,\sin \theta)^{T}, \quad \boldsymbol{y}=(-\sin\theta,\cos \theta)^{T}, \quad \text{and} \quad \boldsymbol{u}=(\sin2\theta,\cos 2\theta)^{T},
	\]
	where $\theta \in [0,2\pi)$.
	A direct calculation shows that for each $j\in[m]$ 
	\begin{equation*}
		\left|\boldsymbol{x}^T \boldsymbol{a}_j \boldsymbol{a}_j^T \boldsymbol{y}\right|
		=\frac{t_j^2}{2}|\sin(2\theta-2\phi_j)|
		=\frac{1}{2}|\langle \boldsymbol{e}_j, \boldsymbol{u} \rangle |.	
	\end{equation*}
	Therefore, we have 
	\begin{equation}\label{eq:xu6}
		\max\limits_{\boldsymbol{x},\boldsymbol{y} \in \mathbb{S}^1 ,\langle \boldsymbol{x},\boldsymbol{y} \rangle=0}
		\sum_{j=1}^m \left|\boldsymbol{x}^T \boldsymbol{a}_j \boldsymbol{a}_j^T \boldsymbol{y}\right|
		=\max\limits_{\theta \in [0,2\pi)}	
		\sum_{j=1}^m \frac{t_j^2}{2}|\sin(2\theta-2\phi_j)|
		=\frac{1}{2}\cdot \max\limits_{\boldsymbol{u} \in \mathbb{S}^1 }	
		\sum_{j=1}^m |\langle \boldsymbol{e}_j, \boldsymbol{u} \rangle |.
	\end{equation}
	Substituting \eqref{eq:xu6} and $C=\frac{1}{2}\sum_{j=1}^{m} \|\boldsymbol{e}_{j}\|_2  $ into \eqref{beta_equi2}, we obtain
	\begin{equation}
		\label{eq:xu7}
		\beta_{\boldsymbol{A}} =
		\left(1-
		\frac{\max\limits_{\boldsymbol{u} \in \mathbb{S}^1 }	
			\sum_{j=1}^m |\langle \boldsymbol{e}_j, \boldsymbol{u} \rangle |}
		{ \sum_{j=1}^{m} \|\boldsymbol{e}_{j}\|_2} 
		\right)^{-\frac{1}{2}}.
	\end{equation}
	If the vectors in $\mathcal{A}$ are nonzero and have pairwise distinct directions, it follows from (i)  that $E$ is the edge set of a strictly convex $m$-gon $ {P} \in \mathcal{P}_m$. 
	Combining \eqref{eq:xu7} with Lemma \ref{tight_gon-form}, we obtain $\beta_{\boldsymbol{A}}=\frac{1}{\sqrt{1-2\cdot r({P})}}$.
	
	We next prove that the condition number of $\boldsymbol{A}$ is not minimal whenever $\mathcal{A}$ contains zero vectors or vectors that have the same direction.
	Assume $\mathcal{A}$  contains such vectors. Then so does the set $E$. 
	By deleting zero vectors from $E$ and replacing all vectors in the same direction with their sum, we obtain a new set $\widehat{E}=\{\widehat{\boldsymbol{e}}_1,\dotsc,\widehat{\boldsymbol{e}}_{k}\}$ with $k < m$.
	The vectors in $\widehat{E}$ are nonzero and have pairwise distinct directions. 
	Note that $\sum_{j=1}^{k}\widehat{\boldsymbol{e}}_j = \sum_{j=1}^{m}\boldsymbol{e}_j = 0$. 
	By Corollary \ref{Minkowski2}, there exists a strictly convex $k$-gon $ \widehat{P}\in\mathcal{P}_k$ with the edge set $\widehat{E}$. 
	Note that $\sum_{j=1}^{m} \|\boldsymbol{e}_{j}\|_2=\sum_{j=1}^{k} \|\widehat{\boldsymbol{e}}_{j}\|_2$ and $\sum_{j=1}^m |\langle \boldsymbol{e}_j, \boldsymbol{u} \rangle |=\sum_{j=1}^k |\langle \widehat{\boldsymbol{e}}_j, \boldsymbol{u} \rangle |$ for any $\boldsymbol{u}\in\mathbb{S}^1$. 
	Combining with \eqref{eq:xu7} and Lemma \ref{tight_gon-form}, we have 
	\begin{equation}\label{eq:xu8}
		\beta_{\boldsymbol{A}} = \frac{1}{\sqrt{1-2\cdot r(\widehat{P})}}.
	\end{equation}
	Recall that any optimal convex m-gon is strictly convex \cite{reinhardt1922extremale}. Since $\widehat{P}\in\mathcal{P}_k$ is a convex $m$-gon but not strictly convex (it has a flat interior angle), it cannot be optimal. Hence, there exists a strictly convex $m$-gon $Q\in\mathcal{P}_m$ such that $r(\widehat{P})> r({Q})$. Denote the edge set of $Q$ by $\{\boldsymbol{q}_1,\ldots, \boldsymbol{q}_m\}$.
	Let $\boldsymbol{B}=[f^{-1}(\boldsymbol{q}_1),\ldots,f^{-1}(\boldsymbol{q}_m)]^T\in\mathbb{R}^{m\times 2}$, where $f(\cdot )$ is defined in \eqref{def-F-func}.
	According to (i), the set $\{f^{-1}(\boldsymbol{q}_1),\ldots,f^{-1}(\boldsymbol{q}_m)\}$ consists of nonzero vectors with pairwise distinct directions. Thus, we have 
	\begin{equation}\label{eq:xu9}
		\beta_{\boldsymbol{B}} = \frac{1}{\sqrt{1-2\cdot r({Q})}}.
	\end{equation}
	Recall that $r(\widehat{P})> r({Q})$. Combining with \eqref{eq:xu8} and \eqref{eq:xu9}, we have $\beta_{\boldsymbol{A}} > \beta_{\boldsymbol{B}}$.
	Therefore, the condition number of $\boldsymbol{A}$ is not minimal. This completes the proof.
\end{proof}


We are now prepared to present the proof of Theorem \ref{main}.

\begin{proof}[Proof of Theorem \ref{main}]
	(i)
	Assume that $\mathcal{A}$ is an optimal tight frame, i.e., the matrix $\boldsymbol{A}=[\boldsymbol{a}_1,\ldots, \boldsymbol{a}_m]^T$ has the minimal condition number and its rows form a tight frame. It follows from Lemma \ref{tight_gon} (ii) that $\mathcal{A}$ consists of nonzero vectors with pairwise distinct directions. Then Lemma \ref{tight_gon} (i) shows that the set $\{f(\boldsymbol{a}_1),\dotsc,f(\boldsymbol{a}_m)\}$ is the edge set of a strictly convex $m$-gon $ {P}\in\mathcal{P}_m$, and Lemma \ref{tight_gon} (ii) further yields that $\beta_{{\boldsymbol{A}}}=\frac{1}{\sqrt{1-2\cdot r(P)}}$. 
	We prove that $P$ is optimal by contradiction. 
	Suppose, to the contrary, that $P$ is not optimal. 
	Then there exists a strictly convex $m$-gon $ {Q}\in\mathcal{P}_m$  such that $r(Q)<r(P)$. 
	Denote the edge vector set of $Q$ by $\{\boldsymbol{q}_1,\ldots, \boldsymbol{q}_m\}$.
	Let $\boldsymbol{B}=[f^{-1}(\boldsymbol{q}_1),\ldots,f^{-1}(\boldsymbol{q}_m)]^T\in\mathbb{R}^{m\times 2}$.
	By Lemma \ref{tight_gon} (i) and (ii), we have $\beta_{\boldsymbol{B}} = \frac{1}{\sqrt{1-2\cdot r({Q})}}$.
	Since $r(Q)<r(P)$, we have $\beta_{\boldsymbol{B}}<\beta_{\boldsymbol{A}}$. This contradicts with the fact that $\boldsymbol{A}$ has the minimal condition number. Hence, $P$ is optimal.

	(ii) Assume that $\{f(\boldsymbol{a}_1),\dotsc,f(\boldsymbol{a}_m)\}$ is the edge set of an optimal $m$-gon $P\in\mathcal{P}_m$. 
	Then $P$ is strictly convex.
	According to Lemma \ref{tight_gon} (i) and (ii), the set $\{\boldsymbol{a}_1,\dotsc,\boldsymbol{a}_{m}\}$ is a tight frame, and we have $\beta_{{\boldsymbol{A}}}=\frac{1}{\sqrt{1-2\cdot r(P)}}$.
	We prove that $\boldsymbol{A}$ has the minimal condition number by contradiction. 
	Assume that $\beta_{\boldsymbol{B}}<\beta_{\boldsymbol{A}}$, where $\boldsymbol{B}\in\mathbb{R}^{m\times 2}$ is a matrix with the minimal condition number and its rows form a tight frame. 
	By Lemma \ref{tight_gon} (ii), the row vectors of $\boldsymbol{B}$ are nonzero and have
	pairwise distinct directions, and there is a strictly convex $m$-gon $ {Q}\in\mathcal{P}_m$ such that $ \beta_{{\boldsymbol{B}}}=\frac{1}{\sqrt{1-2\cdot r(Q)}}$.
	Since $\beta_{\boldsymbol{B}}<\beta_{\boldsymbol{A}}$, we have $r(Q)<r(P)$. 
	This contradicts with $P$ being an optimal convex $m$-gon. 
	Hence, $\boldsymbol{A}$ has the minimal condition number, and its rows form a tight frame.
\end{proof}

\section{Characterization of optimal frames and optimal polygons: Proof of Theorem \ref{struct}}

The aim of this section is to prove Theorem \ref{struct}, which characterizes all optimal frames in $\mathbb{R}^2$ with $m$ vectors when $m \geq 3$ has an odd factor. Our approach continues to draw on tools from the perimeter-maximizing isodiametric problem: we first analyze the structure of all optimal $m$-gons, and the proof of Theorem \ref{struct} then follows by combining this analysis with Theorem \ref{main}.

\subsection{Characterization of optimal polygons}
Let ${\mathcal{E}}_m$ denote the collection of all optimal convex $m$-gons in $\mathbb{R}^2$. The characterization of optimal polygons $P \in {\mathcal{E}}_m$ begins with the introduction of the following problem concerning the discrepancy of roots of unity. This problem has already appeared in \eqref{eq:partition-inThm}, and here we state it formally. 
\begin{problem}\label{pr2}
	Let $m \ge 3$ be an integer. Define $\zeta_m := e^{\ii\pi/m}$.
	Determine all vectors $\boldsymbol{\varepsilon} =(\varepsilon_1,\dotsc,\varepsilon_{m})^T\in\{\pm1\}^m$ such that
	\begin{equation}\label{eq:partition2}
		g(\boldsymbol{\varepsilon}):=\sum_{j=1}^{m} \varepsilon_j \zeta_m^j=0.	
	\end{equation}
	
	%
\end{problem}

In the following, we demonstrate a close connection between the above discrepancy problem and the optimal polygons.
The proof of Theorem \ref{th:polygon} is postponed to the end of this section.

\begin{theorem}\label{th:polygon}
	Assume that $m\geq 3$ be an integer with an odd factor. 
	Set $\boldsymbol{\mu}_j:=(\cos\frac{j\pi}{m}, \sin\frac{j\pi}{m})^T$ for each $j \in [m]$. Then we have
	\begin{enumerate}
		\item[\rm (i)] If $\boldsymbol{\varepsilon}=(\varepsilon_1,\dotsc,\varepsilon_{m})^T\in\{\pm1\}^m$ satisfies \eqref{eq:partition2}, then there exists an optimal polygon $P\in {\mathcal E}_m$ with the edge vectors $\{\varepsilon_1\boldsymbol{\mu}_1,\ldots,\varepsilon_m\boldsymbol{\mu}_m\}$. Such a polygon $P$ is unique up to translation.
		\item[\rm (ii)] Conversely, for any optimal polygon $P\in {\mathcal E}_m$, there exists a vector $\boldsymbol{\varepsilon}=(\varepsilon_1,\dotsc,\varepsilon_{m})^T\in\{\pm1\}^m$ satisfying \eqref{eq:partition2}, such that the edge vectors of $P$ are $\{\varepsilon_1\boldsymbol{\mu}_1,\ldots,\varepsilon_m\boldsymbol{\mu}_m\}$, after a proper scaling and rotation.
	\end{enumerate}
\end{theorem}


\begin{remark}
	For a vector $\boldsymbol{\varepsilon} =(\varepsilon_1,\dotsc,\varepsilon_{m})^T\in\{\pm1\}^m$, we define the shifting operator $\mathcal{S}$ and the reflection operator $\mathcal{R}$ as 
	\begin{equation}    \label{def:shift}
		\mathcal{S}(\boldsymbol{\varepsilon} )=(\varepsilon_{2},\ldots,\varepsilon_{m},-\varepsilon_{1})^T
		\quad\text{and}\quad
		\mathcal{R}(\boldsymbol{\varepsilon} )=
		(\varepsilon_{m},\varepsilon_{m-1},\ldots,\varepsilon_{1})^T.
	\end{equation}
	For each $i\in\{1,2\}$, let $\boldsymbol{\varepsilon}_i\in\{\pm1\}^m$ be a vector satisfying \eqref{eq:partition2}, and let $P_i\in {\mathcal E}_m$ be the corresponding optimal polygon with the edge set as described in Theorem \ref{th:polygon} {\rm (i)}. A simple calculation shows that
	\begin{enumerate}
		\item[{\rm (i)}] if $P_2$ is obtained from $P_1$ by a proper translation and scaling, then $\boldsymbol{\varepsilon}_{2}=\boldsymbol{\varepsilon}_{1}$;
		\item[{\rm (ii)}] 
		if $P_2$ is obtained from $P_1$ via a clockwise rotation of $\frac{k\pi}{m}$ for some integer $k\in [2m]$, then $\boldsymbol{\varepsilon}_{2}=\mathcal{S}^k(\boldsymbol{\varepsilon}_{1} )$;	
		
		\item[{\rm (iii)}]   if $P_2$ is obtained from $P_1$ by a reflection with respect to the $y$-axis, then $\boldsymbol{\varepsilon}_{2}=\mathcal{S}(\mathcal{R}(\boldsymbol{\varepsilon}_{1} ))$.
	\end{enumerate}
	Since we treat polygons as equivalent under scaling, translations, rotations, reflections, or any combination of these transformations, we define two vectors $\boldsymbol{\varepsilon}_1,\boldsymbol{\varepsilon}_2\in \{\pm1\}^m$ to be equivalent if one can be obtained from the other by applying a finite sequence of the shifting operator $\mathcal{S}$ and the reflection operator $\mathcal{R}$. A simple calculation shows that ${\mathcal R}^2={\mathcal S}^{2m}={\rm Id}$, where ${\rm Id}$ denotes the identity operator. Furthermore, we have ${\mathcal S}^k{\mathcal R}={\mathcal R}{\mathcal S}^{2m-k}$ for any integer $k\in [1, 2m]$. Hence, $\boldsymbol{\varepsilon}_1$ is equivalent to $\boldsymbol{\varepsilon}_2$ if and only if $ {\mathcal R}^{k_1}{\mathcal S}^{k_2} (\boldsymbol{\varepsilon}_1)=\boldsymbol{\varepsilon}_2$ for some $ k_1 \in \{0,1\}$ and $k_2\in \{0,1,\ldots,2m-1\}$. Under this convention, the number of distinct solutions to Problem \ref{pr1} equals that of Problem \ref{pr2}.
\end{remark}

\begin{remark}
	\label{algorithm-poly}
	Let $m \geq 3$ be an integer with an odd factor. Recall the brute-force algorithm mentioned in Remark \ref{algorithm}. In fact, by Theorem \ref{th:polygon}, it can also be used to generate all optimal polygons in ${\mathcal E}_m$, up to scaling, translation, rotation, and reflection.
	We again take $m=12$ as an example. 
	After checking all vectors $\boldsymbol{\varepsilon}\in\{\pm1\}^{12}$ for which \eqref{eq:partition2} holds, by Theorem \ref{th:polygon}, we identify all optimal polygons in ${\mathcal E}_{12}$ (up to scaling, translation, rotation and reflection), shown in Figure \ref{fig:dodecagons} {\rmfamily\slshape (a1)} and {\rmfamily\slshape (a2)}. 
	Note that this algorithm for constructing all optimal polygons in ${\mathcal E}_m$ differs from the one proposed in \cite{mossinghoff2011enumerating}.
	A detailed comparison of the efficiency of the two algorithms is left for future work.
\end{remark}

\begin{figure}[htbp]
	\centering
	
	\begin{minipage}[b]{0.48\textwidth}
		\centering
		\includegraphics[width=0.62\linewidth]{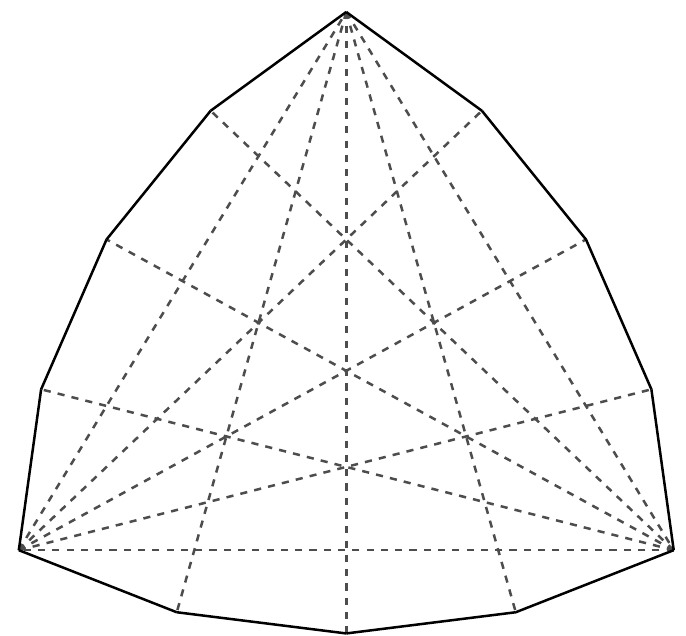}
		\par\smallskip
		{(a1)}
	\end{minipage}
	\hfill
	\begin{minipage}[b]{0.48\textwidth}
		\centering
		\includegraphics[width=0.62\linewidth]{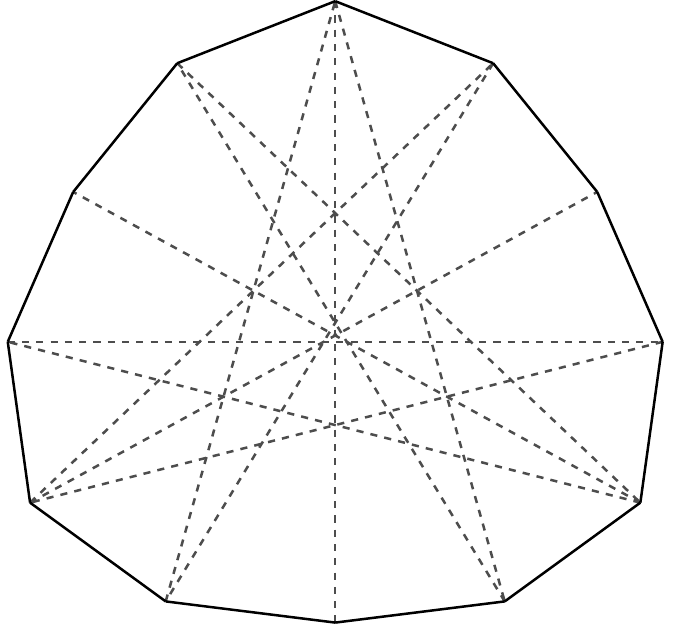}
		\par\smallskip
		{(a2)}
	\end{minipage}
	
	\par\bigskip
	
	\begin{minipage}[b]{0.48\textwidth}
		\centering
		\includegraphics[width=0.70\linewidth]{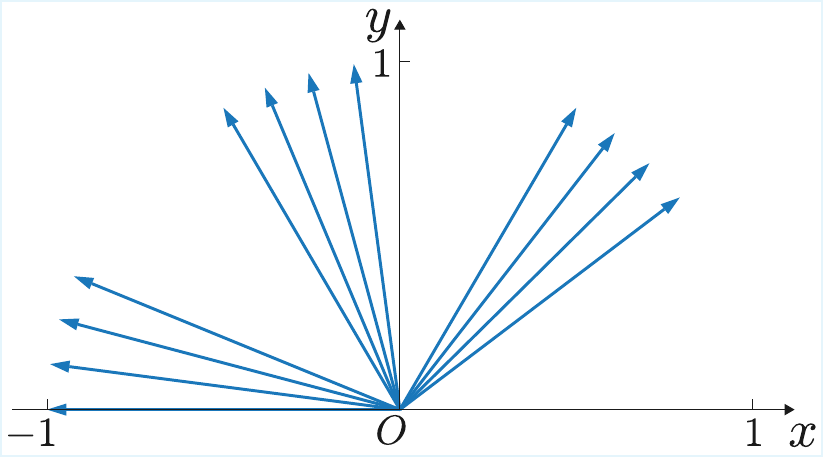}
		\par\smallskip
		{(b1)}
	\end{minipage}
	\hfill
	\begin{minipage}[b]{0.48\textwidth}
		\centering
		\includegraphics[width=0.70\linewidth]{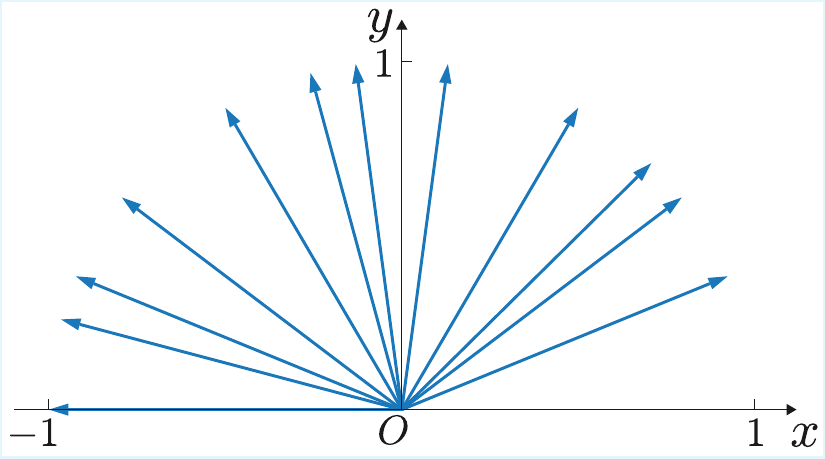}
		\par\smallskip
		{(b2)}
	\end{minipage}
	
	\caption{(a1), (a2): All optimal dodecagons. The dashed lines connect pairs of vertices at maximal distance; (b1), (b2): All optimal frames consisting of $12$ vectors in $\mathcal{T}$, corresponding to (a1) and (a2) respectively.}
	\label{fig:dodecagons}
\end{figure}

\subsection{Proof of Theorem \ref{struct}}
With the help of Theorem \ref{th:polygon}, we give a proof of Theorem \ref{struct}. 
%


\begin{proof}[Proof of Theorem \ref{struct}]
	
	(i) 
	We first prove that $\mathcal{A}$ forms a tight frame by contradiction.
	Suppose, for contradiction, that $\mathcal{A}$ is not a tight frame.
	Set $\boldsymbol{A} = [\boldsymbol{a}_1,\ldots, \boldsymbol{a}_m]^T$ with $\boldsymbol{A} \in \mathbb{R}^{m \times 2}$.
	By Theorem \ref{LAH}, we have
	$$
	(U_{\boldsymbol{A}})^2 = \|\boldsymbol{A}\|_2^2 =  \|\boldsymbol{A}^T\boldsymbol{A}\|_2 > \frac{1}{2}\cdot \operatorname{Tr}(\boldsymbol{A}^T\boldsymbol{A}) = \frac{1}{2}\cdot \operatorname{Tr}(\boldsymbol{A}\boldsymbol{A}^T) = \frac{1}{2}\sum_{j=1}^{m}\|\boldsymbol{a}_j\|_2^2.
	$$
	Combining with Lemma \ref{lowerBound}, we derive that
	\begin{equation}
		\label{sueq}
		\beta_{\boldsymbol{A}} = \frac{U_{\boldsymbol{A}}}{L_{\boldsymbol{A}}} > \frac{1}{\sqrt{1-\frac{1}{m \cdot \sin \frac{\pi}{2 m}}}}.
	\end{equation}
	Corollary \ref{mincond} {\rm (i)} shows that 
	$\min _{\boldsymbol{M} \in \mathbb{R}^{m \times 2}} 
	\beta_{\boldsymbol{M}} = \frac{1}{\sqrt{1-\frac{1}{m \cdot \sin \frac{\pi}{2 m}}}}$. Combining with \eqref{sueq}, we obtain that $\beta_{\boldsymbol{A}} > \min _{\boldsymbol{M} \in \mathbb{R}^{m \times 2}} 
	\beta_{\boldsymbol{M}}$, meaning that $\mathcal{A}$ is not optimal. This contradicts with $\mathcal{A}\in\mathcal{T}_m$. Therefore, $\mathcal{A}$ forms a tight frame.
	
	We next prove that all vectors in ${\mathcal{A}}$ have the same norm.
	By Theorem \ref{main}, there is an optimal strictly convex $m$-gon $P$ with the edge set $\{f(\boldsymbol{a}_1),\dotsc,f(\boldsymbol{a}_m)\}$.
	Since $m \geq 3$ is an integer with an odd factor, Theorem \ref{knownresults} (i) implies that $P$ is equilateral, i.e., $\|f(\boldsymbol{a}_1)\|_2=\|f(\boldsymbol{a}_2)\|_2=\cdots=\|f(\boldsymbol{a}_m)\|_2$. Consequently, according to the definition of $f(\cdot )$ in \eqref{def-F-func}, all vectors in $\mathcal{A}$ have equal norm.

	(ii) 
	If $\boldsymbol{\varepsilon}=(\varepsilon_1,\ldots,\varepsilon_m)^T\in\{\pm1\}^m$ satisfies \eqref{eq:partition-inThm}, by Theorem \ref{th:polygon}, there exists an optimal polygon $P\in {\mathcal E}_m$ with the edge vectors $\{\varepsilon_j(\cos(\frac{j}{m}\pi), \sin(\frac{j}{m}\pi))^T: j\in [m]\} $.
	Note that for $j \in [m]$ we have
	\begin{equation}
		\begin{aligned}\label{nvj}
			f(\nv_j)=(\cos\frac{j\pi}{m}, \sin\frac{j\pi}{m})^T 
			\quad\text{and}\quad
			f(\nv_j^\perp)
			=-(\cos\frac{j\pi}{m}, \sin\frac{j\pi}{m})^T.
		\end{aligned}
	\end{equation}
	Hence, the edge set of $P$ can be rewritten as $\{f(\nv_j):j\in I\} \cup \{f(\nv_j^{\perp}) : j\in [m]\setminus I\}$ where $I=\{j\in[m]:\varepsilon_j=1\}$, and such a polygon $P$ is unique up to translation. 
	Then it follows from Theorem \ref{main} that the set $\{\nv_j:j\in I\} \cup \{\nv_j^{\perp} : j\in [m]\setminus I\}$ is an optimal tight frame.

	(iii) 
	Let ${\mathcal{A}}= \{\boldsymbol{a}_1,\dotsc,\boldsymbol{a}_{m}\} \in  \mathcal{T}_m$. As established in the proof of (i), $\mathcal{A}$ is an optimal tight frame. Hence, by Theorem \ref{main}, there is an optimal strictly convex $m$-gon $P$ with the edge set $\{f(\boldsymbol{a}_1),\dotsc,f(\boldsymbol{a}_m)\}$. 
	According to Theorem \ref{th:polygon}, there exists a vector $\boldsymbol{\varepsilon}=(\varepsilon_1,\ldots,\varepsilon_m)^T\in\{\pm1\}^m$ satisfying \eqref{eq:partition-inThm}, such that after a proper scaling and rotation, the edge set $\{f(\boldsymbol{a}_1),\dotsc,f(\boldsymbol{a}_m)\}$ equals $\{(\cos\frac{j\pi}{m}, \sin\frac{j\pi}{m})^T: j\in I\} \cup \{-(\cos\frac{j\pi}{m}, \sin\frac{j\pi}{m})^T: j\in [m]\setminus I\}$, where $I=\{j\in[m]:\varepsilon_j=1\}$.
	By \eqref{nvj}, this edge set can be rewritten as $\{f(\nv_j):j\in I\} \cup \{f(\nv_j^{\perp}) : j\in [m]\setminus I\}$. 
	Note that the function $f(\cdot )$ is invertible on $\mathcal{T}$. 
	It follows that ${\mathcal A}= \{\boldsymbol{a}_1,\dotsc,\boldsymbol{a}_{m}\}=\{\nv_j:j\in I\} \cup \{\nv_j^{\perp} : j\in [m]\setminus I\}$, up to scaling and rotation. This completes the proof.
\end{proof}

\subsection{Proof of Theorem \ref{th:polygon}}
\label{optgoncharac}
In this subsection we prove Theorem \ref{th:polygon}. We begin by recalling the definition of Reuleaux polygons, which will be used in the sequel.
\begin{definition}
	\label{Reuleaux}
	A Reuleaux polygon is a constant-width curve constructed from circular arcs, all having the same radius.
\end{definition}

Here, we present several key properties of Reuleaux polygons given by \cite{mossinghoff2011enumerating,busemann1958hg, mossinghoff20061,hare2013sporadic,hare2019most}.

\begin{theorem}{\rm\cite{mossinghoff2011enumerating,hare2013sporadic,hare2019most}}
	\label{Reuleauxpro}
	Let $R$ be a Reuleaux polygon. The following holds:
	\begin{enumerate}
		\item[\rm (i)] The boundary of $R$ is composed of $r$ circular arcs, where $r\geq3$ is an odd integer.
		\item[\rm (ii)] Connecting all pairs of vertices at maximal distance from one another in $R$ generates a star polygon $S$, which is a closed planar curve composed of $r$ line segments, each intersecting all the others. The total sum of the interior angles at the vertices of $S$ equals $\pi$.
	\end{enumerate} 
\end{theorem}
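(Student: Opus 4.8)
The plan is to work with the support function $h = h_R$ of $R$, which has constant width $w := \operatorname{diam}(R)$ equal to the common radius of its arcs, so that $h(\theta) + h(\theta+\pi) = w$ for every normal direction $\theta$. Parametrising $\partial R$ by the outer-normal direction $\theta$, the boundary decomposes into $2r$ consecutive \emph{features}: $r$ \emph{vertex features} (over the normal cone of each corner $V_i$, where $h(\theta) = \langle V_i, (\cos\theta,\sin\theta)\rangle$) and $r$ \emph{arc features} (over each circular arc, where $h$ equals $\langle c,(\cos\theta,\sin\theta)\rangle + w$, the support function of a radius-$w$ circle centred at a vertex $c$). Since corners and arcs alternate along $\partial R$, these $2r$ features alternate in type as $\theta$ increases. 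The first key step is a \textbf{duality}: reading $h(\theta)+h(\theta+\pi)=w$ at a direction $\theta$ in the normal cone of $V_i$ gives $h(\theta+\pi) = w + \langle V_i,(\cos(\theta+\pi),\sin(\theta+\pi))\rangle$, i.e. the contact set antipodal to the corner $V_i$ is exactly a radius-$w$ arc centred at $V_i$. Hence each vertex is the centre of precisely one arc, the assignment ``vertex $\mapsto$ its opposite arc'' is a bijection, and the involution $\theta\mapsto\theta+\pi$ carries the feature partition onto itself while swapping the two types.

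For part (i) I would run a short parity argument on these $2r$ features. The half-turn $\theta\mapsto\theta+\pi$ is an orientation-preserving, fixed-point-free involution of the circle that maps the cyclic list of $2r$ features onto itself; being orientation-preserving, it acts as a cyclic shift $F_\ell\mapsto F_{\ell+s}$ for a fixed $s$. Because it exchanges vertex features and arc features, which alternate with period $2$, the shift $s$ must be odd; because it is an involution, $2s\equiv 0 \pmod{2r}$, so $s\equiv 0\pmod r$; and because it fixes no feature, $s\neq 0$. Thus $s=r$, and $r$ is odd, which proves (i).

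For part (ii) I would analyse the length-$w$ diameters. Tracking the double normal in direction $\theta$: while the contact point sweeps the arc centred at $V_i$, the antipodal contact point stays pinned at $V_i$, so the diameters form a fan pivoting at $V_i$ running from $V_i$ to one arc-endpoint and then to the other. Consequently each $V_i$ is joined by a diameter to exactly the two endpoints of its opposite arc, so the diameter graph on $\{V_1,\dots,V_r\}$ is $2$-regular; the classical fact that two chords of length $w$ in a set of diameter $w$ must cross then shows every pair of these diameters meets, and together with $2$-regularity and the oddness of $r$ from (i) identifies $S$ as the asserted star polygon. The angle sum is the clean finish: the tip angle of $S$ at $V_i$ is the angle subtended at $V_i$ by the endpoints of its opposite arc, i.e. exactly the central angle $\gamma_i$ of the arc centred at $V_i$. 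By the duality the corner's exterior angle (normal-cone width) $\delta_i$ equals the central angle of the arc centred at $V_i$, so reindexing gives $\sum_i\delta_i=\sum_i\gamma_i$; combined with the total turning relation $\sum_i\gamma_i+\sum_i\delta_i=2\pi$ around the convex boundary, this yields $\sum_i\gamma_i=\pi$, whence $\sum_i(\text{tip angle at }V_i)=\sum_i\gamma_i=\pi$, proving (ii).

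The step I expect to be the main obstacle is making the ``star polygon'' claim fully rigorous for possibly irregular Reuleaux polygons: upgrading $2$-regularity of the diameter graph to the statement that $S$ is a \emph{single} closed curve whose $r$ segments each cross all the others. Connectivity (one cycle rather than several) requires tracking the centre-assigning permutation together with the continuity of the double-normal sweep and the crossing property, using that $r$ is odd to rule out shorter sub-cycles. The duality relation itself, although transparent from the support function, also needs the careful justification that at a corner the opposite contact set is exactly the concentric radius-$w$ arc — the precise point at which the constant-width hypothesis is indispensable.
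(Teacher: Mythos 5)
The paper never proves this statement: Theorem~\ref{Reuleauxpro} is imported from the cited references (Reinhardt; Mossinghoff; Hare--Mossinghoff) and used as a black box, so there is no internal proof to compare against and your attempt must stand on its own. Most of it does. Part (i) is correct and essentially complete: the constant-width identity $h(\theta)+h(\theta+\pi)=w$ does show that the antipodal map carries the normal cone of each corner onto the set of normals of a radius-$w$ arc centred at that corner (and, dually, sends the normals of each arc back to the normal cone of its centre, which forces the arc centres to be corners and the common radius to equal $w$); hence the half-turn permutes the $2r$ alternating features by an orientation-preserving, fixed-point-free cyclic shift $s$, and the three constraints ($s$ odd because types are exchanged, $s\equiv 0\pmod r$ because it is an involution, $s\neq 0$) force $s=r$ odd. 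The angle-sum half of (ii) is also sound: the tip angle at $V_i$ equals the central angle $\gamma_{\sigma(i)}$ of its opposite arc, the duality gives $\sum_i\delta_i=\sum_i\gamma_i$, and $\sum_i\delta_i+\sum_i\gamma_i=2\pi$ yields $\sum_i\gamma_i=\pi$.

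The genuine gap is the one you flag yourself: you establish only that the diameter graph on $\{V_1,\dotsc,V_r\}$ is $2$-regular, i.e.\ a disjoint union of cycles, whereas the theorem's ``star polygon'' claim is precisely that it is a \emph{single} $r$-cycle. Moreover, the repair you sketch --- ``using that $r$ is odd to rule out shorter sub-cycles'' --- does not work as stated: oddness of the total $r$ is perfectly compatible with a decomposition into one odd cycle plus several even ones. The mod-$2$ Jordan-curve argument (an edge of one cycle crosses a vertex-disjoint cycle once per edge of that cycle, so going around forces at least one of any two disjoint cycles to be even) only excludes two odd sub-cycles; to finish along that route you would additionally need the separate classical fact that every cycle of a planar diameter graph is odd. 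A self-contained completion, in the spirit of your own ``double-normal sweep,'' is cleaner: parametrize diameters by their direction in $\mathbb{R}/\pi\mathbb{Z}$ (constant width gives a unique double normal per direction); every diameter has a corner endpoint, since a diameter ending at a smooth point of an arc must have the arc's centre as its other endpoint; therefore the cones $N(V_i)\bmod\pi$ cover $\mathbb{R}/\pi\mathbb{Z}$ and overlap only at the finitely many directions of vertex--vertex diameters, so they form a circular partition into $r$ arcs whose $r$ transition directions are exactly the edges $[V_i,V_j]$ (here one also uses that no corner lies in the relative interior of an arc, so the vertices at distance $w$ from $V_i$ are exactly the two endpoints of its opposite arc). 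Reading this partition cyclically exhibits all $r$ edges as one closed Hamiltonian cycle; combined with the pairwise-crossing lemma you quoted, this gives the star polygon, and the widths $\delta_i$ tiling a half-circle re-derives $\sum_i\delta_i=\pi$.
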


\begin{theorem}{\rm\cite{reinhardt1922extremale}}
	\label{opt_gon_eq}
	Suppose $m\geq3$ is an integer with an odd factor. A convex $m$-gon $P$ is optimal if and only if $P$ is equilateral and $P$ can be inscribed in a Reuleaux polygon $R$ with $\mathrm{diam}(R)=\mathrm{diam}(P)$  such that every vertex of $R$ is also a vertex of $P$.
\end{theorem}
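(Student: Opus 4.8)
The plan is to realize the optimal polygon as a maximizer of the perimeter over the compact family of convex $m$-gons of diameter at most a fixed $D>0$, and to read off its geometry from the first-order optimality (KKT) conditions before matching that geometry to a Reuleaux polygon. Existence of a maximizer follows by compactness, and strict convexity of any optimizer is available from \cite{reinhardt1922extremale}. Writing the vertices as $\boldsymbol v_1,\dotsc,\boldsymbol v_m$ (counterclockwise, indices mod $m$) and the unit edge vectors as $\boldsymbol u_j=(\boldsymbol v_{j+1}-\boldsymbol v_j)/\|\boldsymbol v_{j+1}-\boldsymbol v_j\|_2$, the gradient of $\operatorname{perim}$ in $\boldsymbol v_i$ is $\boldsymbol u_{i-1}-\boldsymbol u_i$, while the active constraints at $\boldsymbol v_i$ are the diameter chords $\boldsymbol v_i-\boldsymbol v_k$ with $\|\boldsymbol v_i-\boldsymbol v_k\|_2=D$. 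Stationarity reads $\boldsymbol u_{i-1}-\boldsymbol u_i=\sum_k \lambda_{ik}(\boldsymbol v_i-\boldsymbol v_k)/D$ with $\lambda_{ik}\ge 0$. The first conclusion I would draw is that every vertex is an endpoint of at least one diameter chord: otherwise $\boldsymbol u_{i-1}=\boldsymbol u_i$, forcing two consecutive edges to be collinear and contradicting strict convexity.

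Next I would determine the combinatorial structure of the diameter graph $G$ whose edges are the diameter chords. Using the classical lemma that any two chords of length $D=\operatorname{diam}(P)$ must cross or share an endpoint, together with the vertices being in convex position and the minimum-degree bound just obtained, I would show that the chords of $G$ form a single odd star-polygon cycle on an odd number $r\ge 3$ of the vertices; this is exactly the picture of Theorem \ref{Reuleauxpro}(ii). Taking these $r$ vertices as corners and joining consecutive corners by the circular arcs of radius $D$ centred at the opposite corners produces a constant-width curve $R$, i.e., a Reuleaux polygon with $\operatorname{diam}(R)=D$. The radial stationarity condition then forces every remaining vertex of $P$ to lie on one of these arcs, so $P$ is inscribed in $R$ and every corner of $R$ is a vertex of $P$. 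This settles the Reuleaux half of the statement.

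The crux is equilaterality. For a non-corner vertex $\boldsymbol v_j$ lying on the arc centred at a corner $\boldsymbol c$, write $\boldsymbol v_j=\boldsymbol c+D(\cos\theta_j,\sin\theta_j)^T$; then the incident edge directions are $\boldsymbol u_{j}=(-\sin\psi_j,\cos\psi_j)^T$ with $\psi_j=(\theta_j+\theta_{j+1})/2$, and a short computation gives $\boldsymbol u_{j-1}-\boldsymbol u_j=2\sin\!\big(\tfrac{\psi_j-\psi_{j-1}}{2}\big)(\cos\tfrac{\psi_{j-1}+\psi_j}{2},\sin\tfrac{\psi_{j-1}+\psi_j}{2})^T$. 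The stationarity requirement that this be parallel to the radial direction $(\cos\theta_j,\sin\theta_j)^T$ reduces to $\theta_{j+1}-\theta_j=\theta_j-\theta_{j-1}$, so the central angles subtended by consecutive edges on a single arc are all equal. Barbier's constant-width identity gives $\operatorname{perim}(R)=\pi D$, and since all arcs have radius $D$ the central angles of the $m$ edges sum to $\pi$. It remains to equate the angle sizes across different arcs, which I would extract from the KKT relation at each corner $\boldsymbol c_j$, where the two incident edges belong to two adjacent arcs and $\boldsymbol c_j$ has two diameter partners; the symmetry of these two partners forces the last angle of one arc to equal the first angle of the next, propagating equality around the whole cycle. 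Hence every edge subtends the same central angle $\pi/m$, every edge has length $2D\sin(\pi/2m)$, and $P$ is equilateral with $r(P)=\frac{1}{2m\sin(\pi/2m)}$; the odd-factor hypothesis is precisely what lets the $m$ equal angles be partitioned consistently among an odd number of arcs.

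For the converse I would run the computation backwards: if $P$ is equilateral and inscribed in a Reuleaux polygon $R$ of diameter $D$ with the corners of $R$ among its vertices, then equal edge lengths and equal arc radii $D$ force equal central angles, Barbier again gives total angle $\pi$, each angle is $\pi/m$, and $r(P)=\frac{1}{2m\sin(\pi/2m)}$, matching the optimal value of Theorem \ref{knownresults}(i); hence $P$ is optimal. I expect the \emph{main obstacle} to be the equilaterality step, specifically the cross-arc angle bookkeeping at the corners and the exclusion of degenerate stationary configurations, whereas the first-order analysis and the crossing lemma are comparatively routine.
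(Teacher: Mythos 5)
The paper does not actually prove Theorem \ref{opt_gon_eq}: it is quoted from Reinhardt's 1922 paper, so your attempt must stand on its own, and it contains a fatal gap. The gap is that your entire structural analysis rests on first-order (KKT) conditions, which are strictly weaker than optimality, and the conclusions you want (odd star-polygon cycle in the diameter graph, inscription in a Reuleaux polygon, equilaterality) simply do not follow from them. Concrete counterexample: the regular hexagon (note $m=6$ has an odd factor, so it is within the scope of the theorem). It is strictly convex; every vertex is an endpoint of exactly one diameter chord (the three long diagonals); each such chord lies along the interior angle bisector at both of its endpoints, so your stationarity equation $\boldsymbol u_{i-1}-\boldsymbol u_i=\lambda_{ik}(\boldsymbol v_i-\boldsymbol v_k)/D$ holds at every vertex with $\lambda_{ik}>0$; and the three chords pairwise cross, consistent with the crossing lemma. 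Yet its diameter graph is a perfect matching, not an odd cycle with pendants; it cannot be inscribed in any Reuleaux polygon of diameter $\operatorname{diam}(P)$ (its width is $\tfrac{\sqrt3}{2}\operatorname{diam}(P)<\operatorname{diam}(P)$, while a Reuleaux polygon has constant width equal to its diameter); and it is not optimal, by Theorem \ref{knownresults} (iii). The same holds for the regular $10$-gon and $12$-gon. So the step you call ``comparatively routine'' --- deriving the single odd star-polygon cycle from stationarity, convex position, and the crossing lemma --- is exactly where the argument breaks, and it cannot be repaired at first order: critical points of the constrained problem genuinely include these non-optimal configurations, so any proof must invoke optimality globally (Reinhardt's original argument, and the later treatments by Vincze and Datta, are perimeter-comparison arguments, not local ones).

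The equilaterality step suffers from the same disease, independently of the structural step. At a corner of the Reuleaux polygon the active diameter constraints point to \emph{all} polygon vertices on the opposite arc, so the KKT condition there is a cone (inequality) condition in a two-dimensional cone, not an equation; it therefore cannot force your ``last angle of one arc equals first angle of the next'' propagation. Indeed, inscribe in a Reuleaux triangle a hexagon with $3+2+1$ edges on the three arcs, with equal central angles within each arc: your within-arc stationarity computation is satisfied at every non-corner vertex, and the corner cone conditions hold, yet this polygon is not equilateral (and has strictly smaller perimeter than the optimal hexagon). Two smaller points: your converse direction, which computes $r(P)=\frac{1}{2m\sin(\pi/2m)}$ for an equilateral polygon inscribed in a Reuleaux polygon and appeals to Theorem \ref{knownresults} (i), is correct as a computation, but within a self-contained proof of Reinhardt's theorem it is circular, since the value in Theorem \ref{knownresults} (i) is historically obtained \emph{from} the characterization you are trying to prove; and the KKT machinery itself would additionally require a constraint-qualification argument for the feasible set of convex configurations, which you do not address.
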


\begin{remark}
	\label{giveAnopt}
	When $m \geq 3$ has an odd factor, one can easily construct a convex $m$-gon that satisfies the criteria of Theorem \ref{opt_gon_eq}, using the procedure described in \cite{reinhardt1922extremale, mossinghoff2011enumerating, bingane2021maximal}.
	
\end{remark}

\begin{remark}\label{proofofh}
	Let $P$ be an optimal convex $m$-gon, where $m\geq3$ is an integer with an odd factor.
	By Theorem \ref{opt_gon_eq} and Theorem \ref{Reuleauxpro} (i), 
	$P$ is inscribed in a Reuleaux polygon $R$ with $r$ boundary arcs, where $r$ is an odd integer with $3\leq r\leq m$.
	Label the vertices of $R$ as
	$\boldsymbol{w}_1,\dotsc,\boldsymbol{w}_{r}$ in counterclockwise order, and extend the indexing cyclically by setting $\boldsymbol{w}_{r+i}=\boldsymbol{w}_{i}$ for each $i\in[r]$.
	Define $h=\frac{r-1}{2}$.
	Then for each $i\in[r]$, the vertex $\boldsymbol{w}_i$ is the the center of the  arc $\arc{\boldsymbol{w}_{{i+h}}\boldsymbol{w}_{{i+h+1}}}$ on the boundary of $R$. 
	This can be seen as follows. 
	By Theorem \ref{Reuleauxpro} (ii), 
	we can start at $\boldsymbol{s}_1:=\boldsymbol{w}_{1}$ and construct a star polygon $S$ by sequentially
	connecting pairs of vertices in $R$ at maximal distance in the counterclockwise direction.
	Let $\boldsymbol{s}_i$ denote the $i$-th vertex of $R$ visited during this construction.
	According to \cite[Page 256]{reinhardt1922extremale}, the vertices $\boldsymbol{s}_1, \boldsymbol{s}_{3},\dotsc,\boldsymbol{s}_{r},\boldsymbol{s}_2,\boldsymbol{s}_{4},\dotsc,\boldsymbol{s}_{r-1}$ are ordered counterclockwise.
	Comparing this with the labeling $\boldsymbol{w}_{1},\ldots,\boldsymbol{w}_{r}$, 
	we have
	\begin{equation*}
		\boldsymbol{w}_{1}=\boldsymbol{s}_{1},
		\boldsymbol{w}_{2}=\boldsymbol{s}_{3},
		\ldots, 
		\boldsymbol{w}_{\frac{r+1}{2}}=\boldsymbol{s}_{r},
		\boldsymbol{w}_{\frac{r+1}{2}+1}=\boldsymbol{s}_{2}, \ldots,
		\boldsymbol{w}_{r}=\boldsymbol{s}_{r-1}.	
	\end{equation*}
	Set $\boldsymbol{s}_{0}:=\boldsymbol{s}_{r}$ and $\boldsymbol{s}_{r+1}:=\boldsymbol{s}_{1}$. Note that $\boldsymbol{s}_{i}$ is the center of the arc $\arc{\boldsymbol{s}_{i-1}\boldsymbol{s}_{i+1}}$ for each $i\in[r]$. 
	Translating this to the $\boldsymbol{w}_i$  notation, we conclude that for each $i\in[r]$, $\boldsymbol{w}_{{i}}$ is the center of the arc $\arc{\boldsymbol{w}_{{i+h}}\boldsymbol{w}_{{i+h+1}}}$, where $h=\frac{r-1}{2}$.
\end{remark}

\begin{figure}[htbp]
	\centering
	\includegraphics[width=0.48\linewidth]{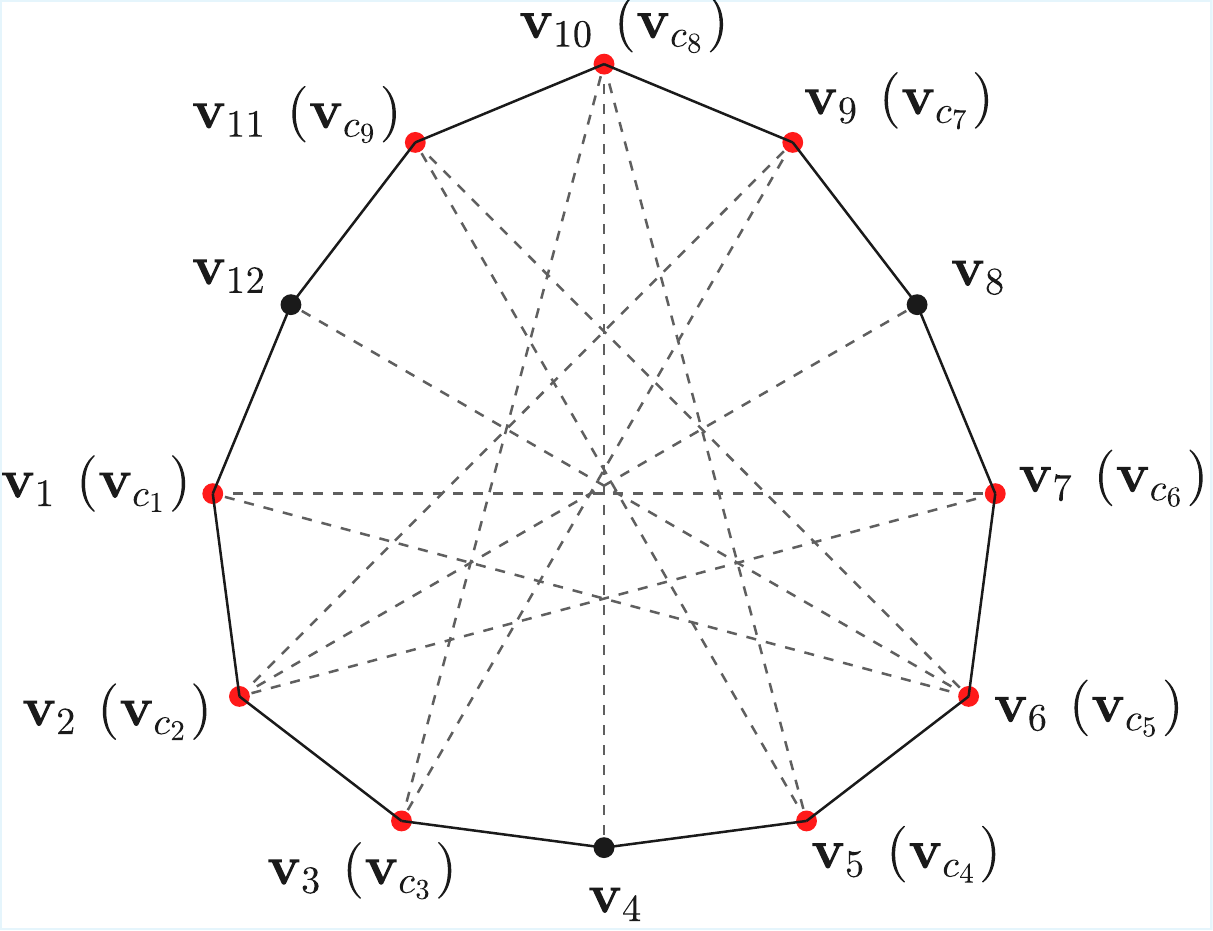}
	\caption{One of the optimal dodecagons as in  Figure \ref{fig:dodecagons} (a2), with vertices $\boldsymbol{v}_{1},\dotsc,\boldsymbol{v}_{12}$. The corresponding Reuleaux polygon consists of $9$ circular arcs with vertices $\boldsymbol{v}_{c_1},\dotsc,\boldsymbol{v}_{c_9}$ indicated in red dots. The dashed lines connect pairs of vertices at maximal distance.}
	\label{fig:dodecagon_illustration}
\end{figure}

The following lemma, motivated by \cite{mossinghoff2011enumerating}, establishes a characterization of the directional angles of the edge vectors for optimal convex polygons.

\begin{lemma}
	\label{angles}
	Assume that $m\geq 3$ is an integer with an odd factor. Let $P$ be a convex $m$-gon with edge vectors $\{\boldsymbol{e}_1,\dotsc\boldsymbol{e}_m\}$. For each $i \in [m]$, define $\alpha_i \in [0, 2\pi)$ as the counterclockwise angle from $\boldsymbol{e}_1$ to $\boldsymbol{e}_i$, and let $\psi_i \in (0, \pi]$ satisfy $\psi_i \equiv \alpha_i \pmod{\pi}$. Let ${\mathcal{E}}_m$ denote the collection of all optimal strictly convex $m$-gons in $\mathbb{R}^2$. 
	If $P \in {\mathcal{E}}_m$ 
	then
	\begin{equation}
		\label{PSI}
		\{\psi_1, \psi_2, \dotsc,\psi_m\} =
		\left\{\frac{j\pi}{m}\; : \; j\in[m] \right\}.
	\end{equation}

\end{lemma}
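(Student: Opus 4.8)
The plan is to exploit the structural characterization of optimal polygons given by Theorem \ref{opt_gon_eq}: an optimal $P \in \mathcal{E}_m$ is equilateral and inscribed in a Reuleaux polygon $R$ with $\operatorname{diam}(R) = \operatorname{diam}(P) =: D$, every vertex of $R$ being a vertex of $P$. By Theorem \ref{Reuleauxpro}(i) and Remark \ref{proofofh}, $R$ has an odd number $r$ (with $3 \le r \le m$) of boundary arcs $A_1,\dots,A_r$, where $A_i$ is centered at the vertex $\boldsymbol{w}_i$ and has radius $D$, and $h=\tfrac{r-1}{2}$. Since every vertex of $P$ lies on $\partial R$, the edges of $P$ are distributed among these arcs; each edge lying on $A_i$ is a chord of the radius-$D$ circle centered at $\boldsymbol{w}_i$, so its length determines the central angle it subtends at $\boldsymbol{w}_i$. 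Because $P$ is equilateral, all edges subtend one common central angle $\theta$ (as $\ell = 2D\sin(\theta/2)$ with $\ell, D$ fixed).

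First I would pin down $\theta$. Writing $n_i$ for the number of edges of $P$ on arc $A_i$, the total angular span of $A_i$ as seen from its center is $\sigma_i = n_i\theta$; but by the arc/center incidence of Remark \ref{proofofh} this span is exactly $\angle \boldsymbol{w}_{i+h}\boldsymbol{w}_i\boldsymbol{w}_{i+h+1}$, i.e.\ the interior angle of the star polygon $S$ at $\boldsymbol{w}_i$. By Theorem \ref{Reuleauxpro}(ii) these interior angles sum to $\pi$, so $\theta\sum_i n_i = \pi$; since $\sum_i n_i = m$, this forces $\theta = \pi/m$.

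Next I would track edge directions modulo $\pi$. Parametrizing $A_i$ by the angle $\gamma$ from its center $\boldsymbol{w}_i$, consecutive $P$-vertices on $A_i$ are spaced by $\theta$, and since the chord from angular position $\gamma$ to $\gamma+\theta$ has direction $\gamma + \theta/2 + \pi/2$, the edge directions along each arc form an arithmetic progression with common difference $\theta = \pi/m$. At a junction between $A_i$ and $A_{i+1}$ the direction jumps by $\theta \pm \sigma$ for the relevant star angle $\sigma$; since every $\sigma_j = n_j\theta$ is a multiple of $\pi/m$, every junction jump is again a multiple of $\pi/m$. Hence every edge direction differs from that of $\boldsymbol{e}_1$ by an integer multiple of $\pi/m$, so each $\psi_i \in \{j\pi/m : j\in[m]\}$, yielding the inclusion $\{\psi_1,\dots,\psi_m\} \subseteq \{j\pi/m : j\in[m]\}$. (A consistency check is that the within-arc increments and junction jumps sum to the total turning $2\pi$ of the convex polygon.)

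The main obstacle is upgrading this inclusion to equality, which amounts to showing the $\psi_i$ are pairwise distinct, i.e.\ that $P$ has no pair of anti-parallel edges. Here I would invoke the constant width of $R$. The outward edge-normals on $A_i$ lie strictly inside the open angular interval swept by $A_i$ at $\boldsymbol{w}_i$, and these arc-intervals (of total measure $\sum_i\sigma_i = \pi$) are separated by the corner normal-cones at the vertices $\boldsymbol{w}_j$ (of total measure $\pi$). By the antipodal duality of constant-width bodies, the point of $R$ with outward normal $\boldsymbol{u}$ on $A_i$ is paired at distance $D$ with the corner $\boldsymbol{w}_i$, so the normal cone at $\boldsymbol{w}_i$ is exactly the $\pi$-rotation of the normal range of $A_i$; thus the arc-intervals and the corner-cones are antipodal to one another and partition the circle. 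Since all edge-normals of $P$ lie in arc-intervals and never in a corner-cone, the antipode of any edge-normal lands in a corner-cone and is therefore not an edge-normal. Consequently no two edges of $P$ are anti-parallel, the $\psi_i$ are distinct, and the two $m$-element sets coincide, establishing \eqref{PSI}.
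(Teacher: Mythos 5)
Your proof is correct, and it divides into two halves: the first matches the paper's own argument, while the second takes a genuinely different route. Like the paper, you obtain the inclusion $\{\psi_1,\dotsc,\psi_m\}\subseteq\{j\pi/m : j\in[m]\}$ from Theorem \ref{opt_gon_eq}, Theorem \ref{Reuleauxpro} and Remark \ref{proofofh}: since $P$ is equilateral and inscribed in a Reuleaux polygon $R$ whose arcs all have radius $\operatorname{diam}(P)$, every edge subtends the same central angle $\theta$ on its arc; writing $n_i$ for the number of edges on the $i$-th arc, the star-angle sum $\pi$ together with $\sum_i n_i=m$ forces $\theta=\pi/m$, and every turning angle of $P$ is then a multiple of $\pi/m$ --- precisely the content of the paper's formulas for the turning angles $\theta_i$ at non-corner and corner vertices. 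The genuine divergence is in the crucial step of upgrading inclusion to equality in \eqref{PSI}, i.e., excluding anti-parallel edge pairs. The paper does this by explicit index bookkeeping: it derives closed-form expressions for the cumulative angles $\alpha_{c_i}$ and shows that $\alpha_j=\alpha_i+\pi$ would trap the index $i$ strictly between two consecutive integers, a contradiction. You instead exploit the normal fan of the constant-width body $R$: edge normals of $P$ lie strictly inside the arcs' normal intervals; the constant-width identity $h(R,\boldsymbol{u})+h(R,-\boldsymbol{u})=\operatorname{diam}(R)$ (in the notation of Definition \ref{diam}) shows that the normal cone at a corner $\boldsymbol{w}_i$ contains the antipode of the normal interval of the arc centered at $\boldsymbol{w}_i$, and a measure count (each family has total measure $\pi$) upgrades this containment to equality; hence the antipode of an edge normal lies in the open normal cone of some corner of $R$ and can never itself be an edge normal. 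Your route is more conceptual and avoids the paper's index chase, at the price of importing standard facts about constant-width bodies; the paper's route is longer but elementary and fully self-contained. Three small repairs to your write-up: (1) the junction jump is $\theta+\sigma$ rather than $\theta\pm\sigma$ (immaterial, since only divisibility by $\pi/m$ is used); (2) distinctness of the $\psi_i$ also requires that no two edges are parallel with the \emph{same} direction, which follows from strict convexity of $P\in\mathcal{E}_m$ and should be stated; (3) the assertion that the corner cones are \emph{exactly} the antipodes of the arc intervals needs the support-function-plus-measure argument just indicated, since the diameter pairing alone gives only one containment.
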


\begin{proof}
	Assume that $P \in {\mathcal{E}}_m$. It follows from Theorem \ref{opt_gon_eq} that $P$ is equilateral. We now prove that $P$ satisfies \eqref{PSI}. Assume without loss of generality that $\boldsymbol{e}_1,\dotsc,\boldsymbol{e}_{m}$ are arranged counterclockwise. Let $\boldsymbol{v}_1,\dotsc,\boldsymbol{v}_{m}$ be the vertices of $ P$ such that $\boldsymbol{e}_j = \boldsymbol{v}_{j+1} - \boldsymbol{v}_{j}$ for each $j \in [m-1]$ and $\boldsymbol{e}_m = \boldsymbol{v}_{1} - \boldsymbol{v}_{m}$. 
	For each $i\in[m]$, we denote $\boldsymbol{v}_{m+i}=\boldsymbol{v}_{i}$, and let $\theta_i$ be the counterclockwise angle from $\boldsymbol{e}_{i-1}$ to $\boldsymbol{e}_{i}$, where $\boldsymbol{e}_{0}:=\boldsymbol{e}_{m}$.
	Then we have $\alpha_1=0$ and $\alpha_{i}=\alpha_{i-1}+ \theta_{i}$ for each $i\in\{2,\ldots,m\}$.
	Moreover, since $P$ is strictly convex, we have $\theta_i\in(0,\pi)$, for $ i\in[m]$.

	We first calculate each $\theta_i$ by plane geometry.
	By Theorem \ref{opt_gon_eq}, $P$ is inscribed in a Reuleaux polygon $R$ with the same diameter, and every vertex of $R$ is also a vertex of $P$. 
	By Theorem \ref{Reuleauxpro} (i), $R$ consists of $r$ circular arcs, where $3 \leq r \leq m$ is an odd number.
	Denote these circular arcs by $\arc{\boldsymbol{v}_{c_1}\boldsymbol{v}_{c_2}},\arc{\boldsymbol{v}_{c_2}\boldsymbol{v}_{c_3}},\ldots,\arc{\boldsymbol{v}_{c_r}\boldsymbol{v}_{c_1}}$,
	where $1\leq c_1< c_2<\cdots<c_{r-1}<c_r\leq m$.
	Without loss of generality, we may assume that $c_1=1$, because the choice of the edge $\boldsymbol{e}_1$ can be arbitrary and does not affect the outcome of the lemma.
	For convenience, we denote 
	\begin{equation*}
		c_{r+i}:=c_{i}+m,\quad \text{ for } i\in[r].	
	\end{equation*}
	By Remark \ref{proofofh}, for each $i\in[r]$, $\boldsymbol{v}_{c_{i}}$ is the center of the arc $\arc{\boldsymbol{v}_{c_{i+h}}\boldsymbol{v}_{c_{i+h+1}}}$, where $h=\frac{r-1}{2}$ is a fixed integer. 
	See Figure \ref{fig:dodecagon_illustration} for a clear and intuitive illustration. 
	Since $P$ is equilateral and the sum of the interior angles at the vertices of $R$ equals $\pi$, the interior angle of the arc $\arc{\boldsymbol{v}_{c_{i+h}}\boldsymbol{v}_{c_{i+h+1}}}$ centered on $\boldsymbol{v}_{c_i}$ is $(c_{i+h+1}-c_{i+h})\cdot \frac{\pi}{m}$ for each $i\in[r]$.
	By plane geometry, if $i\notin \{c_1,\ldots,c_r\}$ then we have 
	\begin{equation}\label{meq1}
		\theta_i=\frac{\pi}{m}.
	\end{equation}
	Moreover, for any integer $i\in[r]$, we have
	\begin{equation}\label{meq2}
		\theta_{c_i}=\pi - (\pi-\frac{\pi}{m})+(c_{i+h+1}-c_{i+h})\cdot\frac{ \pi}{m}=(c_{i+h+1}-c_{i+h}+1)\cdot \frac{\pi}{m}.	
	\end{equation} 

	We next calculate $\alpha_1,\ldots,\alpha_m$. 
	Recall that  $\alpha_{i}=\alpha_{i-1}+ \theta_{i}$ for each $i\in\{2,\ldots,m\}$.
	Hence, using \eqref{meq1} we have
	\begin{equation}\label{meq5}
		\alpha_{L+c_i}
		=\alpha_{c_i}+ L\cdot \frac{\pi}{m},
		\quad \forall \; i\in[r],
		\ \forall \; 0\leq L\leq c_{i+1}-c_{i}-1.
	\end{equation}
	For any integer $2\leq i\leq r$, using \eqref{meq2} and \eqref{meq5} we have
	\begin{equation}\label{meq4}
		\begin{aligned}
			\alpha_{c_i}
			&=\alpha_{c_i-1}+\theta_{c_i}
			=\alpha_{c_{i-1}}+(c_{i}-c_{i-1}-1)\cdot \frac{\pi}{m}+\theta_{c_i}\\
			&=\alpha_{c_{i-1}}+(c_{i}-c_{i-1}+c_{i+h+1}-c_{i+h})\cdot \frac{\pi}{m}.
		\end{aligned}
	\end{equation}
	Since ${c_1}=1$ and $\alpha_1=0$, repeatedly using \eqref{meq4}, we obtain
	\begin{equation}\label{meq6}
		\begin{aligned}
			\alpha_{c_i}
			=\alpha_{c_{1}}+(c_{i}-c_{1}+c_{i+h+1}-c_{h+2})\cdot \frac{\pi}{m}
			=(c_{i}+c_{i+h+1}-c_{h+2}-1)\cdot \frac{\pi}{m}.
		\end{aligned}
	\end{equation}
	Then we can use \eqref{meq6} and \eqref{meq5} to obtain  $\alpha_l$, $l=1,\ldots,m$.
	
	Now we turn to prove \eqref{PSI}.
	Note that  
	\begin{equation*}
		\begin{aligned}
			\alpha_{c_{h+2}}
			&\overset{(a)}=(c_{h+2}+c_{2h+3}-c_{h+2}-1)\cdot \frac{\pi}{m}
			=\pi+(c_{2}-1)\cdot \frac{\pi}{m}>\pi,\\
			\alpha_{c_{h+2}-1}
			&\overset{(b)}=\alpha_{c_{h+2}}-(c_{2h+3}-c_{2h+2}+1)\cdot \frac{\pi}{m}=\alpha_{c_{h+2}}-c_{2}\cdot \frac{\pi}{m}=\pi- \frac{\pi}{m},
		\end{aligned}
	\end{equation*}
	where $(a)$ follows from \eqref{meq6} and $(b)$  follows from \eqref{meq4}.
	Hence, by the definition of each $\alpha_l$, we have
	\begin{equation}\label{meq9}
		0=\alpha_1<\alpha_2<\cdots<\alpha_{c_{h+2}-1}<\pi <\alpha_{c_{h+2}}<\cdots<\alpha_{m}<2\pi.
	\end{equation}
	Since each $\alpha_l$ is a multiple of $\frac{\pi}{m}$, and by the definition $\psi_i \equiv \alpha_i \pmod{\pi}$, to prove \eqref{PSI}, it is sufficient to show that there are no integers $i,j$ such that $1\leq i<c_{h+2}\leq j\leq m$ and $\alpha_{j}=\alpha_{i}+\pi$. 
	We prove by contradiction.
	Assume that  such integers $i$ and $j$ exist. 
	Then there is an integer $t$ with $h+2\leq t\leq r$ such that $c_{t}\leq j	<c_{t+1}$.
	Using \eqref{meq5}, \eqref{meq6} and $c_{t+h+1}=c_{t-h+r}=c_{t-h}+m$, we obtain
	\begin{equation}\label{meq8}
		\alpha_j=\alpha_{c_t}+ (j-c_t)\cdot \frac{\pi}{m}
		=(c_{t+h+1}+j-c_{h+2}-1)\cdot \frac{\pi}{m}
		=\pi+(c_{t-h}+j-c_{h+2}-1)\cdot \frac{\pi}{m}.
	\end{equation}
	However, note that  
	\begin{equation*}
		\begin{aligned}
			\alpha_{c_{t-h}}&\overset{(a)}=(c_{t-h}+c_{t+1}-c_{h+2}-1)\cdot \frac{\pi}{m}\\
			\alpha_{c_{t-h}-1}
			&\overset{(b)}=\alpha_{c_{t-h}}-(c_{t+1}-c_{t}+1)\cdot \frac{\pi}{m}
			=(c_{t-h}+c_{t}-c_{h+2}-2)\cdot \frac{\pi}{m},
		\end{aligned}
	\end{equation*}
	where $(a)$ follows from \eqref{meq6} and $(b)$  follows from \eqref{meq4}.
	Combining with \eqref{meq8} and $c_{t}\leq j	<c_{t+1}$, we have
	\begin{equation*}
		\alpha_{c_{t-h}-1}< \alpha_i=	\alpha_j-\pi < \alpha_{c_{t-h}}.	
	\end{equation*}
	By \eqref{meq9} we have $c_{t-h}-1<i<c_{t-h}$.
	Since no integer lies between two consecutive integers, 
	such integer $i$ does not exist, and we have a contradiction.
	This completes the proof of \eqref{PSI}. 
\end{proof}

Now we can present a proof of Theorem \ref{th:polygon}.

\begin{proof}[Proof of Theorem \ref{th:polygon}]
	(i) 
	%
	We first prove that there exists a convex $m$-gon $P$ with the edge set $\{\varepsilon_1\boldsymbol{\mu}_1,\dotsc,\varepsilon_m \boldsymbol{\mu}_m\}$. 
	Since $\boldsymbol{\varepsilon}=(\varepsilon_1,\ldots,\varepsilon_m)^T$ satisfies \eqref{eq:partition2}, we have
	\[
	0
	=\sum_{j=1}^{m} \varepsilon_j \zeta_m^j 
	=\sum_{j=1}^m\varepsilon_j\cos\frac{j\pi}{m} 
	+\ii \sum_{j=1}^m\varepsilon_j\sin\frac{j\pi}{m}, 
	\]
	where $\zeta_m = e^{\ii\pi/m}$. It follows that 
	\begin{equation}\label{eq:sum0condition}
		\sum_{j=1}^{m}\varepsilon_j\boldsymbol{\mu}_{j}
		=   (\sum_{j=1}^m\varepsilon_j\cos\frac{j\pi}{m}, \sum_{j=1}^m\varepsilon_j\sin\frac{j\pi}{m})^T
		= \boldsymbol{0}.
	\end{equation}
	Combining with the fact that $\varepsilon_1\boldsymbol{\mu}_1,\ldots,\varepsilon_m\boldsymbol{\mu}_m$ have pairwise distinct directions, we can utilize Corollary \ref{Minkowski2} to obtain a strictly convex $m$-gon $P$ with the edge set $\{\varepsilon_1\boldsymbol{\mu}_1,\dotsc,\varepsilon_m\boldsymbol{\mu}_m\}$, and $P$ is unique up to translation.
	
	We next prove that $P\in {\mathcal E}_m$. 
	For each $i \in [m]$, define $\alpha_i \in [0, 2\pi)$ as the counterclockwise angle from $\varepsilon_1\boldsymbol{\mu}_1$ to $\varepsilon_i\boldsymbol{\mu}_i$, and let $\psi_i \in (0, \pi]$ satisfy $\psi_i \equiv \alpha_i \pmod{\pi}$. 
	Since each $\boldsymbol{\mu}_j=(\cos\frac{j\pi}{m}, \sin\frac{j\pi}{m})^T$, we have 
	\begin{equation}\label{use_PSI}
		\{\psi_1, \psi_2, \dotsc,\psi_m\} =
		\left\{\frac{j\pi}{m}\; : \; j\in[m]\right\}.
	\end{equation}
	Let $Q \in {\mathcal{E}}_m$ be an optimal $m$-polygon with the edge set $\{\boldsymbol{q}_1,\dotsc,\boldsymbol{q}_m\}$ such that $\boldsymbol{q}_1=\varepsilon_1\boldsymbol{\mu}_1$.
	By Theorem \ref{opt_gon_eq} we see that $Q$ is equilateral, so we have $\|\boldsymbol{q}_j\|_2=1$ for each $j\in[m]$.
	%
	For each $i \in [m]$, define $\beta_i \in [0, 2\pi)$ as the counterclockwise angle from $\boldsymbol{q}_1$ to $\boldsymbol{q}_i$, and let $\phi_i \in (0, \pi]$ satisfy $\phi_i \equiv \beta_i \pmod{\pi}$. By Lemma \ref{angles} we have 
	\begin{equation*}
		\{\phi_1, \phi_2, \dotsc,\phi_m\} =  \left\{\frac{j\pi}{m}\; : \; j\in[m] \right\}.
	\end{equation*}
	Therefore, there exists a permutation $\{s_1,s_2,\ldots,s_m\}$ of $[m]$ such that
	$\psi_i=\phi_{s_i}$ for each $i\in[m]$.
	This means that for each $i\in[m]$ we have
	$\alpha_i \equiv \beta_{s_i} \pmod{\pi} $.
	Recall that $\boldsymbol{q}_1 =  \varepsilon_1\boldsymbol{\mu}_1=\varepsilon_1(\cos\frac{\pi}{m}, \sin\frac{\pi}{m})^T$.
	Let $I:=\{i\in[m]\mid  \alpha_i = \beta_{s_i}\}$.
	If $i\in I$, then 
	\begin{equation*}
		\boldsymbol{q}_{s_i}
		=\varepsilon_1(\cos(\frac{\pi}{m}+\beta_{s_i}), \sin(\frac{\pi}{m}+\beta_{s_i}))^T
		=\varepsilon_1(\cos(\frac{\pi}{m}+\alpha_{i}), \sin(\frac{\pi}{m}+\alpha_{i}))^T
		=\varepsilon_i  \boldsymbol{\mu}_i.	
	\end{equation*}
	Otherwise, if $i\in[m]\setminus I$, then $\alpha_i = \beta_{s_i}- \pi$ or $\alpha_i = \beta_{s_i}+ \pi$. So we have
	\begin{equation*}
		\boldsymbol{q}_{s_i}
		=\varepsilon_1(\cos(\frac{\pi}{m}+\beta_{s_i}), \sin(\frac{\pi}{m}+\beta_{s_i}))^T
		=-\varepsilon_1(\cos(\frac{\pi}{m}+\alpha_{i}), \sin(\frac{\pi}{m}+\alpha_{i}))^T
		=-\varepsilon_i  \boldsymbol{\mu}_i.	
	\end{equation*}
	Hence,  
	we have $\{\boldsymbol{q}_1,\dotsc,\boldsymbol{q}_m\}=\{\varepsilon_i\boldsymbol{\mu}_i\}_{i \in I} \cup \{-\varepsilon_i\boldsymbol{\mu}_i\}_{i \in [m]\setminus I}$.
	%
	According to \eqref{R} in Lemma \ref{tight_gon-form}, we have 
	\begin{equation*}
		r( {P}) 
		= \frac{\max_{\boldsymbol{u} \in \mathbb{S}^1}\,\sum_{i=1}^{m} {|\langle \boldsymbol{\mu}_{i},\boldsymbol{u} \rangle|}}{2\sum_{i=1}^{m}\|\boldsymbol{\mu}_i\|_2} 
		= \frac{\max_{\boldsymbol{u} \in \mathbb{S}^1}\,\sum_{i=1}^{m} {|\langle \boldsymbol{q}_{i},\boldsymbol{u} \rangle|}}{2\sum_{i=1}^{m}\|\boldsymbol{q}_i\|_2} 
		= r(Q).
	\end{equation*}
	Since $r(Q)$ is minimal, we see that ${P}$ is also optimal, i.e., $P\in {\mathcal E}_m$. This completes the proof.

	
	(ii) Let $P\in {\mathcal E}_m$ be an optimal polygon with edge vectors $\{ {\boldsymbol{e}}_1,\dotsc, {\boldsymbol{e}}_m\}$. 
	Without loss of generality, we may assume that  $ {\boldsymbol{e}}_1 = (1,0)^T$ after a proper scaling and rotation. 
	For each $i \in [m]$, define $\alpha_i \in [0, 2\pi)$ as the counterclockwise angle from $ {\boldsymbol{e}}_1$ to $ {\boldsymbol{e}}_i$, and let $\psi_i \in (0, \pi]$ satisfy $\psi_i \equiv \alpha_i \pmod{\pi}$. 
	By Lemma \ref{angles}, we see that $P$ is equilateral and 
	\begin{equation}\label{use_PSI2}
		\{\psi_1, \psi_2, \dotsc,\psi_m\} =
		\left\{\frac{j\pi}{m}\; : \; j\in[m] \right\}.
	\end{equation}
	Hence, there exists a permutation $\{s_1,\ldots,s_{m}\}$ of $[m]$ such that
	$\psi_{s_j} = \frac{j\pi}{m}$ for each $j\in[m]$. 
	Let $I := \{j \in [m]:\alpha_{s_j} = \psi_{s_j} \}$, and let $\boldsymbol{\varepsilon} =(\varepsilon_1,\dotsc,\varepsilon_{m})^T$, where $\varepsilon_j=1$ if $j\in I$ and $\varepsilon_j=-1$ if $j\notin I$. 
	Recall that $ {\boldsymbol{e}}_1 = (1,0)^T$ and $\alpha_i$ is the counterclockwise angle from $ {\boldsymbol{e}}_1$ to $ {\boldsymbol{e}}_i$. Then we have 
	\begin{equation*}
		\begin{aligned}
			{\boldsymbol{e}}_{s_j} 
			= (\cos\alpha_{s_j}, \sin\alpha_{s_j})^T 
			&= (\cos\psi_{s_j}, \sin\psi_{s_j})^T 
			= (\cos\frac{j\pi}{m}, \sin\frac{j\pi}{m})^T=\boldsymbol{\mu}_j \quad \text{for}\;j\in I,\\
			{\boldsymbol{e}}_{s_j} 
			= (\cos\alpha_{s_j}, \sin\alpha_{s_j})^T  
			&= (\cos(\pi+\psi_{s_j}), \sin(\pi+\psi_{s_j}))^T \\
			&= -(\cos\frac{j\pi}{m}, \sin\frac{j\pi}{m})^T=-\boldsymbol{\mu}_j \quad \text{for}\;j\in [m] \setminus  I.
		\end{aligned}
	\end{equation*}
	Therefore, the edge vectors of $P$ are $\{\varepsilon_1\boldsymbol{\mu}_1,\ldots,\varepsilon_m\boldsymbol{\mu}_m\}$.
	
	It suffices to prove that $\boldsymbol{\varepsilon}$ satisfies \eqref{eq:partition2}. By Corollary \ref{Minkowski2}, we have
	\begin{equation*}
		\boldsymbol{0}=\sum_{i=1}^{m}\varepsilon_i\boldsymbol{\mu}_{i}
		=   (\sum_{j=1}^m\varepsilon_j\cos\frac{j\pi}{m}, \sum_{j=1}^m\varepsilon_j\sin\frac{j\pi}{m})^T,
	\end{equation*}
	which implies that 
	\begin{equation*}
		\sum_{j=1}^{m} \varepsilon_j \zeta_m^j 
		=\sum_{j=1}^m\varepsilon_j\cos\frac{j\pi}{m} 
		+\ii \sum_{j=1}^m\varepsilon_j\sin\frac{j\pi}{m}=0.
	\end{equation*}
	This completes the proof.
\end{proof}

\section{Discussion and Future Work}
In this paper, we investigated the optimal frame for phase retrieval. Specifically, in the context of ${\mathbb R}^2$, we established a connection between the optimal frame and the perimeter-maximizing polygon problem, which is a classical research problem in discrete geometry. Leveraging this connection, we were able to characterize all optimal frames with $m$ vectors, provided $m$ possesses an odd factor.
Future research will naturally explore the extension of these results to higher-dimensional and complex domains. 
We anticipate that such investigations will reveal deeper connections between phase retrieval and the fundamental properties of various polytopes, potentially enriching our understanding of both fields.

We conclude this paper by highlighting an intriguing connection between the perimeter-maximizing polygon problem and a fundamental question in discrepancy theory. 
Assume that $E=\{\boldsymbol{e}_1,\dotsc,\boldsymbol{e}_{m}\}$ is the edge set of a convex $m$-gon $ {P} \subset \mathbb{R}^2$ with fixed perimeter $\mathrm{perim}(P)=m$.
Lemma \ref{tight_gon-form} shows that
\begin{equation}
	r( {P}) =\frac{1}{2m} \cdot \underset{\boldsymbol{u} \in \mathbb{S}^1}{\max}\,\sum_{j=1}^{m} {|\langle \boldsymbol{e}_{j},\boldsymbol{u} \rangle|}.
\end{equation}
Hence, solving the perimeter-maximizing polygon problem is equivalent to solving the following optimization problem:
\begin{equation}\label{eq:xu222025}
	\begin{aligned}
		\min_{\substack{\{\boldsymbol{e}_1,\dotsc,\boldsymbol{e}_{m}\}\subset  \mathbb{R}^{2}}}
		\underset{\boldsymbol{u} \in \mathbb{S}^1}{\max}&\,\sum_{j=1}^{m} {|\langle \boldsymbol{e}_{j},\boldsymbol{u} \rangle|}\\
		{\rm s.t.}&\,\,
		\sum_{j=1}^m\boldsymbol{e}_j=\boldsymbol{0}, \,\, \sum_{j=1}^m\|\boldsymbol{e}_j\|_2=m.
	\end{aligned}
\end{equation}
As shown in \cite[Proposition 3]{AS20}, for any $\{\boldsymbol{e}_1,\dotsc,\boldsymbol{e}_{m}\}\subset  \mathbb{S}^{1}$, we have 
\begin{equation*}
	\underset{\boldsymbol{u} \in \mathbb{S}^1}{\max}\,\sum_{j=1}^{m} {|\langle \boldsymbol{e}_{j},\boldsymbol{u} \rangle|}
	=
	\max_{\boldsymbol{\varepsilon}\in \{\pm 1\}^m}\bigg\| \sum_{j=1}^{m}\varepsilon_j\cdot \boldsymbol{e}_{j}\bigg\|_2.
\end{equation*}
Hence, if we assume that each $\|\boldsymbol{e}_j\|=1$, then \eqref{eq:xu222025} can be rewritten as
\begin{equation}
	\label{eq:xu23}
	\begin{aligned}
		\min_{\substack{\{\boldsymbol{e}_1,\dotsc,\boldsymbol{e}_{m}\}\subset  \mathbb{S}^{1}}}
		\max_{\boldsymbol{\varepsilon}\in \{\pm 1\}^m}&\bigg\| \sum_{j=1}^{m}\varepsilon_j\cdot \boldsymbol{e}_{j}\bigg\|_2\\
		{\rm s.t.}&\,\,
		\sum_{j=1}^m\boldsymbol{e}_j=\boldsymbol{0} .
	\end{aligned}
\end{equation}
This means that the problem of minimizing $r(P)$ among all equilateral convex $m$-gons can be reformulated as the  discrepancy problem in \eqref{eq:xu23}.
Understanding how to effectively solve or approximate the perimeter-maximizing polygon problem using discrepancy-theoretic methods presents an exciting direction for future investigation.

\end{document}